\DeclareMathAlphabet{\mathcal}{OMS}{cmsy}{m}{n}
\newtheorem{theorem}{Theorem}[section]
\newtheorem{lemma}[theorem]{Lemma}
\newtheorem{definition}[theorem]{Definition}
\newtheorem{proposition}[theorem]{Proposition}
\newcommand{\defeq}{\stackrel{\textup{def}}{=}}
\newcommand\pr{\mathop{\mathbb{P}}}
\newcommand\av{\mathop{\mathbb{E}}}
\newcommand{\nfrac}[2]{\nicefrac{#1}{#2}}
\def\abs#1{\left| #1 \right|}
\renewcommand{\norm}[1]{\ensuremath{\left\lVert #1 \right\rVert}}
\newcommand{\floor}[1]{\left\lfloor\, {#1}\,\right\rfloor}
\newcommand{\ceil}[1]{\left\lceil\, {#1}\,\right\rceil}
\newcommand\bz{\mathbb Z}
\newcommand\nat{\mathbb N}
\newcommand\rea{\mathbb R}
\newcommand\poly{\mathrm{poly}}
\newcommand\calI{\mathcal{I}}
\newcommand\calK{\mathcal{K}}
\renewcommand{\epsilon}{\varepsilon}
\newcommand{\eps}{\varepsilon}
\renewcommand{\tilde}{\widetilde}
\newcommand{\Span}{{\ensuremath{\mathsf{Span}}\xspace}}
\renewcommand{\Pr}{\pr}
\renewcommand{\hat}{\widehat}
\renewcommand{\vec}[1]{{\bf {#1}}}
\title{\bf Approximation Theory and  the Design of 
   Fast Algorithms}
\date{}
\author{Sushant Sachdeva\thanks{Research Fellow, Simons Institute for
    the Theory of Computing. UC Berkeley, USA. Part of this work was
    done when this author was a graduate student at the Department of
    Computer Science, Princeton University. Email: {\tt
      sachdeva@eecs.berkeley.edu}} \and Nisheeth
  K. Vishnoi\thanks{Microsoft Research, Bangalore, India. Email: {\tt
      nisheeth.vishnoi@gmail.com}}}
\begin{document}

\maketitle

\vspace{-5mm}

\begin{abstract}
We survey  key techniques and results from approximation theory in the context of uniform approximations to real functions such as $e^{-x},  \nfrac{1}{x}$,
and $x^k.$ We then present a selection of results demonstrating how  such approximations can be used to speed up primitives crucial for the design of fast algorithms for problems such as simulating random walks, graph partitioning, solving linear system of equations, computing eigenvalues and combinatorial approaches to solve semi-definite programs. 

\end{abstract}

\vspace{-5mm}
\tableofcontents 

\newpage

\section{Introduction}

\paragraph{A brief history of approximation theory.}
The area of approximation theory is concerned with the study of how  well functions can be approximated by simpler ones. While there are several notions  of {\em well} and {\em simpler}, arguably, the most natural notion is that of uniform approximations by polynomials: given a function $f:\mathbb{R} \mapsto \mathbb{R},$ and an interval  $\mathcal{I},$ what is the closest a degree $d$ polynomial can remain to  $f(x)$ in the entire interval? Formally, if $\Sigma_d$ is the class of all univariate real polynomials of degree at most $d,$ the goal is to understand
$$\eps_{f,\mathcal{I}}(d) \defeq \inf_{p \in \Sigma_d} \sup_{x \in \mathcal{I}} | f(x)-p(x) |.$$
This notion of approximation, sometimes called Chebyshev
approximation, is attributed to Pafnuty Chebyshev, who essentially
started this area in an attempt to improve upon the {\em parallel
  motion} invented by James Watt for his steam engine, see
\cite{Chebyshev1854}.  Chebyshev discovered the {\em alternation
  property} of the best approximating polynomial and computed the best
degree $d-1$ polynomial to approximate the monomial $x^{d},$ see
\cite{Chebyshev1859}.  The result equivalently showed that any degree
$d$ polynomial with leading coefficient $1$ cannot come more than
$\nfrac{1}{2^{d}}$ close to $0$ everywhere in the interval $[-1,1].$
Moreover, he showed that the degree $d$ polynomial that arises when
one writes $\cos (d \theta)$ as a polynomial in $\cos \theta$ achieves
this bound.  These polynomials are called Chebyshev polynomials which
find use in several different areas of science and mathematics and,
indeed, repeatedly make an appearance in this survey because of their
extremal properties.

\vspace{2mm}
Despite Chebyshev's seminal  results in approximation theory, including his work on best rational approximations, several foundational problems remained open. While it is obvious that $\eps_d(f,\mathcal{I})$  does not increase as we increase  $d,$ it was  Weierstrass~\cite{Weierstrass1885} who  later established that, for any continuous function $f$ and a bounded interval $\mathcal{I},$ $\eps_{f,\mathcal{I}}(d) \rightarrow 0$ as $d \rightarrow \infty.$ 
Further, it was Emile Borel~\cite{Borel1905} who proved that the best approximation is always {\em achieved} and is {\em unique}. Among other notable initial results in approximation theory, A. A. Markov~\cite{Markov1890}, motivated by a question in chemistry due to Mendeleev, proved that the absolute value of the derivative of a degree $d$ polynomial which is bounded by $1$ in the interval $[-1,1]$ cannot exceed $d^2.$ 
These, and several other results, not only solved important problems motivated by science and  engineering, but also significantly impacted theoretical areas such as mathematical analysis in the early 1900s. 

\vspace{2mm}
With computers coming into the foray around the mid 1900s, there was a fresh flurry of activity in the area of approximation theory. The primary goal was to come up with efficient ways to calculate mathematical functions arising in scientific computation and numerical analysis. For instance,  to evaluate $e^x$ for $x \in [-1,1],$ it is sufficient to store the coefficients of the best polynomial (or rational) approximation for it in this interval. 
For a fixed error, such approximations often provided a significantly more succinct representation of the function than the representation obtained by truncating the appropriate Taylor series. 

\vspace{2mm}
Among all this activity, an important development  happened in the 1960s  when  Donald Newman~\cite{Newman64} showed that the best degree-$d$ rational
approximation to the function $|x|$ on $[-1,1]$ achieves an
approximation error of $e^{-\Theta(\sqrt{d})},$ while  the  best degree-$d$ polynomial approximation only achieves an error of $\Theta(\nfrac{1}{d}).$
Though rational functions had also been considered by Chebyshev, it was Newman's result that revived the area of uniform approximation with rational functions and led to several results where the degree-error trade-off was exponentially better than that achievable by polynomial approximations.
Perhaps the  problem that received the most attention, due to its implications to numerical methods for solving systems of partial differential equations (see \cite{CodyMV69}), was to  understand the best rational approximation to $e^{-x}$ over the interval  $[0,\infty).$  
Note that $e^{-x}$ goes to $0$ as $x$ goes to infinity, while any polynomial must necessarily go to infinity.  
Rational functions of degree $d$ were shown to approximate $e^{-x}$ on $[0,\infty)$ up to an error of $c^d$ for $c<1.$ 
This line of research culminated in a landmark result by Gonchar and Rakhmanov~\cite{GoncharR83} who determined the optimal $c.$  
Despite remarkable progress in the theory of approximation by rational functions, unfortunately, there seems to be no clear understanding of why rational approximations are often significantly better than polynomials of the same degree,  and this area seems to be flooded with many surprising results often proven using clever tricks. Perhaps, this is what makes the area of rational approximations promising and worth understanding; it seems capable of magic.

\paragraph{Approximation theory in algorithms and complexity.}
Two of the first applications of approximation theory in algorithms\footnote{More precisely, in the area of numerical linear algebra, since algorithms was not yet established as a field.} were the Conjugate Gradient method \cite{HestenesS52} and the Lanczos method \cite{Lanczos52}, which are used to solve linear systems of equations $Ax=v$ where $A$ is an $n \times n$ real, symmetric and positive semi-definite (PSD) matrix. These results, which surfaced in the 1950s, resulted in what are called {\em Krylov subspace methods} and, can also be used to speed up eigenvalue and eigenvector (e.g., singular value/singular vector) calculations. These methods are iterative and reduce such computations to a small number of calculations of the form $Au$ for different vectors $u.$ Thus, they are particularly suited for sparse matrices that are too large to handled by Gaussian elimination-based methods; see the survey \cite{SaadSurvey} for a detailed discussion.

\vspace{2mm} Until recently, the main applications of approximation
theory in theoretical computer science have been in complexity theory:
one of the first being a result of Beigel {\em et al.}
\cite{BeigelRS95} who used Newman's result on rational approximations
to show that the complexity class PP is closed under intersections and
unions.\footnote{PP is the complexity class that contains sets which
  are accepted by a polynomial-time bounded probabilistic Turing
  machine that accepts with probability strictly more than
  $\nfrac{1}{2}.$} Another important result where approximation
theory, in particular Chebyshev polynomials, played a role is the {\em
  quadratic} speed-up for quantum search algorithms, initiated with a
work by Grover \cite{Grover96}. The fact that one cannot speed up
beyond Grover's result was showed by Beals {\em et al.} \cite{Beals01}
which, in turn, relied on the use of Markov's theorem as inspired by
Nisan and Szegedy's lower bound for the Boolean OR function
\cite{NisanS94}.  For more on applications of approximation theory to
complexity theory, communication complexity and computational learning
theory, we refer the reader to the bibliography by Aaronson
\cite{Aaronson08} and the thesis by Sherstov \cite{SherstovThesis}.

\vspace{2mm} In this survey, we present applications of approximation
theory to the design of fast algorithms. We show how to compute good
approximations to matrix-vector products such as $A^sv,$ $A^{-1}v$ and
$\exp({-A})v$ for any matrix $A$ and a vector $v.$\footnote{Recall
  that the matrix exponential is defined to be $\exp(A) \defeq \sum_{k
    \ge 0} \frac{A^k}{k!}$.}  Such primitives are useful for
performing several fundamental computations quickly, such as random
walk simulation, graph partitioning, solving linear system of
equations, and combinatorial approaches to solve semi-definite
programs.  The algorithms for computing these primitives end up
performing calculations of the form $Bu$ where $B$ is a matrix closely
related to $A$ (often $A$ itself) and $u$ is some vector. A key
feature of these algorithms is that if the matrix-vector product for
$A$ can be computed quickly, e.g., when $A$ is sparse, then $Bu$ can
also be computed in essentially the same time. This makes such
algorithms particularly relevant for handling the problem of {\em big
  data}. Such matrices capture either numerical data or large graphs,
and it is inconceivable to be able to compute much more than a few
matrix-vector product on matrices of this size.

\vspace{2mm}
As a simple but important application, we  show how to speed up the computation of  $A^sv$ where $A$ is a symmetric matrix with eigenvalues in $[-1,1],$ $v$ is a vector and $s$ is a large positive integer. 
The straightforward way to compute $A^sv$ takes time $O(ms)$ where $m$ is the number of non-zero entries in $A$, \emph{i.e.}, $A$'s \emph{sparsity}. We show how, appealing to a result from approximation theory, we can bring this running time down to essentially $O(m \sqrt{s}).$
We start with a result on polynomial approximation for $x^s$ over the
interval $[-1,1].$ Using some of the earliest results proved by
Chebyshev, it can be shown that there is a polynomial $p$ of degree $d
\approx \sqrt{s \log \nfrac{1}{\delta}}$ that $\delta$-approximates
$x^s$ over $[-1,1].$  
A straightforward diagonalization argument  then implies  that $\| A^s
v - \sum_{i=0}^d a_i A^i v\|_2 \leq \delta,$ where $p(x) = \sum_{i=0}
a_i x^i.$ 
More importantly, the time it takes to compute   $ \sum_{i=0}^d a_i A^i v$ is $O(md)=O(m \sqrt {s \log \nfrac{1}{\delta}}),$ which gives us a saving of about $\sqrt{s}.$ When $A$ is the random walk matrix of a graph and $v$ is an initial distribution over the vertices, the result above implies that we can speed up the computation of the distribution after $s$ steps by a quadratic factor. 
Note that  this application also motivates why uniform approximation
is the right notion for algorithmic applications, since all we know is
the interval in which eigenvalues of $A$ lie $v$ can be any vector
and, hence, we would like the approximating polynomial to be close
everywhere in that interval.

\vspace{2mm} While the computation of $\exp(-A)v$ is of fundamental
interest in several areas of mathematics, physics, and engineering,
our interest stems from its recent applications in algorithms and
optimization.  Roughly, these latter applications are manifestations
of the multiplicative weights method for designing fast algorithms,
and its extension to solving semi-definite programs via the framework
by Arora and Kale~\cite{AroraK07}.\footnote{See also
  \cite{JainJUW11,JainUW09,JainW09,Kthesis,OSVV,OrecchiaV11,
    OrecchiaSV11-arxiv, Vishnoi12, AHK12, Sherman09, Sherman13}.}  At the
heart of all algorithms based on the matrix multiplicative weights
update method is a procedure to quickly compute $e^{-A}v$ for a
symmetric, positive semi-definite matrix $A$ and a vector $v.$ Since
exact computation of the matrix exponential is expensive, we seek an
approximation. It suffices to approximate the function $e^{-x}$ on the
interval $[0,\|A\|].$ A simple approach is to truncate the Taylor
series expansion of $e^{-x}.$ It is easy to show that using roughly
$\|A\|+\log \nfrac{1}{\delta}$ terms in the expansion suffices to
obtain a $\delta$ approximation. However, we can use a polynomial
approximation result for $e^{-x}$ over the interval $[0,\|A\|]$ to
produce an algorithm that runs in time roughly $O(m \sqrt{
  \|A\|})$. In fact, when $A$ has more structure, we can go beyond the
square-root barrier.

\vspace{2mm} For fast graph algorithms, often the quantity of interest
is $e^{-L}v,$ where $L$ is the combinatorial Laplacian of a graph, and
$v$ is a vector. The vector $e^{-L}v$ can also be interpreted as the
resulting distribution of a certain continuous-time random walk on the
graph with starting distribution $v.$ Appealing to a rational
approximation to $e^{-x}$ with negative poles, the computation of
$e^{-L}v$ can be reduced to a small number of computations of the form
$L^{-1}u.$ Thus, using the near-linear-time Laplacian
solver\footnote{A Laplacian solver is an algorithm that
  (approximately) solves a given system of linear equations $Lx=b,$
  where $L$ is a graph Laplacian and $b\in {\rm Im}(L)$, \emph{i.e.},
  it (approximately) computes $L^{-1}b,$ see \cite{Vishnoi12}.} due to
Spielman and Teng~\cite{SpielmanT04}, this gives an
$\tilde{O}(m)$-time algorithm for approximating $e^{-L}v$ for graphs
with $m$ edges. In the language of random walks, continuous-time
random walks on an undirected graph can be simulated essentially
independent of time; such is the power of rational approximations.

\vspace{2mm}
A natural  question that arises from our last application is whether the Spielman-Teng result is necessary in order to compute $e^{-L}v$ in near-linear time. In our final application of approximation theory,  we answer this question in the
affirmative by presenting a reduction in the other direction; we prove that the inverse of a positive-definite matrix can be
  approximated by a weighted-sum of a small number of matrix
  exponentials.

\paragraph{Organization.}
The goal of this survey is to bring out how classical and modern
results from approximation theory play a crucial role in obtaining
results which are relevant today to the emerging theory of fast
algorithms.  The approach we have taken is to first present the ideas
and results from approximation theory that we think are central,
elegant, and have wider applicability in TCS. For the sake of clarity,
we have sometimes sacrificed tedious details. This means that we
admittedly either do not present complete proofs or theorems with
optimal parameters for a few important results. The survey is
organized as follows.

\vspace{2mm}
In Section \ref{sec:basics}, we present some essential notations and results from approximation theory, and introduce Chebyshev polynomials. We prove certain extremal properties of these polynomials which are used in this survey. In Section \ref{sec:poly-apx} we construct polynomial approximations to the monomial $x^s$ over the interval $[-1,1]$ and $e^{-x}$ over the interval $[0,b].$ Both results are  based on Chebyshev polynomials. In the same Section we prove a special case of Markov's theorem which is then used to show that these polynomial approximations are asymptotically optimal. 

\vspace{2mm}
In Section \ref{sec:rational-apx} we consider rational approximations for the function $e^{-x}$ over the interval $[0,\infty).$ We first show that degree $d$ rational functions can achieve $c^d$ error for some $0<c<1.$ Subsequently we prove that this result is optimal up to the choice of constant $c.$ 
In Section \ref{sec:exp-rational2} we present a proof of the
remarkable theorem that such geometrically decaying errors for the
$e^{-x}$ can be achieved by rational functions with an additional
restriction that all its poles be real and negative.  

\vspace{2mm}
In Section \ref{sec:exp-reduction} we show how $x^{-1}$ can be approximated by a sparse sum of the form $\sum_{i}w_i e^{-t_i x}$ over the interval $(0, 1].$ The proof relies on the Euler-Maclaurin formula and  certain bounds derived from the Riemann zeta function.

\vspace{2mm}
Section \ref{sec:applications} contains the presentation of
applications of the approximation theory results. In Section
\ref{sec:applications:sim-rw} we show how the results of Section
\ref{sec:poly-apx}  imply that we can quadratically speed up random
walks in graphs, and find sparse cuts faster. Here, we discuss the important issue of computing the coefficients of the polynomials in Section \ref{sec:poly-apx}. 
In Section \ref{sec:applications:cg} we present the famous Conjugate gradient method for solving symmetric PSD systems of equations $Ax=v$ iteratively where the number of iterations depend on the square-root of the condition number of $A.$ The square-root saving is shown to be exactly because of the scalar approximation result for $x^s$ from Section \ref{sec:poly-apx}.
In Section \ref{sec:lanczos} we present the  Lanczos method and show how it can be used to compute the largest eigenvalue of a symmetric matrix.  We show how  the  existence of a good approximation for $x^s,$ yet again, allows a quadratic speedup over the {\em power method}.

\vspace{2mm}
In Section  \ref{sec:applications:matrix-exp} we show how the polynomial and rational approximations to $e^{-x}$ developed in Section \ref{sec:rational-apx} imply the best known algorithms for computing $\exp(-A)v.$ If $A$ is a symmetric and diagonally dominant (SDD) matrix, then we show how to combine rational approximations to $e^{-x}$ with negative poles with the powerful SDD (Laplacian) solvers of Spielman-Teng to obtain near-linear time algorithms for computing  $\exp(-A)v.$ We also show how to bound and compute the coefficients involved in the rational approximation result in Section \ref{sec:exp-rational2}; this is crucial for the application.

\vspace{2mm}
Finally, in \ref{sec:applications:matrix-inverse-reduction}, using the result from Section \ref{sec:exp-reduction}, we show how to reduce computation of $A^{-1}v$ for a symmetric positive-definite (PD) matrix $A$ to the computation of a small number of  
computations of the form $\exp(-A)v.$ Apart from suggesting a new approach to solving a PD system, this result shows that computing $\exp(-A)v$  inherently requires the ability to solve a system of equations involving $A.$

\section{Basics}\label{sec:basics}
\subsection{Uniform Approximations}
\label{sec:poly:prelims}
Given an interval $\calI \subseteq \mathbb{R}$ and a function $f:\mathbb{R}\mapsto \mathbb{R},$ we are interested in 
\emph{approximations} for $f$ over $\calI.$ Of particular interest is the following notion of approximation.
\begin{definition}
A function $g$ is called a $\delta$-approximation to a
function $f$ over an interval $\calI$ if $\sup_{x \in \calI}
|f(x)-g(x)| \le \delta$. 
\end{definition}
\noindent
Both finite and infinite intervals
$\calI$ are considered. Such approximations are known as uniform approximations or Chebyshev approximations. 
Since then, a central topic of study in approximation theory has been to understand  how
well a function $f$ can be approximated using polynomials.
More precisely, the quantity of interest for a function $f$ is the best uniform error achievable over an interval $\mathcal{I}$ by a polynomial of degree $d,$ namely, $\eps_{f,\mathcal{I}}(d)$ as defined in the introduction. The first set of basic questions are 1) does $\lim_{d \rightarrow \infty} \eps_{f,\mathcal{I}}(d) =0$? and 2) does there always exist a degree-$d$ polynomial $p$ that achieves $\eps_{f,\mathcal{I}}(d)$? 
Interestingly, these questions were not addressed in Chebyshev's seminal work.
Later, 
Weierstrass~(see \cite{Rivlin69}) showed that, for a
continuous function $f$ on a bounded interval $[a,b],$ there exist
arbitrarily good polynomial approximations, \emph{i.e.}, for every
$\delta >0,$ there exists a polynomial $p$ that is a
$\delta$-approximation to $f$ on $[a,b].$ The existence and uniqueness of a a degree-$d$ polynomial that achieves the best approximation $\eps_{f,\mathcal{I}}(d)$  was proved by  Borel. 

The trade-off between the degree of the approximating polynomial and the approximation error has been
studied extensively, and is one of the main themes in this survey.

\vspace{2mm}

In an attempt to get a handle on best approximations, Chebyshev showed
that a degree-$d$ polynomial $p$ is a best approximation to $f$ over
an interval $[-1,1]$ if and only if the maximum error between $f$ and
$p$ is achieved exactly at $d+2$ points in $[-1,1]$ with alternating
signs, i.e, there are $-1 \leq x_0 < x_1 \cdots < x_{d+1} \leq 1$ such
that $f(x_i)-p(x_i)=(-1)^i \eps$ where $\eps = \sup_{x \in [-1,1]}
|f(x)-p(x)|.$ We prove the following theorem, attributed to de La
Vallee-Poussin, which implies the sufficient side of Chebyshev's
alternation theorem and is often good enough.
\begin{theorem}
\label{lem:poussin}
Suppose $f$ is a function over $[-1,1],$ $p$ is a degree-$d$
polynomial, and $\delta > 0$ is such that the \emph{error function}
$\eps \defeq f - p$ assumes alternately positive and negative signs at
$d+2$ increasing points $-1 \le x_0 < \cdots < x_{d+1} \le 1,$ and
satisfies $|\eps(x_i)| \ge \delta$ for all $i$. Then, for any degree-$d$ polynomial $q,$ we have $\sup_{x \in [-1,1]} |f(x) - q(x)| \ge
\delta.$
\end{theorem}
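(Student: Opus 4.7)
The plan is to argue by contradiction, exploiting the fact that a degree-$d$ polynomial cannot have $d+1$ roots unless it is identically zero. So I would suppose for contradiction that some degree-$d$ polynomial $q$ achieves $\sup_{x \in [-1,1]} |f(x) - q(x)| < \delta$, and aim to produce a degree-$d$ polynomial with too many sign changes.

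First I would form the auxiliary polynomial $r \defeq p - q$, which has degree at most $d$, and rewrite it as $r = (f - q) - (f - p)$. The point of this rewriting is that it splits $r(x_i)$ into a ``small'' term and a ``large alternating'' term: by hypothesis, $(f - p)(x_i)$ has sign $(-1)^i$ (up to an overall sign that I can absorb by relabeling) and magnitude at least $\delta$, while by the contradictory assumption $(f - q)(x_i)$ has magnitude strictly less than $\delta$. Therefore the sign of $r(x_i)$ is forced to agree with the sign of $-(f - p)(x_i)$, i.e., $r$ alternates in sign across the $d+2$ consecutive points $x_0 < x_1 < \cdots < x_{d+1}$.

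Next I would apply the intermediate value theorem on each of the $d+1$ consecutive intervals $[x_i, x_{i+1}]$ to conclude that $r$ has at least one zero in each, yielding at least $d+1$ distinct zeros of a polynomial of degree at most $d$. This forces $r \equiv 0$, hence $p = q$. But then $|f(x_i) - q(x_i)| = |f(x_i) - p(x_i)| \ge \delta$, which contradicts $\sup|f - q| < \delta$.

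The only delicate step is the sign-comparison argument at the points $x_i$: I need the strict inequality $|(f - q)(x_i)| < \delta$ together with $|(f - p)(x_i)| \ge \delta$ to guarantee that the dominant term in $r(x_i)$ is $-(f-p)(x_i)$ and that the sign is genuinely determined rather than possibly zero. Everything else (degree count, IVT, passing from sign alternation to root count) is standard, so I expect this to be the main bookkeeping point; no serious obstacle beyond being careful with the inequalities.
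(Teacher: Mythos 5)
Your proposal is correct and follows essentially the same route as the paper: form $p - q$, use the hypotheses to force alternating signs of this difference at the $d+2$ points $x_i$, invoke the intermediate value theorem to obtain $d+1$ roots, and contradict the degree bound. The only cosmetic difference is at the end --- you conclude $r \equiv 0$ and then re-derive a contradiction from $p = q$, whereas the paper notes that $p-q$ is already known to be nonzero at each $x_i$, so having $d+1$ roots is an immediate contradiction; both are fine.
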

\begin{proof}
Suppose, to the contrary, that there exists a degree-$d$ polynomial $q$
such that $\sup_{x \in [-1,1]} |f(x) - q(x)| < \delta.$ This implies
that for all $i,$ we have $\eps(x_i) - \delta < q(x_i) - p(x_i) < \eps(x_i) +
\delta.$ Since $|\eps(x_i)| \ge \delta,$ the polynomial $q-p$ is
non-zero at each of the $x_i$s, and must have the same sign as
$\eps.$ Thus, $q-p$ assumes alternating signs at the $x_i$s, and
hence must have a zero between each pair of successive $x_i$s. This
implies that the non-zero degree-$d$ polynomial $q-p$ has at least
$d+1$ zeros, which is a contradiction.
\end{proof}

\noindent
The above theorem easily generalizes to any finite interval. In addition
to the conditions in the theorem, if we also have $\sup_{x \in [-1,1]}
|f(x) - p(x)| = \delta,$ then $p$ is the best degree-$d$
approximation. 
This theorem can be used to prove one of Chebyshev's results: The
best degree-$(d-1)$ polynomial approximation to $x^d$ over the
interval $[-1,1]$ achieves an error of exactly $2^{-d+1},$ see Theorem \ref{thm:chebyshev}. However, finding
the best degree-$d$ polynomial for other functions is usually  intractable, and will not be the focus in the survey. Rather, we either find a $\delta$-approximation for a suitably small $\delta$ or prove that there are none.  

\vspace{2mm}
Often, an effective way  to study $\delta$-approximations is to consider
\emph{relaxations} of the problem of finding the best uniform approximation. A natural relaxation to consider is to find the
degree-$d$ polynomial $p$ that minimizes the $\ell_2$ error
$\int_{-1}^1 (f(x)-p(x))^2 \dif{x}$. Algorithmically, we know how to solve this problem 
efficiently: It suffices to have an \emph{orthonormal} basis of
degree-$d$ polynomials $p_0(x),\ldots,p_d(x),$ \emph{i.e.}, polynomials that satisfy 
$\int_{-1}^1 p_i(x)p_j(x) \dif{x} = 0$ if $i \neq j$ and $1$
otherwise. Such an orthonormal basis can be constructed by applying
Gram-Schmidt orthonormalization to the polynomials $1,x,\ldots,x^d$
with respect to the uniform measure on $[-1,1]$ \footnote{These orthogonal polynomials are given explicitly by
  $\left\{\sqrt{\nfrac{(2d+1)}{2}}\cdot L_d(x)\right\}$, where $L_d(x)$ denotes
  the degree-$d$ Legendre polynomials. See~\cite{Szego24}.}. Given such
an orthonormal basis, the best approximation is given by $p(x) =
\sum_i \hat{f}_ip_i(x),$ where $\hat{f}_i = \int_{-1}^1
f(x)p_i(x)\dif{x}.$

\vspace{2mm}
Given a relaxation, we must consider how good that relaxation is,
\emph{i.e.}, if $p(x)$ is the best $\ell_2$-approximation to the
function $f(x),$ how does it compare to the best uniform approximation
to $f(x)$? For the straightforward relaxation above, the approximation turns out to not be meaningful. However, if we modify the relaxation to minimize the $\ell_2$ error with
respect to the weight function $w(x)\defeq \nfrac{1}{\sqrt{1-x^2}},$ \emph{i.e.},
minimize $\int_{-1}^1 (f(x)-p(x))^2 \frac{\dif{x}}{\sqrt{1-x^2}},$ then when $f$ is continuous the best degree-$d$ $\ell_2$-approximation with respect to $w$ turns out be an $O(\log d)$ approximation for the best uniform approximation (see~\cite[Section 2.4]{Rivlin69} for a
proof). Formally, if we let $p$ be the degree-$d$ polynomial that
minimizes $\ell_2$-approximation with respect to $w,$ and let $p^\star$ be the
best degree-$d$ uniform approximation, then 
\begin{equation}
\label{eq:log-d-approx}
\sup_{x \in [-1,1]} |f(x)-p(x)| \le O(\log d)\cdot \sup_{x \in
  [-1,1]} |f(x)-p^\star(x)|.
\end{equation}

\noindent
The orthogonal polynomials obtained by applying the Gram-Schmidt process with weight $w$ turn out to be  \emph{Chebyshev Polynomials}, which are central to approximation theory due to their important extremal properties.

\subsection{Chebyshev Polynomials}
There are several ways to define Chebyshev polynomials. For a non-negative integer $d$, if $T_d(x)$ denotes the Chebyshev
polynomial of degree $d$, then it can be defined recursively as follows: $T_0(x)
\defeq 1, T_1(x) \defeq x,$ and for $d \ge 2,$
\begin{equation}
\label{eq:prelims:chebyshev-recurrence}
T_{d}(x) \defeq 2xT_{d-1}(x) - T_{d-2}(x).
\end{equation}

\noindent
For convenience, we extend the definition of
Chebyshev polynomials to negative integers by defining $T_d(x) \defeq T_{|d|}(x)$ for $d<0$. It is easy to verify that with this definition, the recurrence given by
\eqref{eq:prelims:chebyshev-recurrence} is satisfied for
all integers $d.$
Rearranging ~\eqref{eq:prelims:chebyshev-recurrence}, we
obtain the following:
\begin{proposition}
\label{prop:prelims:chebyshev-relation}
The Chebyshev polynomials $\{T_d\}_{d \in \bz}$ satisfy the following
relation for all $d \in \bz,$ 
\[xT_d(x) = \frac{T_{d+1}(x) + T_{d-1}(x)}{2}.\]
\end{proposition}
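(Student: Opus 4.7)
The plan is to derive the identity by shifting the index in the defining recurrence, and then separately verify the small cases where the convention $T_d \defeq T_{|d|}$ causes a ``fold'' at zero.

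For the main case $d \geq 1$, note that $d+1 \geq 2$, so the recurrence \eqref{eq:prelims:chebyshev-recurrence} applies directly and gives
\[T_{d+1}(x) = 2x T_d(x) - T_{d-1}(x).\]
Rearranging yields $2x T_d(x) = T_{d+1}(x) + T_{d-1}(x)$, which is exactly the claim.

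Next, I handle the two boundary indices $d=0$ and $d=-1$ by direct substitution. For $d = 0$, using $T_0 = 1$ and $T_{-1} = T_1 = x$, both sides equal $x$. For $d = -1$, using $T_{-1} = T_1 = x$, $T_0 = 1$, and $T_{-2} = T_2 = 2x^2 - 1$, both sides equal $x^2$.

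Finally, for $d \leq -2$, I reduce to the already-established case via the convention $T_d = T_{|d|}$. Setting $k \defeq -d \geq 2$, we have $T_d = T_k$, $T_{d+1} = T_{k-1}$ (since $d+1 \leq -1$ so $|d+1| = k-1$), and $T_{d-1} = T_{k+1}$ (since $d-1 \leq -3$ so $|d-1| = k+1$). Thus the identity for index $d$ is the same as the identity for index $k \geq 2$, which has already been proved.

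The proof has no real obstacle beyond bookkeeping; the only subtlety is that the absolute-value convention produces the ``turn-around'' at $d \in \{0, -1\}$, so those two indices must be checked by hand rather than being subsumed by a single application of the recurrence.
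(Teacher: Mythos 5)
Your proof is correct and takes essentially the same approach as the paper: the paper extends the recurrence \eqref{eq:prelims:chebyshev-recurrence} to all integers (noting this is ``easy to verify'' under the convention $T_d \defeq T_{|d|}$) and then simply rearranges it, while you carry out that verification explicitly by checking the fold cases $d\in\{0,-1\}$ by hand and reducing $d\le -2$ to $d\ge 2$ via $|d|$.
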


\noindent An important property of  Chebyshev polynomials, which is often used to define them, is given
by the following proposition which asserts that the Chebyshev polynomial of degree $d$ is  exactly the polynomial that arises when one writes $\cos (d\theta)$ as a polynomial in $\cos \theta.$
\begin{proposition}
\label{prop:chebyshev:cosine}
For any $\theta \in \rea,$ and any integer $d,$ $T_d(\cos \theta)
=\cos (d \theta)$. 
\end{proposition}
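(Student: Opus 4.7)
The plan is to prove the identity by induction on $d$ for non-negative integers, then extend to negative integers using the evenness of $\cos$ together with the convention $T_d = T_{|d|}$. The key algebraic tool is the product-to-sum trigonometric identity $2\cos\theta \cos((d-1)\theta) = \cos(d\theta) + \cos((d-2)\theta)$, which mirrors the Chebyshev recurrence \eqref{eq:prelims:chebyshev-recurrence} almost perfectly.

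For the base cases, I would verify directly that $T_0(\cos\theta) = 1 = \cos(0\cdot\theta)$ and $T_1(\cos\theta) = \cos\theta = \cos(1\cdot\theta)$, using the definitions $T_0(x) \defeq 1$ and $T_1(x) \defeq x$. For the inductive step, assuming the claim for $d-1$ and $d-2$, I would substitute into the recurrence:
\[
T_d(\cos\theta) \;=\; 2\cos\theta\cdot T_{d-1}(\cos\theta) - T_{d-2}(\cos\theta) \;=\; 2\cos\theta\cos((d-1)\theta) - \cos((d-2)\theta),
\]
and then apply the product-to-sum identity to recognize the right-hand side as $\cos(d\theta)$.

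For negative $d$, I would use the convention $T_d(x) \defeq T_{|d|}(x)$ and the fact that $\cos$ is an even function, so $\cos(d\theta) = \cos(|d|\theta) = T_{|d|}(\cos\theta) = T_d(\cos\theta)$.

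There is no real obstacle here: the proof is a straightforward two-step induction, and the only substantive ingredient beyond the definitions is the product-to-sum identity, which is standard. The only thing to be slightly careful about is making sure the base cases cover $d=0$ and $d=1$ (since the recurrence only fires for $d\geq 2$) and then noting that the extension to negative $d$ is immediate from the stated convention.
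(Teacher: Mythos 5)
Your proof is correct and follows essentially the same route as the paper: verify the base cases $d=0,1$ directly, then match the Chebyshev recurrence against the product-to-sum identity $\cos(d\theta) = 2\cos\theta\cos((d-1)\theta) - \cos((d-2)\theta)$. You additionally spell out the extension to negative $d$ via the convention $T_d = T_{|d|}$ and the evenness of $\cos$, which the paper leaves implicit.
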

\noindent
This can be easily verified as follows. First,
note that $T_0(\theta) = \cos(0) = 1$ and $T_1(\theta) = \cos(\theta)
= x$. Additionally, $\cos (d\theta) = 2 \cdot \cos \theta \cdot \cos
((d-1)\theta) - \cos ((d-2)\theta)$ and, hence, the recursive
definition for Chebyshev polynomials applies.
This proposition also immediately implies that over the interval
$[-1,1],$ the value of any Chebyshev polynomials is bounded by $1$ in
magnitude.
\begin{proposition}
\label{prop:poly:prelims:chebyshev-bdd}
For any integer $d,$ and $x \in [-1,1],$ we have $|T_d(x)| \le 1.$
\end{proposition}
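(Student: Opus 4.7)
The plan is to derive this immediately from Proposition \ref{prop:chebyshev:cosine}, which identifies $T_d$ as the polynomial expressing $\cos(d\theta)$ in terms of $\cos\theta$. The key observation is that the interval $[-1,1]$ is exactly the image of $\cos$, so every $x \in [-1,1]$ admits a representation $x = \cos\theta$ for some $\theta \in [0,\pi]$.

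Given such a $\theta$, I would invoke Proposition \ref{prop:chebyshev:cosine} to write $T_d(x) = T_d(\cos\theta) = \cos(d\theta)$. Since $|\cos(d\theta)| \le 1$ for every real number $d\theta$, the bound $|T_d(x)| \le 1$ follows. The extension to negative integers $d$ is automatic from the convention $T_d(x) \defeq T_{|d|}(x)$ introduced after \eqref{eq:prelims:chebyshev-recurrence}, so no separate argument is needed there.

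There is essentially no obstacle here: the whole content of the proposition is already packaged inside Proposition \ref{prop:chebyshev:cosine}, and the only thing one must point out is the surjectivity of $\cos\colon [0,\pi]\to[-1,1]$. The proof therefore reduces to a single line, and the only stylistic choice is whether to state the change of variable $x = \cos\theta$ explicitly or to regard it as obvious.
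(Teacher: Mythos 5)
Your argument is correct and coincides with the paper's: the proposition is stated as an immediate consequence of Proposition~\ref{prop:chebyshev:cosine}, using exactly the substitution $x=\cos\theta$ and the bound $|\cos(d\theta)|\le 1$.
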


\noindent
In fact, Proposition~\ref{prop:chebyshev:cosine} implies that, over
the interval $[-1,1],$ the polynomial $T_d(x)$ achieves its extremal
magnitude at exactly $d+1$ points $x=\cos (\nfrac{j\pi}{d}),$ for
$j=0,\ldots,d,$ and the sign of $T_d(x)$ alternates at these
points. This is ideally suited for an application of Theorem~\ref{lem:poussin}, and we can now
prove Chebyshev's result mentioned in the previous section.
\begin{theorem}\label{thm:chebyshev}
  For every positive integer $d,$ the best degree-$(d-1)$ polynomial
  approximation to $x^d$ over $[-1,1],$ achieves an approximation
  error of $2^{-d+1},$ \emph{i.e.},
$\inf_{p_{d-1} \in \Sigma_{d-1}}\sup_{x \in [-1,1]} |x^d - p_{d-1}(x)| = 2^{-d+1}.$
\end{theorem}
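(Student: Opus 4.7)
The plan is to exhibit the optimal approximation explicitly using the Chebyshev polynomial $T_d$ and then apply Theorem~\ref{lem:poussin} to rule out anything better. First I would note from the recurrence $T_d(x) = 2xT_{d-1}(x) - T_{d-2}(x)$ together with $T_0 = 1$ and $T_1 = x$ that for every $d \ge 1$ the leading coefficient of $T_d$ is $2^{d-1}$. Consequently the polynomial $2^{-d+1} T_d(x)$ is monic of degree $d$, so the difference
\[
p_{d-1}(x) \defeq x^d - 2^{-d+1} T_d(x)
\]
is a polynomial of degree at most $d-1$. Combining this with Proposition~\ref{prop:poly:prelims:chebyshev-bdd} immediately yields the upper bound
\[
\sup_{x \in [-1,1]} |x^d - p_{d-1}(x)| = 2^{-d+1} \sup_{x \in [-1,1]} |T_d(x)| \le 2^{-d+1}.
\]

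For the matching lower bound I would invoke Theorem~\ref{lem:poussin} with $f(x) = x^d$ and $p = p_{d-1}$, so that the error function is exactly $\eps(x) = 2^{-d+1} T_d(x)$. By Proposition~\ref{prop:chebyshev:cosine}, at the $d+1$ points $x_j = \cos(j\pi/d)$, for $j = 0,1,\dots,d$, we have $T_d(x_j) = \cos(j\pi) = (-1)^j$, so $\eps(x_j) = (-1)^j \cdot 2^{-d+1}$. These $d+1 = (d-1) + 2$ points in $[-1,1]$ are strictly increasing (as $j$ decreases from $d$ to $0$, after reindexing) and the signs of $\eps$ alternate while $|\eps(x_j)| = 2^{-d+1}$. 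Applying Theorem~\ref{lem:poussin} with the parameter $\delta = 2^{-d+1}$ and degree $d-1$, every degree-$(d-1)$ polynomial $q$ satisfies $\sup_{x \in [-1,1]} |x^d - q(x)| \ge 2^{-d+1}$, matching the upper bound.

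I do not foresee a genuine obstacle: the only small nuisance is bookkeeping the leading coefficient of $T_d$ (a one-line induction from the recurrence) and making sure the alternation points are listed in increasing order, which is handled by reversing the indexing $j \mapsto d - j$. The rest is a direct assembly of Propositions~\ref{prop:chebyshev:cosine} and~\ref{prop:poly:prelims:chebyshev-bdd} with Theorem~\ref{lem:poussin}.
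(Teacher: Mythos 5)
Your proof is correct and follows essentially the same route as the paper: both exhibit the approximant $x^d - 2^{-d+1}T_d(x)$, observe that the error $2^{-d+1}T_d(x)$ alternates between $\pm 2^{-d+1}$ at $d+1$ points, and conclude via Theorem~\ref{lem:poussin}. You merely spell out a few steps (leading coefficient, choice of alternation points $\cos(j\pi/d)$, reindexing) that the paper compresses into one sentence.
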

\begin{proof}
  Observe that the leading coefficient of $T_d(x)$ is $2^{d-1}$ and,
  hence, $2^{-d+1}T_d(x)-x^d$ is a polynomial of degree $(d-1).$ The
  error this polynomial achieves in approximating $x^d$ on $[-1,1]$ is
  $2^{-d+1}T_d(x),$ which is bounded in magnitude on $[-1,1]$ by
  $2^{-d+1},$ and achieves the value $\pm 2^{-d+1}$ at $d+1$ distinct
  points with alternating signs. The result now follows from
  Theorem~\ref{lem:poussin}.
\end{proof}

\noindent
The fact that $T_d(x)$ takes alternating $\pm 1$ values $d+1$ times in
$[-1,1],$ leads to another property of the Chebyshev polynomials: 
\begin{proposition}
\label{prop:chebyshev:extremal-growth}
  For any degree-$d$ polynomial $p(x)$ such that $|p(x)| \le 1$
  for all $x \in [-1,1],$ for any $y$ such that $|y| > 1,$ we have
  $|p(y)| \le |T_d(y)|.$
\end{proposition}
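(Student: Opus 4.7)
The plan is to argue by contradiction, exploiting the alternation of $T_d$ at the $d+1$ extremal points $x_j = \cos(j\pi/d)$ for $j=0,\ldots,d$, where $T_d(x_j)=(-1)^j$. Suppose there were a degree-$d$ polynomial $p$ with $|p(x)|\le 1$ on $[-1,1]$ and some $y$ with $|y|>1$ satisfying $|p(y)|>|T_d(y)|$. Since all zeros of $T_d$ lie in $(-1,1)$ (as seen from $T_d(\cos\theta)=\cos(d\theta)$), we have $T_d(y)\ne 0$, and from $|p(y)|>|T_d(y)|\ge 0$ we also get $p(y)\ne 0$, so the scalar $\lambda \defeq T_d(y)/p(y)$ is well-defined and satisfies $|\lambda|<1$.

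Next, I would consider the auxiliary polynomial $q(x) \defeq T_d(x) - \lambda p(x)$, which has degree at most $d$. By construction $q(y)=0$. On $[-1,1]$, at each extremal point $x_j$ we have $|\lambda p(x_j)|\le|\lambda|<1=|T_d(x_j)|$, so $q(x_j)$ has the same sign as $T_d(x_j)=(-1)^j$. Thus $q$ alternates in sign at $x_0,\ldots,x_d$, forcing at least $d$ zeros interlaced among the intervals $(x_{j+1},x_j)\subset[-1,1]$. Adjoining the zero at $y\notin[-1,1]$ gives at least $d+1$ distinct zeros of a polynomial of degree at most $d$, whence $q\equiv 0$ and therefore $p(x) = T_d(x)/\lambda$ identically.

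The contradiction then comes from evaluating at $x=1$: $|p(1)| = |T_d(1)|/|\lambda| = 1/|\lambda| > 1$, which violates the hypothesis $|p(x)|\le 1$ on $[-1,1]$. This completes the argument.

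I do not expect a genuine obstacle here; the only subtleties are bookkeeping ones, namely verifying that $\lambda$ is well-defined (ensured by $T_d$ having no zeros outside $[-1,1]$) and that the interlacing truly yields $d$ distinct zeros in $[-1,1]$ (which follows because the sign changes of $q$ at the $x_j$ are strict thanks to $|\lambda|<1$). With those observations in place, the proof is a direct application of the alternation principle underlying Theorem~\ref{lem:poussin}, applied this time not to the approximation error but to the polynomial $q$ itself.
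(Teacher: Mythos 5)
Your argument is correct and follows essentially the same route as the paper's: normalize by the ratio at $y$, observe that the rescaled polynomial is strictly less than $1$ in modulus at the $d+1$ extremal points of $T_d$, and count sign changes plus the extra root at $y$ to overload a degree-$d$ polynomial. The one small difference is cosmetic — you work with $q = T_d - \lambda p$ directly and handle the $q\equiv 0$ case by evaluating at $x=1$, whereas the paper considers $T_d - q$ with $q = \tfrac{T_d(y)}{p(y)}p$ and implicitly rules out the zero polynomial via the strict bound $|q|<1$ at the extremal points; both resolutions are fine and the substance is identical.
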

\noindent
\begin{proof}
For sake of contradiction, let $y$ be such  that $|p(y)|> |T_d(y)|$ and let $q(x) \defeq
\frac{T_d(y)}{p(y)}\cdot p(x).$ Hence, $\left|q(x)\right| < |p(x)| \le 1$
for all $x \in [-1,1],$ and $q(y)=T_d(y).$ Thus, strictly between any two consecutive points where  $T_d(x)$ alternates between $+1$ and $-1$, there must  be a point $x_i$ at which  $T_d(x_i) = q(x_i)$ since $|q(x)|<1$ in $[-1,1].$ Hence, $T_d(x)-q(x)$ has at least $d$ distinct zeros
in the interval $[-1,1],$ and another zero at $y.$ 
Hence it is a non-zero polynomial of degree at most $d$ with $d+1$
roots, which is a contradiction.
\end{proof}

\noindent
This proposition is used to prove a lower bound for rational
approximations to $e^{-x}$ in Section~\ref{sec:rational:vanilla}.
In order to do so, we need to upper bound their growth. This can be
achieved using the following closed-form expression for $T_d(x)$ which can be easily verified using the recursive definition
of Chebyshev polynomials.

\begin{proposition}
\label{prop:chebyshev:closed-form}
For any integer $d,$ and $x,$ we have
\[T_d(x) = \frac{1}{2} \left( x+\sqrt{x^2-1} \right)^d + \frac{1}{2}\left( x
    - \sqrt{x^2-1} \right)^d.\]
\end{proposition}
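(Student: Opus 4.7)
The plan is to verify the claimed closed form directly by showing that the right-hand side satisfies the same recurrence and initial conditions as $T_d$. Let $R_d(x) \defeq \tfrac{1}{2}(x+\sqrt{x^2-1})^d + \tfrac{1}{2}(x-\sqrt{x^2-1})^d$, and write $\alpha \defeq x+\sqrt{x^2-1}$ and $\beta \defeq x-\sqrt{x^2-1}$. The two key algebraic identities that make everything work are
\[
\alpha + \beta = 2x, \qquad \alpha \beta = x^2 - (x^2-1) = 1.
\]
The first identity encodes the coefficient $2x$ appearing in the Chebyshev recurrence, and the second is what allows the $R_{d-2}$ term to cancel cleanly.

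I would then check the base cases $d=0$ and $d=1$: we have $R_0(x) = \tfrac{1}{2} + \tfrac{1}{2} = 1 = T_0(x)$ and $R_1(x) = \tfrac{1}{2}(\alpha+\beta) = x = T_1(x)$. For the inductive step, I would compute, for $d \geq 2$,
\[
2x\, R_{d-1}(x) - R_{d-2}(x) = (\alpha+\beta)\cdot \tfrac{1}{2}(\alpha^{d-1}+\beta^{d-1}) - \tfrac{1}{2}(\alpha^{d-2}+\beta^{d-2}).
\]
Expanding the first product gives $\tfrac{1}{2}(\alpha^d + \beta^d) + \tfrac{1}{2}\alpha\beta(\alpha^{d-2} + \beta^{d-2})$, and substituting $\alpha\beta = 1$ causes the second term to exactly cancel the subtracted $R_{d-2}$, leaving $\tfrac{1}{2}(\alpha^d+\beta^d) = R_d(x)$. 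Thus $R_d$ satisfies the recurrence \eqref{eq:prelims:chebyshev-recurrence}, which together with the matching initial conditions gives $R_d = T_d$ for all non-negative integers by induction.

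Finally, to cover negative $d$, I would use the identity $\alpha\beta = 1$, which gives $\beta = \alpha^{-1}$. Hence $\alpha^{-d} + \beta^{-d} = \beta^{d} + \alpha^{d}$, so $R_{-d}(x) = R_d(x)$, which is consistent with the convention $T_{-d}(x) \defeq T_{|d|}(x)$ stated earlier. There is no real obstacle here: the only mildly delicate point is handling the sign of $\sqrt{x^2-1}$ (or its branch choice for $|x|<1$), but this is harmless because the expression is symmetric in $\alpha$ and $\beta$, so both choices of branch produce the same value of $R_d(x)$. The entire argument is a short induction, which is consistent with the paper's remark that the identity can be easily verified from the recursive definition.
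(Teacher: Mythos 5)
Your proof is correct and follows precisely the route the paper indicates: verify that the right-hand side satisfies the Chebyshev recurrence \eqref{eq:prelims:chebyshev-recurrence} together with the initial conditions, using $\alpha+\beta=2x$ and $\alpha\beta=1$. The handling of negative $d$ and the remark on the branch of $\sqrt{x^2-1}$ are both sound and match the paper's conventions.
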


\section{Polynomial Approximations}\label{sec:poly-apx}
In this section, we use Chebyshev polynomials and their
properties  to construct polynomial approximations to
some fundamental functions such as the monomial $x^s,$ and the
exponential function $e^{-x}.$ We also introduce the famous Markov's theorem, and prove a special case, which is then  used to prove lower bounds on the degree of best polynomial approximations. 

\subsection{Approximating $x^s$ on $[-1,1]$}
\label{sec:xk-approx}
Recall from Proposition \ref{prop:prelims:chebyshev-relation} that for any $d,$ we can write $xT_d(x) =
\nfrac{1}{2} \cdot (T_{d-1}(x) + T_{d+1}(x)).$
 If we let $Y$ be a
random variable that takes values $1$ and $-1,$ with probability
$\nfrac{1}{2}$ each, we can write $xT_d(x) = \av_Y[T_{d+Y}(x)].$ This
simple observation can be iterated to obtain an expansion of the
monomial $x^s$ for any positive integer $s$ in terms of the
Chebyshev polynomials.
Throughout this section, let $Y_1,Y_2,\ldots$ be i.i.d. variables
taking values $1$ and $-1$ each with probability $\nfrac{1}{2}$. For any
integer $s \ge 0,$ define the random variable $D_s \defeq \sum_{i=1}^s
Y_i$ where $D_0 \defeq 0.$
\begin{lemma}
\label{lem:xk-exact}
For any integer $s \ge 0,$ we have,
$\av_{Y_1,\ldots,Y_s} [T_{D_s}(x)] = x^s.$
\end{lemma}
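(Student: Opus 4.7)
The plan is to prove this by straightforward induction on $s$, using Proposition~\ref{prop:prelims:chebyshev-relation} as the engine. The key observation, already highlighted in the paragraph immediately preceding the lemma, is that the Chebyshev recurrence can be rewritten probabilistically as $xT_d(x) = \av_Y[T_{d+Y}(x)]$, where $Y$ is a uniform $\pm 1$ random variable. This is exactly a one-step transition of the simple random walk $D_s$ in the index of the Chebyshev polynomial, which suggests that multiplying by $x$ corresponds to taking one more step of the walk.

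For the base case $s=0$, we have $D_0 = 0$ deterministically, and $T_0(x) = 1 = x^0$, so the identity holds. For the inductive step, assume $\av_{Y_1,\ldots,Y_s}[T_{D_s}(x)] = x^s$. Then I would write
\[
x^{s+1} = x \cdot \av_{Y_1,\ldots,Y_s}[T_{D_s}(x)] = \av_{Y_1,\ldots,Y_s}[xT_{D_s}(x)],
\]
and apply Proposition~\ref{prop:prelims:chebyshev-relation} pointwise (for each realization of $D_s \in \mathbb{Z}$, which is legal because the recurrence has been extended to all integers via $T_d = T_{|d|}$) to replace $xT_{D_s}(x)$ by $\tfrac{1}{2}(T_{D_s+1}(x) + T_{D_s-1}(x)) = \av_{Y_{s+1}}[T_{D_s+Y_{s+1}}(x)]$. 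Since $Y_{s+1}$ is independent of $Y_1,\ldots,Y_s$, the tower/Fubini property of expectation gives
\[
x^{s+1} = \av_{Y_1,\ldots,Y_s}\av_{Y_{s+1}}[T_{D_s+Y_{s+1}}(x)] = \av_{Y_1,\ldots,Y_{s+1}}[T_{D_{s+1}}(x)],
\]
which completes the induction.

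The only subtle point, and the one I would be careful to flag explicitly, is that the argument relies on the Chebyshev relation holding \emph{for all integers $d$, including negative ones}. This is precisely the reason the authors extended $T_d$ to negative indices via $T_d \defeq T_{|d|}$; without this extension the step $T_{D_s-1}$ would be undefined whenever $D_s$ is zero or negative, and the random walk interpretation would break down. Everything else is mechanical, so I do not expect any real obstacle.
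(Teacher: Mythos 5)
Your proof is correct and follows essentially the same inductive argument as the paper, with the same base case and the same use of Proposition~\ref{prop:prelims:chebyshev-relation} in the inductive step. Your explicit note about why the extension of $T_d$ to negative indices is needed is a nice clarification, but the underlying argument is identical to the authors'.
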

\begin{proof}
We proceed by induction. For $s=0,$ $D_s = 0$ and, hence,
$\av[T_{D_s}(x)] = T_0(x) = 1 = x^0.$ Moreover, for any $s \ge 0,$
\begin{eqnarray*}
x^{s+1} \stackrel{\rm Induction}{=} x \cdot
\av_{Y_1,\ldots,Y_s}{T_{D_s}(x)} 
 & = & \av_{Y_1,\ldots,Y_s}[x \cdot T_{D_s}(x)]\\
& \stackrel{\rm Prop. \; \ref{prop:prelims:chebyshev-relation}}{=} &
\av_{Y_1,\ldots,Y_s}\left[\frac{T_{D_s+1}(x) +T_{D_s-1}(x)}{2}\right] \\
& =  &\av_{Y_1,\ldots,Y_s,Y_{s+1}}[T_{D_{s+1}}(x)].
\end{eqnarray*}
\end{proof}

\noindent
Lemma~\ref{lem:xk-exact} allows us to obtain polynomials that
approximate $x^s,$ but have degree close to  $\sqrt{s}.$ The main
observation is that the probability that $|D_s| \ge d$ is small. In particular, using Chernoff bounds, the probability that $|D_s| >
\sqrt{2s \log \nfrac{2}{\delta}} \defeq \widehat{d}$ is at most
$\delta.$ Moreover, since $|T_{D_s}(x)| \le 1$ for all $x \in [-1,1],$
we can ignore all terms with degree greater than $\widehat{d}$ without
incurring an error greater than $\delta.$ We now prove this formally.

Let $\mathbbm{1}_{|D_s| \le d}$ denote the indicator variable for the
event that $|D_s| \le d.$ Our polynomial of degree $d$ approximating
$x^s$ is obtained by truncating the above expansion to degree $d,$ \emph{i.e.},
\begin{equation}
\label{eq:approx-poly}
  p_{s,d}(x) \defeq \av_{Y_1,\ldots,Y_s}\left[ T_{D_s}(x) \cdot
    \mathbbm{1}_{|D_s| \le d} \right].
\end{equation}
Since $T_{D_s}(x)$ is a polynomial of degree $|D_s|,$ and the
indicator variable $\mathbbm{1}_{|D_s| \le d}$ is zero whenever $|D_s|
> d,$ we obtain that $p_{s,d}$ is a polynomial of degree at most $d.$
\begin{theorem}
\label{thm:xk-approx}
For any positive integers $s,d,$ the degree-$d$ polynomial $p_{s,d}$
satisfies
\[\sup_{x \in [-1,1]} |p_{s,d}(x) - x^s| \le 2e^{-\nfrac{d^2}{2s}}.\]
Hence, for any $\delta >0,$ and $d \ge \ceil{\sqrt{2s \log
    \nfrac{2}{\delta}}},$ we have $\sup_{x \in [-1,1]} |p_{s,d}(x) -
x^s| \le \delta.$
\end{theorem}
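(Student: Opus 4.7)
The plan is to directly compare the truncated expansion $p_{s,d}(x)$ with the exact expansion of $x^s$ provided by Lemma~\ref{lem:xk-exact}, and then estimate the tail via a Chernoff-type bound for the simple random walk $D_s$.

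First I would write
\[
x^s - p_{s,d}(x) \;=\; \av_{Y_1,\ldots,Y_s}\!\left[T_{D_s}(x)\right] - \av_{Y_1,\ldots,Y_s}\!\left[T_{D_s}(x)\cdot \mathbbm{1}_{|D_s|\le d}\right] \;=\; \av_{Y_1,\ldots,Y_s}\!\left[T_{D_s}(x)\cdot \mathbbm{1}_{|D_s| > d}\right],
\]
where the first equality is Lemma~\ref{lem:xk-exact}. Now fix any $x \in [-1,1]$. Proposition~\ref{prop:poly:prelims:chebyshev-bdd} gives $|T_{D_s}(x)| \le 1$ pointwise, so taking absolute values inside the expectation yields
\[
|x^s - p_{s,d}(x)| \;\le\; \av_{Y_1,\ldots,Y_s}\!\left[\mathbbm{1}_{|D_s| > d}\right] \;=\; \pr\!\left[|D_s| > d\right].
\]

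Next I would invoke a standard Chernoff/Hoeffding bound for a sum of $s$ i.i.d.\ $\pm 1$ variables to conclude $\pr[|D_s|>d] \le 2e^{-d^2/(2s)}$, which already proves the first claim uniformly in $x$. Because this step is the only nontrivial ingredient, I would either cite the bound or, to keep the survey self-contained, derive it in one line from the moment generating function $\E[e^{\lambda Y_i}] = \cosh \lambda \le e^{\lambda^2/2}$ together with Markov's inequality applied to $e^{\lambda D_s}$, optimizing $\lambda = d/s$ and then doubling to handle both tails.

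For the second statement, I would simply plug $d = \lceil \sqrt{2s\log(2/\delta)}\,\rceil$ into the bound $2e^{-d^2/(2s)}$ to get an upper bound of $\delta$. The only point that requires any care is making sure the analysis is valid for every $x\in[-1,1]$ simultaneously, which it is because the bound $|T_{D_s}(x)|\le 1$ and the subsequent tail estimate are independent of $x$. I do not anticipate any serious obstacle; the whole argument is essentially: exact identity, then truncation error controlled by $|T_k|\le 1$ and a random-walk tail bound.
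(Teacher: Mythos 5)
Your proof is correct and follows exactly the same route as the paper: apply Lemma~\ref{lem:xk-exact} to reduce to the tail $\av[T_{D_s}(x)\cdot\mathbbm{1}_{|D_s|>d}]$, bound it via $|T_{D_s}(x)|\le 1$ (Proposition~\ref{prop:poly:prelims:chebyshev-bdd}), and finish with the Chernoff bound $\pr[|D_s|>d]\le 2e^{-d^2/(2s)}$. The only cosmetic difference is that you sketch the MGF derivation of the tail bound, where the paper simply cites it.
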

\begin{proof}
Using Chernoff bounds (see \cite[Chapter 4]{Mitzenmacher2005}), we know that 
\[\av_{Y_1,\ldots,Y_s}\left[  \mathbbm{1}_{|D_s| > d} \right]  = \pr_{Y_1,\ldots,Y_s} [|D_s| > d] = \pr_{Y_1,\ldots,Y_s} \left[\left|
    \sum_{i=1}^s Y_i \right| > d \right] \le 2e^{-\nfrac{d^2}{2s}}. \]
Now, we can bound the error in approximating $x^s$ using $p_{s,d}.$
\begin{eqnarray*}
\sup_{x \in [-1,1]}  |p_{s,d}(x) - x^s| &
\stackrel{{\rm Lem}.~\ref{lem:xk-exact}} {=} & \sup_{x \in [-1,1]}
\left| \av_{Y_1,\ldots,Y_s}\left[ T_{D_s}(x) \cdot
    \mathbbm{1}_{|D_s| > d} \right] \right| \\
& \le & \sup_{x \in [-1,1]}
\av_{Y_1,\ldots,Y_s}\left[ \left| T_{D_s}(x) \right| \cdot
    \mathbbm{1}_{|D_s| > d} \right] \\
& \le & 
\av_{Y_1,\ldots,Y_s}\left[  \mathbbm{1}_{|D_s| > d} \cdot 
 \sup_{x \in [-1,1]}\left| T_{D_s}(x) \right| \right] 
 \stackrel{{\rm Prop.} ~\ref{prop:poly:prelims:chebyshev-bdd}}{\le} 
 \av_{Y_1,\ldots,Y_s}\left[  \mathbbm{1}_{|D_s| > d} \right] \le 2e^{-\nfrac{d^2}{2s}},
\end{eqnarray*}
which is smaller than $\delta$ for $d \ge \ceil{\sqrt{2s \log
    \nfrac{2}{\delta}}}$.
\end{proof}

\noindent
Over the next several sections, we explore several interesting
consequences of this seemingly simple approximation. In Section~\ref{sec:exp-poly-ub}, we use this approximation to give improved polynomial
approximations to the exponential function. In
Sections~\ref{sec:applications:cg} and \ref{sec:lanczos}, we use it to give  fast
algorithms for solving linear systems and computing eigenvalues. We prove that the
$\sqrt{s}$ dependence is optimal in Section~\ref{sec:lb}.

\subsection{Approximating $e^{-x}$ on $[0,b]$}
\label{sec:exp-poly-ub}
In this section we consider the problem of approximating $e^{-x}$ over the interval $[0,\infty).$ 
The first problem one faces when one looks for polynomial approximations for $e^{-x}$ over $[0,\infty)$ is that none exist. The reason is that a polynomial goes to infinity with $x$, while $e^{-x}$ goes to $0.$ However, if one restricts to approximating $e^{-x}$ over an interval $[0,b],$ then a simple approach is to
truncate the Taylor series expansion of $e^{-x}.$ It is easy to show
that using roughly $b+\log \nfrac{1}{\delta}$ terms in the
expansion suffices to obtain a $\delta$ approximation. 
We  show
that the approximation for $x^s$ we developed in the previous section
allows us to obtain a quadratic improvement over this simple approximation.
\begin{theorem}
\label{thm:exp-poly-ub}
For every $0 < b,$ and $0<\delta \le 1$, there exists a polynomial
${r}_{b,\delta}$ that satisfies, $\sup_{x \in [0,b]}
|e^{-x}-{r}_{b,\delta}(x)| \le \delta,$ and has degree
$O\left(\sqrt{\max\{b,\log \nfrac{1}{\delta} \} \cdot \log
    \nfrac{1}{\delta}} \right)$.
\end{theorem}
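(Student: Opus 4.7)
The plan is to reduce the problem on $[0,b]$ to one on $[-1,1]$ by an affine change of variable, and then approximate the new function using a two-step procedure: truncate its Taylor series to a moderate degree $n$, and then reduce each monomial in that truncated series to degree $d$ using Theorem \ref{thm:xk-approx}. Concretely, set $a \defeq b/2$ and $y \defeq 1 - x/a$, so that $x \in [0,b]$ corresponds to $y \in [-1,1]$ and $e^{-x} = e^{-a} e^{ay}$. Hence it suffices to construct a polynomial $g(y)$ of degree $d$ with $\sup_{y \in [-1,1]} |e^{ay} - g(y)| \le \delta\, e^a$, because then $r_{b,\delta}(x) \defeq e^{-a} g(1 - x/a)$ is a polynomial in $x$ of the same degree satisfying the conclusion of the theorem.

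For the Taylor step, let $q_n(y) \defeq \sum_{k=0}^n a^k y^k / k!$. Taylor's remainder together with Stirling's bound $(n+1)! \ge ((n+1)/e)^{n+1}$ give $|e^{ay} - q_n(y)| \le e^a (ea/(n+1))^{n+1}$ on $[-1,1]$; once $n \ge 2ea$ the factor $(ea/(n+1))^{n+1}$ decays geometrically, and taking $n = \Theta(\max\{b, \log(1/\delta)\})$ yields $|e^{ay} - q_n(y)| \le \delta e^a/2$. For the degree-reduction step, define
\[
\tilde q(y) \defeq \sum_{k=0}^{n} \frac{a^k}{k!}\, p_{k,d}(y),
\]
where $p_{k,d}$ is the degree-$d$ approximation to $y^k$ from Theorem \ref{thm:xk-approx}. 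Using the per-monomial bound $|y^k - p_{k,d}(y)| \le 2 e^{-d^2/(2k)}$ and the fact that $e^{-d^2/(2k)} \le e^{-d^2/(2n)}$ for $k \le n$,
\[
|q_n(y) - \tilde q(y)| \;\le\; 2 e^{-d^2/(2n)} \sum_{k=0}^{n} \frac{a^k}{k!} \;\le\; 2\, e^{\,a - d^2/(2n)},
\]
which is at most $\delta e^a/2$ once $d \ge \sqrt{2 n \log(4/\delta)}$.

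Setting $g \defeq \tilde q$ and combining the two estimates via the triangle inequality gives $|e^{ay} - \tilde q(y)| \le \delta e^a$ on $[-1,1]$. With $n = O(\max\{b, \log(1/\delta)\})$ and $d = O(\sqrt{n \log(1/\delta)})$, the resulting degree is $O(\sqrt{\max\{b, \log(1/\delta)\} \cdot \log(1/\delta)})$, as required. The main subtlety — and the reason the argument yields the advertised square-root improvement over the plain Taylor bound of $O(b + \log(1/\delta))$ — is that a term-by-term application of Theorem \ref{thm:xk-approx} threatens a blow-up of size $\sum_k a^k/k! = e^a$; the change of variables is calibrated so that the $e^{-a}$ prefactor exactly absorbs this, making the replacement error cost nothing in the end. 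The only routine calculation one needs to carry out carefully is verifying that the Taylor tail bound also scales as $\delta e^a$ for the chosen $n$, which follows from the Stirling estimate above.
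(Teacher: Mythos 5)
Your proposal is correct and follows essentially the same route as the paper's proof: both perform the affine change of variables sending $[0,b]$ to $[-1,1]$ with midpoint $b/2$, truncate the Taylor series of the shifted exponential to degree $t = \Theta(\max\{b,\log\nfrac{1}{\delta}\})$, and reduce each monomial $y^k$ to degree $d = \Theta(\sqrt{t\log\nfrac{1}{\delta}})$ via $p_{k,d}$ from Theorem~\ref{thm:xk-approx}, relying on the same cancellation between $e^{-b/2}$ and $\sum_k (b/2)^k/k!$ to control the substitution error. The only cosmetic differences are that you factor out $e^{-a}$ up front and target error $\delta e^a$ for $e^{ay}$, whereas the paper carries $e^{-\lambda}$ inside the Taylor coefficients, and you invoke the Lagrange remainder where the paper bounds the tail $\sum_{i>t}\lambda^i/i!$ directly.
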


\noindent

After a scaling and translation, it suffices to
approximate the function $e^{-\lambda x - \lambda}$ over the interval
$[-1,1],$ where $\lambda = \nfrac{b}{2}.$ As mentioned before, if we truncate
its Taylor expansion, we obtain $\sum_{i=0}^t e^{-\lambda}
\frac{(-\lambda)^i}{i!}x^i$ as a candidate approximating polynomial.
Our candidate polynomial is obtained by a general strategy that approximates each monomial
$x^i$ in this truncated series by the polynomial $p_{i,d}$ from the
previous section. Formally, 
\begin{equation*}
q_{\lambda,t,d}(x) \defeq \sum_{i=0}^t e^{-\lambda}\tfrac{(-\lambda)^i}{i!} p_{i,d}(x).
\end{equation*}
Since $p_{i,d}(x)$ is a polynomial of degree at most $d,$ the polynomial $q_{\lambda,t,d}(x)$
is also of degree at most $d.$ We now prove that for $d$
roughly $\sqrt{\lambda},$ the polynomial $q_{\lambda,t,d}(x)$ gives a good approximation
to $e^{-\lambda x}$ (for an appropriate choice of $t$).  
\begin{lemma}
\label{lem:poly:up-bd}
For every $\lambda > 0$ and $\delta \in (0,\nfrac{1}{2}],$ we can
choose $t = O(\max\{\lambda,\log \nfrac{1}{\delta}\}),$ and $d =
O\left(\sqrt{t \log
    \nfrac{1}{\delta}}\right)$ such that the polynomial
$q_{\lambda,t,d}$ defined above, $\delta$-approximates the function
$e^{- \lambda - \lambda x}$ over the interval $[-1,1],$ \emph{i.e.},
\[\sup_{x \in [-1,1]} \left|e^{-\lambda - \lambda x} - q_{\lambda,t,d}(x)
\right| \le \delta.\]
\end{lemma}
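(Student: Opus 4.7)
The plan is to apply the triangle inequality to split the error $|e^{-\lambda-\lambda x} - q_{\lambda,t,d}(x)|$ into two natural pieces: a \emph{Taylor truncation} error, coming from cutting the exact series for $e^{-\lambda-\lambda x}$ at degree $t$, and a \emph{monomial approximation} error, coming from replacing each monomial $x^i$ in the truncated series by the polynomial $p_{i,d}$ guaranteed by Theorem~\ref{thm:xk-approx}. Writing $e^{-\lambda - \lambda x} = \sum_{i=0}^\infty e^{-\lambda}\frac{(-\lambda)^i}{i!}\,x^i$, both pieces can be controlled separately and then combined.

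For the truncation error, on $|x| \le 1$ we have
\[
\Bigl|e^{-\lambda - \lambda x} - \sum_{i=0}^t e^{-\lambda}\tfrac{(-\lambda)^i}{i!}\,x^i\Bigr| \;\le\; \sum_{i > t} e^{-\lambda}\tfrac{\lambda^i}{i!},
\]
which is exactly $\Pr[\mathrm{Poisson}(\lambda) > t]$. Using Stirling's bound $i! \ge (i/e)^i$, for $i \ge 2e\lambda$ one has $\lambda^i/i! \le 2^{-i}$, so choosing $t = \Theta(\max\{\lambda, \log(1/\delta)\})$ makes this tail at most $\delta/2$.

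For the monomial approximation error, Theorem~\ref{thm:xk-approx} gives $\sup_{x \in [-1,1]} |x^i - p_{i,d}(x)| \le 2e^{-d^2/(2i)} \le 2e^{-d^2/(2t)}$ for every $1 \le i \le t$ (and the $i=0$ term vanishes since $p_{0,d} = 1 = x^0$). Hence
\[
\Bigl|\sum_{i=0}^t e^{-\lambda}\tfrac{(-\lambda)^i}{i!}\bigl(x^i - p_{i,d}(x)\bigr)\Bigr| \;\le\; 2e^{-d^2/(2t)} \sum_{i=0}^t e^{-\lambda}\tfrac{\lambda^i}{i!} \;\le\; 2e^{-d^2/(2t)},
\]
where the key simplification is that $\{e^{-\lambda}\lambda^i/i!\}_i$ is the Poisson$(\lambda)$ p.m.f.\ and therefore its absolute values sum to at most $1$, \emph{independently of} $\lambda$. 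This is what lets us pay only for the worst monomial. Setting $d = \lceil\sqrt{2t\log(4/\delta)}\rceil = O(\sqrt{t \log(1/\delta)})$ makes this piece at most $\delta/2$, and the two bounds combine to yield total error at most $\delta$.

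The main obstacle I would expect, and the step that warrants the most care, is the monomial-error estimate: a naive bound replaces $\sum_i e^{-\lambda}\lambda^i/i!$ by something that grows with $\lambda$ (for instance by using $|{-\lambda}|^i/i!$ without the $e^{-\lambda}$ normalizer), which would force $d$ to depend on $\lambda$ in a worse-than-$\sqrt{\lambda}$ way and destroy the quadratic improvement. Recognizing that the coefficients $e^{-\lambda}(-\lambda)^i/i!$ form a signed but $\ell_1$-bounded measure of total mass $1$ is exactly what makes the Chebyshev-based approximation of $x^i$ usable term-by-term. The remaining technicalities are just verifying the Stirling/Chernoff tail bound for Poisson$(\lambda)$ and tracking constants in the choices of $t$ and $d$.
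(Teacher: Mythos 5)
Your proposal is correct and follows essentially the same route as the paper: the same split into a Poisson-tail truncation error and a term-by-term monomial error, the same use of Theorem~\ref{thm:xk-approx} with the bound $2e^{-d^2/(2i)} \le 2e^{-d^2/(2t)}$, the same observation that the Poisson weights sum to $1$, and the same Stirling-based tail estimate (the paper takes $t \ge e^2\lambda$ where you take $t \ge 2e\lambda$, but this is an immaterial constant). Your closing remarks about the $\ell_1$-boundedness of the coefficients being the crux are an accurate reading of why the argument works.
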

\begin{proof}
We first expand the function $e^{-\lambda - \lambda x}$ via its Taylor
series expansion around $0,$ and then split it into two parts, one
containing terms with degree at most $t,$ and the remainder.
\begin{eqnarray*}
  \sup_{x \in [-1,1]} \left|e^{-\lambda - \lambda x} - q_{\lambda,t,d}(x)
  \right| 
  & \le & \sup_{x \in [-1,1]} \left|  \sum_{i=0}^t e^{-\lambda} \tfrac{(-\lambda)^i}{i!}
    (x^{i} - p_{i,d}(x)) \right| +
  \sup_{x \in [-1,1]} \left|  \sum_{i=t+1}^\infty e^{-\lambda}
    \tfrac{(-\lambda)^i}{i!}   x^{i}\right| \\
  & \le & \sum_{i=0}^t e^{-\lambda} \tfrac{\lambda^i}{i!}
  \sup_{x \in [-1,1]} \left|x^{i} - p_{i,d}(x) \right| + e^{-\lambda} \sum_{i=t+1}^\infty 
  \tfrac{\lambda^i}{i!}.
\end{eqnarray*}
From Theorem~\ref{thm:xk-approx}, we know that $p_{i,d}$ is a good
approximation to $x^i,$ and we can use it to bound the first error
term.
\begin{align*}
  \sum_{i=0}^t e^{-\lambda} \tfrac{\lambda^i}{i!}  \sup_{x \in [-1,1]}
  \left|x^{i} - p_{i,d}(x) \right| & \le \sum_{i=0}^t e^{-\lambda}
  \tfrac{\lambda^i}{i!} \cdot 2e^{-\nfrac{d^2}{2i}} \\
  & \le 2e^{-\nfrac{d^2}{2t}} \cdot \sum_{i=0}^\infty e^{-\lambda}
  \tfrac{\lambda^i}{i!}  = 2e^{-\nfrac{d^2}{2t}}.
\end{align*}
For the second term, we use the lower bound $i! \ge
\left(\nfrac{i}{e}\right)^i,$ and assume $t \ge \lambda e^2$ to
obtain
\[ e^{-\lambda} \sum_{i=t+1}^\infty \tfrac{\lambda^i}{i!} \le
e^{-\lambda} \sum_{i=t+1}^\infty \left(\tfrac{\lambda e}{i} \right)^i
\le e^{-\lambda} \sum_{i=t+1}^\infty e^{-i} \le e^{-\lambda -t}.\]
Thus, if we let $t = \ceil{\max\{\lambda e^2, \log \nfrac{2}{\delta}\}}$ and
$d = \ceil{\sqrt{2t\log \nfrac{4}{\delta}}},$ combining the above and using
$\lambda > 0,$ we obtain 
$ \sup_{x \in [-1,1]} \left|e^{-\lambda - \lambda x} - q_{\lambda,t,d}(x)
\right| \le 2e^{-\nfrac{d^2}{2t}} + e^{-\lambda -t} \le
\nfrac{\delta}{2} + \nfrac{\delta}{2} \le \delta.$
\end{proof}

\noindent
Now, we can  complete the proof of
Theorem~\ref{thm:exp-poly-ub}.
\begin{proof}[Proof of Theorem \ref{thm:exp-poly-ub}]
  Let $\lambda \defeq \nfrac{b}{2},$ and let $t$ and $d$ be given by
  Lemma~\ref{lem:poly:up-bd} for the given value of $\delta$.
 Define $r_{b,\delta} \defeq 
q_{\lambda,t,d}\left( \nfrac{1}{\lambda} \cdot
  \left(x-\nfrac{b}{2}\right) \right),$ where $q_{\lambda,t,d}$ is the
polynomial given by Lemma~\ref{lem:poly:up-bd}. Then, 
\begin{eqnarray*}
\sup_{x \in [0,b]} \left| e^{-x} - r_{b,\delta}(x) \right| =
\sup_{x \in [0,b]} \left| e^{-x} -
q_{\lambda,t,d}\left( \nfrac{1}{\lambda} \cdot
\left(x-\nfrac{b}{2}\right) \right) \right| 
 = \sup_{z \in [-1,1]} \left| e^{-\lambda z -
\lambda} - q_{\lambda,t,d}\left(z \right) \right| \le \delta,
\end{eqnarray*}
where the last inequality follows from the guarantee of
Lemma~\ref{lem:poly:up-bd}.  The degree of $r_{b,\delta}(x)$ is the
same as that of $q_{\lambda,t,d}(x),$ \emph{i.e.}, $d =
O\left(\sqrt{\max\{b,\log \nfrac{1}{\delta}\} \cdot \log
    \nfrac{1}{\delta}} \right)$.
\end{proof}

\noindent
Theorem \ref{thm:exp-poly-ub} is implicit in the work by Hochbruck and
Lubich~\cite{HochbruckL97}. A weaker version of this theorem has also been proved
in~\cite{OrecchiaSV12}.

\subsection{Markov's Theorem and Lower Bounds for Polynomial Approximations}
\label{sec:lb}
In this section, we  prove that the bounds in the previous
section are essentially optimal. Specifically, we show that the
degrees of the approximating polynomials require a $\sqrt{s}$
dependence for approximating $x^s$ on $[-1,1],$ and a $\sqrt{b}$
dependence for approximating $e^{-x}$ on $[0,b].$ Such lower bounds
often use the following well known Markov's theorem from
approximation theory.
\begin{theorem}[Markov's Theorem, see \cite{Cheney66}]
\label{thm:markov}
Let $p$ be a degree-$d$ polynomial such that $|p(x)| \le 1$ for any $x
\in [-1,1].$ Then, for all $x \in [-1,1],$ the derivative of $p,$ $p^{(1)}$ 
satisfies $|p^{(1)}(x)| \le d^2.$
\end{theorem}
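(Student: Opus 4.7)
The plan is to prove Markov's inequality for arbitrary $x \in [-1,1]$ by a polynomial interpolation argument at the Chebyshev extrema, reducing the desired bound on $|p'(y)|$ to a ``Lebesgue-function'' type estimate whose sharp value is exactly $d^2$. The strategy leverages the extremal properties of $T_d$ already developed in the excerpt (Propositions~\ref{prop:poly:prelims:chebyshev-bdd} and~\ref{prop:chebyshev:extremal-growth}).

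Concretely, I would fix an arbitrary $y \in [-1,1]$ and interpolate $p$ at the $d+1$ Chebyshev extrema $x_k = \cos(k\pi/d)$ for $k=0,\ldots,d$ -- the points where $T_d$ alternates between $\pm 1$. Lagrange interpolation gives $p(x) = \sum_{k=0}^{d} p(x_k)\,L_k(x)$ with $L_k(x) = \prod_{j\neq k}(x-x_j)/(x_k-x_j)$. Differentiating and using $|p(x_k)| \le 1$ yields $|p'(y)| \le \sum_{k=0}^{d} |L_k'(y)|$, so everything reduces to showing
\[
\sum_{k=0}^{d} |L_k'(y)| \le d^2 \qquad \text{for all } y \in [-1,1].
\]
To obtain this, I would use that the nodal polynomial $\pi_d(x) = \prod_k (x-x_k)$ is a scalar multiple of $(1-x^2)T_d'(x)$ (since the extrema $x_k$ are precisely the zeros of this product, by Proposition~\ref{prop:chebyshev:cosine}). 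Writing $L_k(x) = \pi_d(x)/((x-x_k)\pi_d'(x_k))$ and differentiating expresses $L_k'(y)$ in closed form in terms of $T_d(y), T_d'(y)$ at $y$ and the values $T_d'(x_k)$ at the nodes. The signs of the resulting terms alternate in a pattern dictated by the alternation property of $T_d$, and combining this sign pattern with the bound $|T_d(y)| \le 1$ collapses the sum to the value $d^2$, attained exactly when $p=T_d$.

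The main obstacle is the last step: extracting the constant $d^2$ (rather than a weaker $d^2 \log d$) from the Lebesgue-type sum requires a careful sign analysis, not merely a triangle-inequality bound. To side-step this I would fall back on a two-stage argument: first establish the endpoint bound $|p'(\pm 1)| \le d^2$ directly from Proposition~\ref{prop:chebyshev:extremal-growth} (which identifies $T_d$ as the extremal polynomial for growth outside $[-1,1]$ and hence, via a limiting comparison of $(p(y)-p(1))/(y-1)$ with $(T_d(y)-T_d(1))/(y-1)$ as $y \to 1^+$, for the derivative at the endpoint); then transfer the endpoint bound to an arbitrary $y \in (-1,1)$ by applying the endpoint case to the rescaled polynomial $q(t) = p(y + (1-|y|)t)$ on $[-1,1]$, which has the same degree $d$ and the same sup-norm bound. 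Either route ultimately rests on the extremality of $T_d$, which is the deep content of Markov's theorem and the reason the factor $d^2$ is sharp.
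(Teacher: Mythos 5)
First, note what the paper actually does here: it does \emph{not} prove Theorem~\ref{thm:markov} in full --- it cites \cite{Cheney66} and proves only the endpoint special case, Lemma~\ref{lem:markov-sp-case} ($|q^{(1)}(1)|\le d^2$), via an LP relaxation at the Chebyshev extrema together with an explicit dual certificate summing to $d^2$. Your proposal aims at the full theorem, and its first route (Lagrange interpolation at the extrema $x_k=\cos(k\pi/d)$, reducing to $\sum_{k}|L_k'(y)|\le d^2$) is the skeleton of the classical proof. But the step you flag as "the main obstacle" is genuinely missing, not just tedious: the sign pattern that collapses $\sum_k|L_k'(y)|$ to $\bigl|\sum_k(-1)^kL_k'(y)\bigr|=|T_d'(y)|\le d^2$ holds only for $y$ near the endpoints (roughly $|y|\ge\cos(\pi/(2d))$). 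In the central part of the interval the $L_k'(y)$ do not alternate in the required way, and the classical completion switches to an entirely different tool --- Bernstein's inequality $|p^{(1)}(y)|\le d/\sqrt{1-y^2}$, which gives $\le d^2$ precisely where the sign argument fails because there $\sqrt{1-y^2}\ge\sin(\pi/(2d))\ge\nfrac{1}{d}$. Bernstein's inequality appears nowhere in your proposal, so route 1 as written does not close.

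The fallback route has two concrete failures. (i) The endpoint bound cannot be extracted from Proposition~\ref{prop:chebyshev:extremal-growth} by the limiting difference quotient you describe: the proposition gives $|p(\xi)|\le T_d(\xi)$ for $\xi>1$, so the comparison quotient is $(T_d(\xi)-p(1))/(\xi-1)$, which tends to $+\infty$ whenever $p(1)<1$; the argument only works in the special case $p(1)=\pm1$. (The correct endpoint argument is exactly the paper's Lemma~\ref{lem:markov-sp-case}, which you could simply invoke.) (ii) The rescaling step cannot recover the sharp constant. As literally written, $q(t)=p(y+(1-|y|)t)$ sends $t=0$, not an endpoint, to $y$, so the endpoint lemma says nothing about $p^{(1)}(y)$. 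Fixing the map so that $y$ becomes an endpoint of the image interval forces the scale factor to be at most $(1+|y|)/2$, and you obtain only $|p^{(1)}(y)|\le \frac{2}{1+|y|}\,d^2$, which is $2d^2$ at $y=0$. This loss of a factor $2$ is intrinsic to any rescaling of the endpoint bound; the sharp interior constant really does require Bernstein's inequality or the careful sign analysis of route 1. So neither route, as proposed, yields the stated $d^2$ on all of $[-1,1]$.
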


\noindent
In fact, the above theorem is another example of an extremal property
of the Chebyshev polynomials since they can be seen to be a tight
example for this theorem. The above theorem also generalizes to higher
derivatives, where it implies that for any such $p,$ we have
$|p^{(k)}(x)| \le \sup_{y \in [-1,1]} |T^{(k)}_d(y)|,$ for any $k$ and
$x \in [-1,1]$~\cite[Section 1.2]{Rivlin69}. This was proved by V. A. Markov ~\cite{MarkovVA1892}.

\vspace{2mm}
We sketch a proof of the following special case of
Markov's theorem, based on the work by Bun and
Thaler~\cite{BunT13}, that  suffices for proving our lower bounds.
\begin{lemma}
\label{lem:markov-sp-case}
For any degree-$d$ polynomial $q$ such that $|q(x)| \le 1$ for all $x
\in [-1,1],$  $|q^{(1)}(1)| \le d^2.$
\end{lemma}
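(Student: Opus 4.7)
The plan is to compare $q$ to the Chebyshev polynomial $T_d$ and exploit the fact that $T_d$ oscillates between $\pm 1$ at the $d+1$ extremal nodes $\xi_k \defeq \cos(k\pi/d)$ in $[-1,1]$. Since the statement is symmetric under $q\mapsto -q$, it suffices to prove $q^{(1)}(1)\le d^2$; the matching lower bound then follows by applying the result to $-q$. Throughout, I will use the identity $T_d^{(1)}(1) = d^2$, obtained by differentiating the relation $T_d(\cos\theta)=\cos(d\theta)$ from Proposition~\ref{prop:chebyshev:cosine} and letting $\theta\to 0$.

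The first step is to form $r(x)\defeq T_d(x)-q(x)$, a polynomial of degree at most $d$, and locate its zeros. Using $T_d(\xi_k)=(-1)^k$, we get $r(\xi_k)=(-1)^k-q(\xi_k)$, which is non-negative for even $k$ and non-positive for odd $k$ because $|q(\xi_k)|\le 1$. An intermediate-value argument across the $d$ consecutive pairs of nodes produces at least $d$ zeros of $r$ in $[-1,1]$.

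The second step is to identify the sign of the leading coefficient $c$ of $r$. Writing $q(x)=a_d x^d + (\text{lower order})$ and using that $T_d$ has leading coefficient $2^{d-1}$, we have $c=2^{d-1}-a_d$. When $a_d\ne 0$, dividing $q$ by $a_d$ yields a monic degree-$d$ polynomial with sup-norm $1/|a_d|$, so Theorem~\ref{thm:chebyshev} gives $1/|a_d|\ge 2^{-d+1}$, i.e., $|a_d|\le 2^{d-1}$, hence $c\ge 0$. Equality $a_d=2^{d-1}$ forces $q=T_d$ by the uniqueness of the Chebyshev-extremal monic polynomial, and then $q^{(1)}(1)=d^2$ directly. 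In the remaining cases $c>0$, so $r$ has degree exactly $d$ and therefore exactly $d$ complex zeros; combined with the $d$ real zeros from step one, \emph{all} its zeros lie in $[-1,1]$. Writing $r(x)=c\prod_{i=1}^d(x-\eta_i)$ with $\eta_i\in[-1,1]$, I differentiate to obtain
\[
r^{(1)}(1)=c\sum_{j=1}^d\prod_{i\ne j}(1-\eta_i)\ \ge\ 0,
\]
since every factor $1-\eta_i$ is non-negative. Therefore $q^{(1)}(1)=T_d^{(1)}(1)-r^{(1)}(1)\le d^2$.

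The delicate point I expect to spend the most care on is the zero-counting in step one when $q(\xi_k)=\pm 1$ for some $k$, since the bare IVT argument only counts strict sign changes and ties at the nodes can in principle drop the count. I plan to sidestep this by a small perturbation: replace $q$ by $(1-\eta)q$, which still lies in the unit ball on $[-1,1]$ and now satisfies $r_\eta(\xi_k)=(-1)^k-(1-\eta)q(\xi_k)$, strictly positive (resp. negative) for even (resp. odd) $k$. The perturbed polynomial thus falls in the generic case treated above, yielding $(1-\eta)q^{(1)}(1)\le d^2$, and letting $\eta\to 0^+$ finishes the proof.
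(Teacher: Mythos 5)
Your proof is correct, and it takes a genuinely different route from the paper's. The paper casts the bound on $q^{(1)}(1)$ as the value of a linear program (maximize $c_1$ in the Taylor expansion of $q$ around $x=1$, subject to the sup-norm constraint), relaxes the infinitely many constraints to just the $d+1$ Chebyshev extremal nodes $\cos(k\pi/d)$, passes to the dual, and certifies the bound $d^2$ by exhibiting an explicit dual-feasible vector whose entries involve $\csc^2(\nfrac{j\pi}{2d})$; this follows the Bun--Thaler approach cited in the text and, in principle, generalizes to other evaluation points once one guesses the right certificate. Your argument is the classical \emph{comparison} proof: set $r \defeq T_d - q$, use the sign alternation of $r$ at the same extremal nodes to place all $d$ roots $\eta_i$ of $r$ in $[-1,1]$, observe that the leading coefficient $c = 2^{d-1}-a_d$ is non-negative via Theorem~\ref{thm:chebyshev}, and then read off $r^{(1)}(1) = c\sum_j \prod_{i\neq j}(1-\eta_i) \ge 0$ because every factor $1-\eta_i$ is non-negative, whence $q^{(1)}(1) \le T_d^{(1)}(1) = d^2$. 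This is more elementary---no LP duality and no trigonometric identities beyond $T_d^{(1)}(1)=d^2$---and it makes the extremality of $\pm T_d$ transparent. The tradeoff is that it leans on $1$ being an endpoint of the interval, where the factors $1-\eta_i$ all have one sign; for an interior point this step fails and one needs the full machinery of A.~A.~Markov's theorem, which is why your method does not directly give Theorem~\ref{thm:markov}. The two delicate points you flag---root-counting ties when $q(\xi_k)=\pm 1$, and the degeneracy $a_d = 2^{d-1}$ where $r$ drops degree---are both real, and your $(1-\eta)q$ perturbation dispatches both at once (it forces strict sign alternation and also keeps the leading coefficient of $r_\eta$ strictly positive since $|a_d| \le 2^{d-1}$), so the proof stands.
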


\noindent
We now present the main idea behind both of the lower bound proofs. Say
$p(x)$ is the approximating polynomial. We start by using the bound on
the approximation error to bound the range of values taken by the
polynomial in the interval. The crux of both the proofs  is to
show that there exists a point $t$ in the approximation interval such
that $|p^{(1)}(t)|$ is large. Once we have such a lower bound on the
derivative of $p$, a lower bound on the degree of $p$ follows by
applying the above lemma to a polynomial $q$ obtained by a linear
transformation of the input variable that maps $p(t)$ to $q(1).$ In
order to show the existence of a point with a large derivative, we use
the Mean Value theorem and the fact that our polynomial is a good approximation
to the function of interest.  We now use this strategy to show that
any polynomial that approximates $e^{-x}$ on $[0,b]$ to within
$\nfrac{1}{8}$ must have degree at least $\nfrac{\sqrt{b}}{3}.$

\begin{theorem}
\label{lem:exp-poly-lb}
For every $b \ge 5,$ and $\delta \in (0,\nfrac{1}{8}],$ any
polynomial $p(x)$ that approximates $e^{-x}$ uniformly over the
interval $[0,b]$ up to an error of $\delta,$ must have degree at least
$\nfrac{1}{3}\cdot\sqrt{b}\ .$
\end{theorem}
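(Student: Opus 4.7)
The plan is to execute the three-step template laid out in the paragraph before the theorem: (i) use the uniform error bound to pin down two values of $p$, (ii) invoke the Mean Value Theorem to produce a point $t$ in $[0,b]$ where the derivative $|p^{(1)}(t)|$ is at least a positive absolute constant, and (iii) apply Lemma~\ref{lem:markov-sp-case} after a linear change of variables that carries $t$ to the endpoint $1$.

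First I would choose the evaluation points $x=0$ and $x=1$, which are the most informative since $e^{-x}$ has its largest slope near the origin. The hypothesis $|p(x)-e^{-x}|\le \delta \le \tfrac18$ gives $p(0)\ge \tfrac78$ and $p(1)\le e^{-1}+\tfrac18$, so that $p(0)-p(1)\ge \tfrac34-e^{-1}$, a positive absolute constant. By the Mean Value Theorem there is a $t\in(0,1)$ with
\[
|p^{(1)}(t)| \;\ge\; \tfrac34 - e^{-1}.
\]

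Next I would set up the change of variables. On the subinterval $[t,b]\subset[0,b]$ the bound $|p(x)|\le e^{-x}+\delta \le 1+\tfrac18=\tfrac98$ holds, so the rescaled polynomial
\[
q(y)\;\defeq\; \tfrac{8}{9}\,p\!\left(t+\tfrac{b-t}{2}(1-y)\right), \qquad y\in[-1,1],
\]
satisfies $|q(y)|\le 1$ on $[-1,1]$ and has the same degree $d$ as $p$. The linear substitution sends $y=1$ to $x=t$, so by the chain rule $q^{(1)}(1) = -\tfrac{8}{9}\cdot\tfrac{b-t}{2}\,p^{(1)}(t) = -\tfrac{4(b-t)}{9}\,p^{(1)}(t)$. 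Applying Lemma~\ref{lem:markov-sp-case} to $q$ yields $|q^{(1)}(1)|\le d^2$, and so
\[
|p^{(1)}(t)| \;\le\; \tfrac{9}{4}\cdot\tfrac{d^2}{b-t}.
\]

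Combining with the MVT lower bound and using $t<1$ together with $b\ge 5$ (so $b-t>b-1\ge \tfrac{4}{5}b$) gives $d^2 \ge \tfrac{4}{9}(\tfrac34-e^{-1})(b-t)\ge \tfrac{b}{9}$, i.e.\ $d\ge \sqrt{b}/3$. The only obstacle is bookkeeping of constants: the target $\sqrt{b}/3$ is tight enough that one must choose the MVT interval $[0,1]$ (rather than something like $[0,2]$) and keep $b\ge 5$ in play when bounding $b-t$ from below — picking different evaluation points degrades the constant below $1/3$. The rest is routine chain-rule computation and rearrangement.
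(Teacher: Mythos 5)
Your argument is correct and takes essentially the same route as the paper's: use the uniform error bound plus the Mean Value Theorem to produce a point $t$ with a large derivative $|p^{(1)}(t)|$, then apply Lemma~\ref{lem:markov-sp-case} after an affine change of variable carrying $t$ to the endpoint $1$. The only differences are choices of constants --- the paper applies MVT on $[0,\log_e 4]$ and uses the symmetric affine normalization $q = \tfrac{1}{1+2\delta}\bigl(2p(\cdot)-1\bigr)$, which maps $p$'s range $[-\delta,1+\delta]$ onto $[-1,1]$ exactly, whereas you use $[0,1]$ and a one-sided scaling by $\tfrac{8}{9}$; your closing remark that the interval $[0,1]$ is forced is an artifact of this coarser rescaling (with the paper's sharper normalization, $[0,2]$ would still clear the $\tfrac{1}{3}\sqrt{b}$ target).
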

\begin{proof}
Suppose $p$ is a degree-$d$ polynomial that is a uniform approximation
to $e^{-x}$ over the interval $[0,b]$ up to an error of
$\delta$. Thus, for all $x \in [0,b],$ we have $e^{-x} - \delta \le p(x) \le
e^{-x} + \delta.$ Hence, $\sup_{x \in [0,b]} p(x) \le 1 +\delta$ and
$\inf_{x \in [0,b]} p(x) \ge - \delta.$

\vspace{2mm}
Assume that $\delta \le \nfrac{1}{8},$ and $b \ge 5 > 3\log_e 4.$ Applying
the Mean Value theorem (see \cite[Chapter 5]{rudin-principles}) on the interval $[0,\log_e 4],$ we know
that there exists $t \in [0,\log_e 4],$ such that 
\begin{align*}
|p^{(1)}(t)| = \left\lvert \frac{p(\log_e 4) - p(0)}{\log_e 4} \right
\rvert &\ge \frac{(1 -\delta) - (e^{ -\log_e 4} + \delta) }{\log_e 4}
\ge \frac{1}{2\log_e 4}
\end{align*}

\noindent
Consider the polynomial $q(x) \defeq \frac{1}{1+2\delta} \left(2
  p\left(\frac{t(1+x) + b(1-x)}{2}\right)-1\right).$ Since $p([0,b])
\subseteq [-\delta,1+\delta],$ we obtain $|q(x)| \le 1$ for all $x \in
[-1,1].$ Thus, using Lemma~\ref{lem:markov-sp-case}, we obtain $|q^{(1)}(1)|
\le d^2.$ This implies that  $|q^{(1)}(1)|=(b-t)|p^{(1)}(t)| \le d^2(1+2\delta).$ Plugging in
the lower bound on $|p^{(1)}(t)|$ proved above and rearranging, we obtain
$ d \ge \sqrt{\frac{b-t}{2\cdot \nfrac{5}{4}\cdot\log_e 4} } \ge
\frac{1}{3}\cdot\sqrt{b},$ where the last step uses $t \le \log_e 4
\le \nfrac{b}{3}.$
\end{proof}

\noindent
A similar proof strategy shows the tightness of the $\sqrt{s}$ bound for
approximating $x^s$ on the interval $[-1,1].$ In this case, we show
that there exists a $t \in [1-\nfrac{1}{s},1]$ such that $|p^{(1)}(t)| \ge
\Omega(s)$ (assuming $\delta$ small enough).  The lower bound now
follows immediately by applying Lemma~\ref{lem:markov-sp-case} to the
polynomial $\nfrac{1}{1+\delta} \cdot p(tx).$
Now, we give a proof of the special case of Markov's theorem
given by Lemma~\ref{lem:markov-sp-case}.
\begin{proof}[Proof of Lemma~\ref{lem:markov-sp-case}]
If we expand the polynomial $q$ around $x=1$ as follows, $q(x) = c_0 + c_1
(x-1) + \ldots + c_d (x-1)^d,$ we have $q^{(1)}(1) = c_1.$ Hence, we can
express the upper bound on $q^{(1)}(1)$ as the optimum of the following
linear program where the $c_i$s are variables and there are an infinite number of
constraints: 
$$\max c_1 \ \  {\rm s.t.} \ \ \  \left| \sum_{i=0}^d c_i(x-1)^i
\right| \le 1 \ \ \ \forall x \in [-1,1].$$ Since $(-c_i)_i$ is a feasible
solution whenever $(c_i)_i$ is, it suffices to maximize $c_1$ instead
of $|c_1|.$

\vspace{2mm}
Now, we relax this linear program, and drop all constraints except
for $x = \cos(\nfrac{k\pi}{d})$ for integral $k$ between $0$ and $d:$ \emph{i.e.},
$\max c_1$ subject to $ \sum_{i=0}^d c_i(x-1)^i \le 1$ for $x =
\cos(\nfrac{k\pi}{d})$ with even $k,$ and $\sum_{i=0}^d c_i(x-1)^i \ge
-1$ for $x = \cos(\nfrac{k\pi}{d})$ with odd $k.$\footnote{Though
  these particular values seem magical, they are exactly the extremal
  points of the Chebyshev polynomial $T_d(x),$ which is known to be a
  tight example for Markov's theorem.} It suffices to show
that the optimum of this linear program is bounded above by $d^2.$
We  show this by constructing a feasible solution to its dual
program.
We can write the dual to the restricted linear program as follows: 
$$\min
\sum_{i=0}^d y_i \ \ {\rm s.t.} \ \  Ay = e_1 \ \  {\rm and} \ \  y_j \ge 0 \ \   \forall j.$$
Here $e_1 \in \rea^{d+1}$ is the vector $(0,1,0,\ldots,0)^\top,$ and
$A$ is the matrix defined by $A_{ij} \defeq (-1)^j (\cos (\nfrac{j\pi}{d})-1)^i,$
where $i=0,\ldots,d,$ and $j=0,\ldots,d.$
Using  elementary trigonometric identities~(see~\cite{BunT13}), one
can show that
$$\textstyle y =
\left(\frac{2d^2+1}{6}, \csc^2 \frac{\pi}{2d},\csc^2 \frac{\pi}{d},
  \ldots, \csc^2 \frac{(d-1)\pi}{2d}, \frac{1}{2}\right)^\top$$ is, in
fact, the unique solution to $Ay=e_1,$ and satisfies $\sum y_i = d^2.$
It trivially satisfies the positivity constraints and, hence, by weak
duality implies an upper bound of $d^2$ on the optimum value of primal
linear program.
\end{proof}

\section{Rational Approximations}\label{sec:rational-apx}

In this section we introduce approximations to functions by rational functions such as $\frac{p(x)}{q(x)}$ where $p,q$ are polynomials. The error in approximation is again measured as the worst error in the interval of interest and we would be interested in trade-off  between the error and the maximum of the degrees of $p,q.$
The surprising power of rational approximations was first demonstrated by  Newman~\cite{Newman64} who showed  that rational
approximations can be much more powerful than polynomial
approximations. He proved that the best degree-$d$ rational
approximation to the function $|x|$ on $[-1,1]$ achieves an
approximation error of $e^{-\Theta(\sqrt{d})}.$ Contrast this with the fact that  the best degree-$d$ polynomial approximation to $|x|$
on $[-1,1]$ only achieves an error of $\Theta(\nfrac{1}{d}).$

\vspace{2mm}
Unlike the lower bound results proved in the previous section, we show that rational functions can provide
approximations to $e^{-x}$ that hold for all $x \geq 0,$ and achieve an
approximation error that is exponentially small in their degree. We also show how to construct such rational approximations which in addition have negative poles. Such rational functions are extremely useful in applications, see Section \ref{sec:applications:matrix-exp}.

\subsection{Approximating $e^{-x}$ on $[0,\infty)$}
\label{sec:rational:vanilla}
In this section we  show that, somewhat surprisingly,  there exist simple rational
functions of the form $\nfrac{1}{p(x)},$ where $p$ is a low degree
polynomial, that approximate $e^{-x}$ over $[0,\infty),$ up to an approximation error that decays
exponentially with the degree of the approximation. We  also show
that no rational approximation of the form $\nfrac{1}{p(x)}$ can do
much better.

\subsubsection*{Upper Bound}
In the last section, we showed that the partial sums of the Taylor
series expansion of $e^{-x}$ requires a large degree in order to
provide a good approximation over a large interval. We now show that if we instead truncate the Taylor series
expansion of $e^x = \nfrac{1}{e^{-x}}$ to degree $d,$ and take its
reciprocal, we can approximate $e^{-x}$ on $[0,\infty)$ up to
$2^{-\Omega(d)}$ error.
We let $S_d(x) \defeq \sum_{k=0}^d
\frac{x^k}{k!}.$ 
\begin{theorem} 
\label{thm:exp-rational}
For all integers $d \ge 0,$
\[\sup_{x \in [0,\infty)} \left| \frac{1}{S_d(x)} - e^{-x}\right| \le
2^{-\Omega(d)}.\]
Hence, for any $\delta > 0,$ we have a rational function of degree
$O(\log \nfrac{1}{\delta})$ that is a $\delta$-approximation to
$e^{-x}$. 
\end{theorem}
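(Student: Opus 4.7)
The plan is to start from the identity
\[
\frac{1}{S_d(x)} - e^{-x} \;=\; \frac{e^x - S_d(x)}{e^x\, S_d(x)} \;=\; \frac{R_d(x)}{e^x\, S_d(x)}, \qquad R_d(x) \defeq \sum_{k \ge d+1} \frac{x^k}{k!},
\]
which is non-negative for $x \ge 0$ since $S_d(x), e^x \ge 1$ there. I would split $[0,\infty)$ at the threshold $x = d/4$, bound $R_d(x)$ directly in the small-$x$ regime, and bound $1/S_d(x)$ and $e^{-x}$ separately via the triangle inequality in the large-$x$ regime. Stirling's approximation, in the form $n! = \Theta(\sqrt{n}\,(n/e)^n)$, is the common ingredient.

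For the small-$x$ regime $0 \le x \le d/4$, the series $R_d(x)$ is geometrically dominated by its leading term: the ratio between consecutive terms is $x/(k+1) \le 1/4$ for all $k \ge d$, so $R_d(x) \le \tfrac{4}{3}\, x^{d+1}/(d+1)!$. Dropping the factors $e^x, S_d(x) \ge 1$ in the denominator and using $(d+1)! \ge ((d+1)/e)^{d+1}$ gives
\[
\frac{R_d(x)}{e^x\, S_d(x)} \;\le\; \frac{4}{3}\!\left(\frac{e\,x}{d+1}\right)^{\!d+1} \;\le\; \frac{4}{3}\,\bigl(e/4\bigr)^{d+1} \;=\; 2^{-\Omega(d)},
\]
since $e/4 < 1$.

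In the large-$x$ regime $x \ge d/4$, the bound $e^{-x} \le e^{-d/4} = 2^{-\Omega(d)}$ is immediate, and it remains to control $1/S_d(x)$. I would further split this at $x = d$. For $d/4 \le x \le d$, the index $k \defeq \lfloor x \rfloor$ lies in $\{1,\dots,d\}$ (for $d$ large enough), so $S_d(x) \ge x^k/k!$; combining $x \ge k$ with Stirling's upper bound $k! \le e\sqrt{k}\,(k/e)^k$ gives $x^k/k! = \Omega(e^x/\sqrt{x})$, and hence $1/S_d(x) = O(\sqrt{x}\,e^{-x}) = O(\sqrt{d})\,e^{-d/4} = 2^{-\Omega(d)}$. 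For $x \ge d$, taking $k = d$ instead yields $S_d(x) \ge x^d/d! \ge d^d/d! = \Omega(e^d/\sqrt{d})$, so $1/S_d(x) = 2^{-\Omega(d)}$. Combining the three sub-cases gives the uniform bound, and the general-$\delta$ statement follows by choosing $d = \Theta(\log \nfrac{1}{\delta})$.

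The main obstacle is the middle range $d/4 \le x \le d$: the remainder estimate from the small-$x$ regime degrades because $x/(d+2)$ is no longer bounded away from $1$, while the naive lower bound $S_d(x) \ge x^d/d!$ used for $x \ge d$ is too weak when $x < d$. The key technical observation, highlighted above, is that the \emph{modal} term $x^{\lfloor x\rfloor}/\lfloor x\rfloor!$ of the series is already $\Theta(e^x/\sqrt{x})$ by Stirling --- this is the same phenomenon as the peak of the $\mathrm{Poisson}(x)$ mass function being $\Theta(1/\sqrt{x})$ --- and it lets $1/S_d(x)$ inherit the exponential decay of $e^{-x}$ throughout the middle range.
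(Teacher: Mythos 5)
Your proof is correct and reaches the same conclusion, but it takes a genuinely different route in the middle regime. The paper also uses a three-way split of $[0,\infty)$, at $\nfrac{(d+1)}{3}$ and $\nfrac{2(d+1)}{3}$: below $\nfrac{(d+1)}{3}$ it bounds the tail geometrically and uses $(d+1)!\ge((d+1)/e)^{d+1}$ (essentially your small-$x$ argument); above $\nfrac{2(d+1)}{3}$ it lower-bounds $S_d(x)$ by the single term $x^d/d!$ and uses the crude inequality $d!\le((d+1)/2)^d$ (again close to your $x\ge d$ case, though you invoke the sharper two-sided Stirling); but in the intermediate band $[\nfrac{(d+1)}{3},\nfrac{2(d+1)}{3})$ the paper keeps the factored form $\nfrac{e^{-x}}{S_d(x)}\sum_{k>d}\nfrac{x^k}{k!}\le 3\,\nfrac{e^{-x}}{S_d(x)}\cdot\nfrac{x^{d+1}}{(d+1)!}$, divides $x^{d+1}/(d+1)!$ by the single term $x^d/d!$ of $S_d$, and gets the error bound $3e^{-x}\cdot x/(d+1)$, where the exponential decay comes entirely from $e^{-x}$. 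You instead prove that $1/S_d(x)$ is itself $2^{-\Omega(d)}$ throughout $[d/4,d]$ by choosing the modal index $k=\lfloor x\rfloor$ and using Stirling to show $x^k/k!=\Omega(e^x/\sqrt{x})$, then finish by the triangle inequality. Your approach buys a cleaner, uniform observation (the Poisson-mode estimate $S_d(x)\ge \Omega(e^x/\sqrt{x})$ whenever $\lfloor x\rfloor\le d$), which also subsumes the $x\ge d$ case as a limiting instance; the cost is the slightly heavier two-sided Stirling estimate with the $\sqrt{k}$ factor, which the paper avoids by exploiting the ratio between $x^{d+1}/(d+1)!$ and $x^d/d!$ instead. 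Both arguments are valid; yours is arguably closer in spirit to the tight classical bound of Cody--Meinardus--Varga cited after the theorem, which also hinges on lower-bounding $S_d$ against $e^x$.
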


\begin{proof}
First, observe that for all $d,$ and all $x \in \left[0,\infty\right),$ we have
$S_d(x) \le e^{x}$ and, hence, $\nfrac{1}{S_d(x)} - e^{-x} \ge 0.$
We  divide $\left[0,\infty\right)$ into three intervals:
$\left[0,\nfrac{d+1}{3}\right), \left[\nfrac{d+1}{3},\nfrac{2(d+1)}{3}\right),$ and
$\left[\nfrac{2(d+1)}{3},\infty\right),$ and show a bound on the approximation
error on each of these intervals.
If $x \ge \nfrac{2(d+1)}{3},$ both the terms are exponentially
small. Using $S_d(x) \ge \nfrac{x^d}{d!}$ and $d! \le
  (\frac{d+1}{2})^d,$ we obtain 
\begin{align}
  {\textstyle \forall x \in \left[0, \frac{d+1}{3}  \right)} 
\qquad & \left| \frac{1}{S_d(x)} - e^{-x}\right| \le \frac{1}{S_d(x)} \le
\frac{d!}{x^d} \le \left(\frac{d+1}{2x} \right)^d \le
\left(\frac{3}{4} \right)^d = 2^{-\Omega(d)}, \nonumber \\
\intertext{Now, assume that $x < \nfrac{2(d+1)}{3}.$ We have,}
\label{eq:exp-rational}
 \left| \frac{1}{S_d(x)} - e^{-x}\right| & =\frac{e^{-x}}{S_d(x)}
\left(\frac{x^{d+1}}{(d+1)!} + \frac{x^{d+2}}{(d+2)!} + \ldots \right)
\nonumber \\
& \le \frac{e^{-x}}{S_d(x)} \cdot \frac{x^{d+1}}{ (d+1)!}\left( 1+
  \frac{x}{d+1} + \frac{x^2}{(d+1)^2} + \ldots \right) \le 3
\frac{e^{-x}}{S_d(x)} \cdot \frac{x^{d+1}}{ (d+1)!}.
\intertext{If $x \in \left[ \nfrac{d+1}{3},\nfrac{2(d+1)}{3} \right),$ we use that
$e^{-x}$ is exponentially small, and show that the numerator is not
much larger than $S_d(x).$ We use $S_d(x) \ge \nfrac{x^d}{d!}$ in
\eqref{eq:exp-rational} to obtain}
 {\textstyle \forall x \in \left[ \frac{d+1}{3},\frac{2(d+1)}{3} \right)} \qquad
 & \left| \frac{1}{S_d(x)} - e^{-x}\right|  \le 3 e^{-\frac{d+1}{3}}
\cdot \frac{x}{d+1} \le 2 e^{-d/3} = 2^{-\Omega(d)}. \nonumber
\intertext{Finally, if $x < \frac{d+1}{3},$ we use that $S_d(x)$ is an
exponentially good approximation of $e^x$ in this range. Using $(d+1)!
\ge \left(\frac{d+1}{e}\right)^{d+1}$ and $S_d(x) \ge 1$ in
\eqref{eq:exp-rational} to obtain}
{\textstyle \forall x \in \left[ \frac{2(d+1)}{3},\infty \right)} \qquad
 & \left| \frac{1}{S_d(x)} - e^{-x}\right|  \le 3
\left(\frac{xe}{d+1}\right)^{d+1} \le 3
\left(\frac{e}{3}\right)^{d+1} = 2^{-\Omega(d)}. \nonumber
\end{align}
\end{proof}

\noindent
A more careful argument shows that, in fact, $\nfrac{1}{S_d(x)}$ approximates
$e^{-x}$ up to an error of $2^{-d}$ (Lemma 1 in
\cite{CodyMV69}). 

\subsubsection*{Lower Bound}
We now show that polynomials other than $S_d(x)$ cannot do much
better. We give a simple proof that shows that for any rational function
of the form $\nfrac{1}{p_d(x)}$ that approximates $e^{-x}$ on
$[0,\infty),$ where $p_d(x)$ is a degree-$d$ polynomial, the error
cannot decay faster than an exponential in the degree.
\begin{theorem}
  For every degree-$d$ polynomial $p_d(x)$ with $d$ large enough,  $\sup_{x \in [0,\infty)} \left|e^{-x} - \nfrac{1}{p_d(x)}\right|
  \ge 50.$
\end{theorem}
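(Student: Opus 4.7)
The plan is to argue by contradiction: assume that $\delta := \sup_{x \ge 0} |e^{-x} - \nfrac{1}{p_d(x)}|$ is much smaller than $c^{-d}$ for a suitable constant $c > 1$, and then derive a contradiction using the Chebyshev extremal growth bound (Proposition~\ref{prop:chebyshev:extremal-growth}). So the first step is to translate the rational approximation guarantee into pointwise information about $p_d$ itself. Evaluating at $x = 0$ forces $p_d(0) \in [\nfrac{2}{3}, 2]$, so $p_d$ is positive there. More importantly, whenever $e^{-x} \ge 2\delta$, the constraint $\nfrac{1}{p_d(x)} \in [e^{-x} - \delta, e^{-x} + \delta]$ gives $p_d(x) \in [\nfrac{e^x}{1 + \delta e^x}, \nfrac{e^x}{1 - \delta e^x}]$. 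Setting $R \defeq \nfrac{1}{2} \log \nfrac{1}{\delta}$, this yields the upper bound $|p_d(x)| \le 2 e^x \le 2 \delta^{-1/2}$ on $[0, R]$. Call this upper bound $M \defeq 2\delta^{-1/2}$.

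Next, I use the constraint at infinity. For every $x \ge \log \nfrac{1}{\delta} = 2R$, we have $e^{-x} \le \delta$, which combined with $|\nfrac{1}{p_d(x)} - e^{-x}| \le \delta$ forces $|p_d(x)| \ge \nfrac{1}{2\delta}$. I apply this at the specific point $x^\star \defeq 2R$, where $|p_d(x^\star)| \ge \nfrac{1}{2\delta}$.

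Now comes the Chebyshev step. Rescale $[0, R]$ to $[-1, 1]$ via $y = \nfrac{2x}{R} - 1$, and define $q(y) \defeq p_d(\nfrac{R(y+1)}{2})/M$. Then $q$ is a polynomial of degree at most $d$ satisfying $|q(y)| \le 1$ on $[-1, 1]$. By Proposition~\ref{prop:chebyshev:extremal-growth}, $|q(y)| \le |T_d(y)|$ for every $|y| > 1$. The point $x^\star = 2R$ corresponds to $y^\star = 3$, and Proposition~\ref{prop:chebyshev:closed-form} gives $|T_d(3)| \le (3 + \sqrt{8})^d < 6^d$. Undoing the scaling, $|p_d(x^\star)| \le M \cdot |T_d(3)| \le 2 \delta^{-1/2} \cdot 6^d$. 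Combining with the lower bound from the previous step,
\[ \tfrac{1}{2\delta} \le |p_d(x^\star)| \le 2\delta^{-1/2}\cdot 6^d, \]
which rearranges to $\delta \ge \nfrac{1}{16 \cdot 36^d}$, contradicting $\delta$ being sub-exponentially small in $d$.

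The main obstacle is calibrating the cutoffs $R$ and $x^\star$: $R$ must be small enough that the sup-bound $M$ on $|p_d|$ over $[0, R]$ is significantly smaller than the forced value $\nfrac{1}{2\delta}$ at infinity, yet large enough that $x^\star$, which must satisfy $x^\star \ge \log \nfrac{1}{\delta}$, corresponds to a point $y^\star$ where $|T_d(y^\star)|$ does not blow up too quickly. The choice $R = \nfrac{1}{2}\log \nfrac{1}{\delta}$ (so that $y^\star = 3$) balances these two constraints; slightly different scalings will trade off the constant base of the exponential lower bound, and optimizing this trade-off yields the explicit constant claimed in the theorem.
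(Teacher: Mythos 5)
Your proof is correct and takes essentially the same approach as the paper: use the approximation guarantee to bound $|p_d|$ on an initial interval where $e^{-x}$ is not too small, apply Proposition~\ref{prop:chebyshev:extremal-growth} to push this into an upper bound on $|p_d(x^\star)|$ at a point $x^\star$ further out, and contradict the lower bound $|p_d(x^\star)| \gtrsim \nfrac{1}{\delta}$ forced by the approximation guarantee there. The only cosmetic difference is in how the scaling window is parametrized (you use $[0,\tfrac12\log\tfrac1\delta]$ and $y^\star=3$ while the paper uses $[0,d]$ and $y^\star=7$, giving $36^d$ versus $\approx (14e)^d \approx 38^d$), but both yield the claimed $50^{-d}$ impossibility for $d$ large.
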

\begin{proof}
Assume for sake of contradiction that for some large enough $d$ there exists a
degree-$d$ polynomial $p_d(x)$ such that $\nfrac{1}{p_d(x)}$
approximates $e^{-x}$ up to an error of $50^{-d}$ on $[0,\infty).$
Thus, for all $x \in [0,d],$ we have $\nfrac{1}{p_d(x)} \ge e^{-d} -
50^{-d} \ge \nfrac{1}{2}\cdot e^{-d},$ \emph{i.e.}, $|p_d(x)| \le
2e^d.$ Hence, the degree-$d$ polynomial $\nfrac{1}{2} \cdot {e^{-d}}
\cdot p_d\left(\nfrac{d}{2}+ \nfrac{d}{2}\cdot y\right)$ is bounded by
1 in absolute value over the interval $[-1,1].$ Using
Proposition~\ref{prop:chebyshev:extremal-growth}, which implies that the
Chebyshev polynomials have the fastest growth amongst such
polynomials, we obtain $\nfrac{1}{2}\cdot e^{-d} \cdot
p_d\left(\nfrac{d}{2}+ \nfrac{d}{2} \cdot y\right) \le T_d(y).$ Using
the closed-form expression for Chebyshev polynomials given in 
Proposition~\ref{prop:chebyshev:closed-form}, we have $T_d(y) =
\frac{1}{2}\left( \left( y+\sqrt{y^2-1} \right)^d + \left( y -
    \sqrt{y^2-1} \right)^d\right).$ For $y=7,$ we have $p_d(4d) \le
2e^d\cdot T_d(7) \le 2e^d \cdot 14^d.$ This implies that for $x=4d,$
we obtain $|e^{-x} - \nfrac{1}{p_d(x)}| \ge \nfrac{1}{p_d(x)} - e^{-x}
\ge \frac{1}{2} (14e)^{-d} - e^{-4d},$ which is larger than
$50^{-d}$ for $d$ large enough. This contradicts the assumption that
$\nfrac{1}{p_d(x)}$ approximates $e^{-x}$ for all $x \in [0,\infty)$
up to an error of $50^{-d}.$
\end{proof}

\noindent
The exact rate of decay of the best approximation for $e^{-x}$ using rational functions was a central problem in approximation theory for more than 15 years. Cody, Meinardus, and Varga~\cite{CodyMV69} were the first
to prove a lower bound of $6^{-d+o(d)}$ for rational functions of the
form $\nfrac{1}{p_d(x)}$ where $p_d$ is a degree-$d$
polynomial. Sch\"{o}nhage~\cite{Schonhage73} proved that the best
approximation of the form $\nfrac{1}{p_d(x)}$ achieves an
approximation error of $3^{-d+o(d)}.$ Newman~\cite{Newman74} showed
that even for an arbitrary degree-$d$ rational function, \emph{i.e.},
$\nfrac{p_d(x)}{q_d(x)}$ approximating $e^{-x},$ where both $p_d(x)$
and $q_d(x)$ are polynomials of degree at most $d,$ the
approximation error cannot be smaller than $1280^{-d}.$ The question
was settled by Gonchar and Rakhmanov~\cite{GoncharR83} who finally
proved that the smallest approximation error achieved by arbitrary
degree-$d$ rational functions is $c^{-d(1+o(1))},$ where $c$ is the
solution to an equation involving elliptic integrals.

\subsection{Approximating $e^{-x}$ on $[0,\infty)$ with Negative
  Poles}
\label{sec:exp-rational2}
In this section we study the question of rational approximations to $e^{-x}$ with geometric convergence and with negative zeros.
Such rational approximations have been used, in combination with the powerful Laplacian solvers  \cite{SpielmanT04, KoutisMP11,
  KelnerOSZ13} to design  near-linear time algorithms to compute approximations $\exp(-L)v$ when $L$ is a graph Laplacian; see Section \ref{sec:applications:matrix-exp}.  

\vspace{2mm}
Unfortunately, the rational approximation $\nfrac{1}{S_d(x)}$ that we
studied in the last section does not satisfy this requirement of having all negative poles. The zeros of
$S_d(x)$ have been well studied (see~\cite{Zemyan05} for a
survey). It is fairly simple to show that $S_d(x)$ has exactly one
real zero $x_d \in [-d, -1]$ if $d$ is odd, and no real zeros if $d$
is even. It is also known that the zeros of $S_d(x)$ grow linearly in
magnitude with $d.$ In fact, it was proved by Szeg\"{o}~\cite{Szego24}
that if all the (complex) zeros of $S_d$ are scaled by $d,$ as $d$
goes to infinity they converge to a point on the curve $|ze^{1-z} | =
1$ on the complex plane.

\vspace{2mm}
How about the approximation $(1+\nfrac{x}{d})^{-d}$? Trivially,
it is a simple rational function where the denominator has only 
negative zeros, and converges to $e^{-x}$ uniformly over
$[0,\infty).$ However, the convergence rate of this approximation is slow with $d$  and it is
easy to see that the error in the approximation at $x=1$ is
$\Theta(\nfrac{1}{d}).$ Saff, Sch\"{o}nhage, and Varga~\cite{SaffSV75} 
showed that for every rational function of the form
$\nfrac{1}{p_d(x)},$ where $p_d$ is a degree-$d$ polynomial with real
roots, $\sup_{x \in [0,\infty)} |e^{-x} - \nfrac{1}{p_d(x)}| =
\Omega(\nfrac{1}{d^2}).$

\vspace{2mm}
Surprisingly, the authors in~\cite{SaffSV75} showed that if we instead
consider rational functions of the form $p_d(x)(1+\nfrac{x}{d})^{-d},$
then we can approximate $e^{-x}$ up to $O(d2^{-d})$ for some
degree-$d$ polynomial $p_d(x),$ see also \cite{Andersson81}.   
Formally, 
\cite{SaffSV75} proved the following.
\begin{theorem}\label{thm:SSV}
For every $d,$ there exists a degree-$d$ polynomial $p_d$ such that,
\[\sup_{x \in [0,\infty)} \left| e^{-x} -
  \frac{p_d(x)}{(1+\nfrac{x}{d})^d}\right| \le O(d\cdot 2^{-d}).\]
\end{theorem}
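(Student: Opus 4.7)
The plan is to convert the rational approximation question to a polynomial approximation problem on a compact interval via a Möbius substitution, and then truncate a Chebyshev expansion of the resulting target function. Under the substitution $y := (x-d)/(x+d)$, equivalently $x = d(1+y)/(1-y)$, the half-line $[0,\infty)$ maps bijectively onto $[-1, 1)$ and $1+x/d = 2/(1-y)$. A polynomial $p_d(x)$ of degree $\le d$ corresponds bijectively to the polynomial $\tilde q(y) := 2^{-d}(1-y)^d\, p_d(d(1+y)/(1-y))$ of degree $\le d$ in $y$, with $p_d(x)/(1+x/d)^d = \tilde q(y)$. So the theorem is equivalent to producing a polynomial $\tilde q$ of degree $\le d$ satisfying
\[
\sup_{y \in [-1, 1]}\bigl|\psi_d(y) - \tilde q(y)\bigr|\le O(d \cdot 2^{-d}), \qquad \psi_d(y) := e^{-d(1+y)/(1-y)},
\]
with $\psi_d(1) := 0$. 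The function $\psi_d$ is smooth on $[-1, 1]$, decreases from $\psi_d(-1) = 1$ to $\psi_d(1) = 0$, and is already bounded by $2^{-d}$ outside a constant-size neighborhood of $y = -1$ (namely wherever $(1+y)/(1-y) \ge \log 2$), so essentially all the mass of $\psi_d$ sits near $y = -1$.

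I would take $\tilde q$ to be the degree-$d$ truncation of the Chebyshev expansion of $\psi_d$ on $[-1, 1]$, so the uniform error is at most $\sum_{k > d}|c_k|$ where $c_k = (2/\pi)\int_0^\pi \psi_d(\cos\theta)\cos(k\theta)\, d\theta$. Substituting $\phi := \pi - \theta$ and using the identity $(1+\cos\theta)/(1-\cos\theta) = \tan^2(\phi/2)$ yields
\[
c_k = \frac{2(-1)^k}{\pi}\int_0^\pi e^{-d\tan^2(\phi/2)}\cos(k\phi)\, d\phi.
\]
The integrand is concentrated near $\phi = 0$: since $\tan^2(\phi/2) \approx \phi^2/4$ there, the weight resembles a Gaussian of scale $1/\sqrt d$, and once $\phi \ge \pi/2$ the weight is uniformly $\le e^{-d}$. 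The key estimate I need is $|c_k| = O(2^{-k})$ for $k$ up to $O(d)$, with super-Gaussian decay beyond that. I would obtain it by deforming the $\phi$-contour upward by $\eta := \Theta(\sqrt{(\log 2)/d})$: writing $\cos(k\phi) = \Re e^{ik\phi}$, the factor $e^{ik\phi}$ on the shifted contour contributes a damping $e^{-k\eta} \le 2^{-k}$, while the weight $e^{-d\tan^2(z/2)}$ is only modestly perturbed as long as $\eta$ is kept small enough to stay clear of the singularity of $\tan$. Summing $|c_k| \lesssim 2^{-k}$ over $k > d$ gives $O(d \cdot 2^{-d})$; in fact the Gaussian heuristic suggests $|c_k|\lesssim e^{-k^2/d}/\sqrt{d}$, so the factor $d$ is rather loose.

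The main obstacle I anticipate is carrying out the contour shift rigorously: the weight $e^{-d\tan^2(z/2)}$ has a pole of $\tan$ at $z = \pi$ and its real part turns negative once $\Im z$ is large enough, so the shift parameter $\eta$ must be both small enough that the weight stays well-behaved and large enough that $e^{-k\eta}$ delivers $2^{-k}$. Localizing the integral to $\phi \in [0, \pi/2]$ (where the weight is nontrivial) and bounding the contribution from $[\pi/2, \pi]$ trivially by $e^{-d}\cdot\pi/2$ sidesteps the singularity at $\phi = \pi$. If the clean contour argument proves too delicate, a fallback is a two-piece explicit construction: near $y = -1$, where $\psi_d(y) = e^{-d(1+y)/(2-(1+y))}$ is close to $e^{-d(1+y)/2}$, apply Theorem~\ref{thm:exp-poly-ub} to obtain a degree-$O(d)$ polynomial with $2^{-d}$ error on a constant-size window; then multiply by a Chebyshev-based damping polynomial (using Proposition~\ref{prop:chebyshev:extremal-growth}) of degree $O(d)$ that is $1 - O(2^{-d})$ on the window and $O(2^{-d})$ on the rest of $[-1, 1]$, where $\psi_d$ itself is already this small. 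The main technicality in the fallback is then the degree accounting needed to keep the product polynomial of degree at most $d$.
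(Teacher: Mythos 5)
Your route is genuinely different from the paper's: both start with the M\"obius substitution $y=(x-d)/(x+d)$ reducing the problem to approximating $\psi_d(y)=e^{-d(1+y)/(1-y)}$ on $[-1,1]$ by a degree-$d$ polynomial, but the paper then bounds the uniform error by a one-derivative $\ell_1$-then-$\ell_2$ relaxation, expands $\psi_d^{(1)}$ in Legendre polynomials, and identifies the resulting coefficients via Laguerre-polynomial orthogonality. You instead truncate the Chebyshev expansion of $\psi_d$ and try to bound $\sum_{k>d}|c_k|$ by a contour shift applied to the integral $c_k=\tfrac{2(-1)^k}{\pi}\int_0^{\pi}e^{-d\tan^2(\phi/2)}\cos(k\phi)\,d\phi$. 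That is an attractive and, in principle, workable route, but as written it has two genuine gaps.

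First, the central estimate is internally inconsistent: you set $\eta=\Theta(\sqrt{(\log 2)/d})$, which tends to $0$, but then assert $e^{-k\eta}\le 2^{-k}$, which needs $\eta\ge\log 2$. Your own Gaussian heuristic reveals the fix and destroys the exponential-coefficient claim at the same time: the weight on the shifted contour is inflated by roughly $e^{d\eta^2/4}$ near $\phi=0$, so the good choice is $\eta\approx 2k/d$, giving $|c_k|\lesssim e^{-k^2/d}/\sqrt d$ (not $O(2^{-k})$ uniformly). That is in fact strong enough, and even gives $\sum_{k>d}|c_k|=O(e^{-d}/\sqrt d)$, but it is not the argument you wrote. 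Second, the localization breaks the sum: bounding the $[\pi/2,\pi]$ contribution to $c_k$ by the $k$-independent quantity $e^{-d}\pi/2$ makes $\sum_{k>d}|c_k|$ divergent; that piece also needs a decay in $k$ (integration by parts, or a contour that indents near $\phi=\pi$ to respect the pole of $\tan$). Relatedly, a plain Bernstein-ellipse coefficient estimate does not apply here, since the essential singularity of $\psi_d$ at $y=1$ lies on a focus of every Bernstein ellipse, so geometric coefficient decay is not automatic. Finally, the fallback rests on $\psi_d(y)\approx e^{-d(1+y)/2}$ on a \emph{constant-size} window near $y=-1$; but the two differ by the multiplicative factor $e^{-d(1+y)^2/(2(1-y))}$, which is exponentially small in $d$ once $1+y=\Omega(1)$, so closeness holds only on a window of width $O(1/\sqrt d)$, at which point the damping-polynomial degree accounting becomes the whole problem rather than a footnote.
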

\noindent
Since we seek an approximation over an infinite interval, we first apply a variable transformation to convert the interval into a
finite one. Towards this, we can write $p_d(x)(1+\nfrac{x}{d})^{-d}$ as a
degree-$d$ polynomial in $(1 + \nfrac{x}{d})^{-1}.$ Hence, in order to
make a transformation so that the new variable varies over the 
interval $[-1,1]$, we can attempt the transformation $y = 1-
2(1+\nfrac{x}{d})^{-1}.$ Thus, we are looking for a polynomial
approximation $q_d(y)$ to the function $e^{-x} = \exp\left(-d\cdot
  \nfrac{(1+y)}{(1-y)}\right).$ Observe that $y$ now varies over the
interval $[-1,1),$ and the approximation error has remained unchanged.

\vspace{2mm}
Let $f_d(y) \defeq \exp\left(-d\cdot \nfrac{(1+y)}{(1-y)}\right),$
with $f_d(1)=0.$ We could attempt to use the Taylor series
approximations in order to approximate this function. One strategy,
that can be shown to work, is to consider the polynomial $r_d(y)$
obtained by truncating, up to degree $d,$ the Taylor series expansion
of the function $f_1(y) \defeq \exp\left(-\nfrac{(1+y)}{(1-y)}\right)$
around $y=-1,$ and to consider the degree-$d^2$ polynomial $r_d^d(y)$
as the approximating polynomial. We do not pursue this approach
here.   Instead, we
now present a simplification of  the proof from~\cite{SaffSV75}. 

\begin{proof}[Proof of Theorem \ref{thm:SSV}]
We start by    relaxing  the question of a
uniform approximation of $f_d$, to an $\ell_2$- approximation
problem. However,  the relaxation is through an
intermediate $\ell_1$ problem. Let $f_d^{(k)}$ denote the
$k^\textrm{th}$ derivative of $f_d,$ \emph{i.e.}, $f_d^{(k)}(t) \defeq
\od[k]{\ }{t} f_d(t).$ Then, the following is a simple sequence of equalities and inequalities which rely on Cauchy-Schwartz.
\begin{align}
  \inf_{q_d} \sup_{y \in [-1,1)}\left|f_d(y)-q_d(y)\right| & = 
  \inf_{r_{d-1}} \sup_{y \in [-1,1)} \left| \int_{y}^1
    (f^{(1)}_d(t)-r_{d-1}(t)) \dif{t}\right| \nonumber \\
& \le \inf_{r_{d-1}}
  \int_{-1}^1 \left| f^{(1)}_d(t)-r_{d-1}(t)\right| \dif{t} \nonumber \\
& \le \sqrt{2} \inf_{r_{d-1}}
  \sqrt{\int_{-1}^1 \left( f^{(1)}_d(t)-r_{d-1}(t)\right)^2 \dif{t}}. 
\label{eq:exp-rational:l2-min}
\end{align}
The first equality holds if we take the infimum over all
degree-$(d-1)$ polynomials $r_{d-1}.$ We know how to write an explicit
solution to the optimization problem in the last expression. We
require orthogonal polynomials on $[-1,1]$ under the constant weight
function, which are given by Legendre polynomials $L_k(t) \defeq \frac{1}{2^k\cdot k!} \od[k]{\ }{t} [(t^2-1)^k],$ and
satisfy $\int_{-1}^1 L_i(t)L_j(t) \dif {t} = \frac{2}{2i+1}$ if and only if $i=j$
and 0 otherwise (see~\cite{Szego39}). Hence, we can write the last expression explicitly to
obtain
\begin{align}
\label{eq:ssv:error}
  \inf_{q_d} \sup_{y \in (-1,1]}\left|f_d(y)-q_d(y)\right| & \le
  \sqrt{\sum_{k \ge d} (2k+1)\gamma_k^2 },
\end{align}
where $\gamma_k$ denotes the inner product with the $k^\textrm{th}$
Legendre polynomial $\gamma_k \defeq \int_{-1}^1 f^{(1)}_d(t)L_k(t) \dif{t}.$
Plugging in the definition of Legendre polynomials, and using the
integration by parts successively, we obtain 
\begin{equation}
\label{eq:exp-rational:gamma}
\gamma_k = \frac{1}{2^k\cdot k!} \int_{-1}^1 f^{(1)}_d(t)
\od[k]{\ }{t} [(t^2-1)^k] \dif {t} = \frac{(-1)^k}{2^k\cdot k!}  \int_{-1}^1
(t^2-1)^k f^{(k+1)}_d(t) \dif{t} .
\end{equation}
If we let $v \defeq \frac{2d}{(1-t)},$
we obtain, $f_d(t) = e^{d-v}$ and $f^{(1)}_d(t) =
\frac{-1}{(1-t)}ve^{d-v}.$ A simple induction argument generalizes
this to give
\[(1-t)^{k+1} \od[k+1]{\ }{t} f_d(t) = (1-t)^{k+1} f^{(k+1)}_d(t) =
{-e^d} \od[k]{\ }{v} [v^{k+1}e^{-v}].\]
We now invoke the generalized Laguerre polynomials of degree $k$
orthogonal with respect  to the weight function $ve^{-v},$ denoted by $G_k(v)$
and defined to be $\frac{1}{k!}\frac{1}{ve^{-v}} \od[k]{\ }{v} [
v^{k+1}e^{-v}]$ (see~\cite{Szego39}). Hence, simplifying \eqref{eq:exp-rational:gamma}, we obtain
\[\gamma_k = \frac{-e^d }{2^k}
\int_{-1}^1 (t+1)^k \frac{ve^{-v}}{(1-t)} G_k(v) \dif{t} = -e^d
\int_{d}^\infty \left(1-\frac{d}{v}\right)^k e^{-v} G_k(v) \dif{v}.\]
Squaring the above equality, and applying Cauchy-Schwartz, we obtain
\[\gamma_k^2  \le e^{2d} \int_{d}^\infty ve^{-v} (G_k(v))^2 \dif{v} \cdot \int_{d}^\infty
\frac{1}{v}\left(1-\frac{d}{v}\right)^{2k} e^{-v} \dif{v}.\] Now, we
use $\int_{0}^\infty ve^{-v} (G_k(v))^2 \dif{v} = k+1$
(see~\cite{Szego39}), and substitute $v = d(1+z)$ to obtain
\[\gamma_k^2  \le e^d (k+1) \int_{0}^\infty
\frac{z^{2k}}{(z+1)^{2k+1}} e^{-dz} \dif{z}.\]
Plugging this back in \eqref{eq:ssv:error}, we obtain
\begin{align*}
 \left(\inf_{q_d} \sup_{y \in (-1,1]}\left|f_d(y)-q_d(y)\right|
 \right)^2 & \le e^d \int_{0}^\infty
\sum_{k \ge d}  (k+1)(2k+1) \frac{z^{2k}}{(z+1)^{2k+1}} e^{-dz}
\dif{z} .
\end{align*}
We can sum up the series in the above equation for any $z \ge 0$ to
obtain $\sum_{k \ge d} (k+1)(2k+1) \frac{z^{2k}}{(z+1)^{2k+1}}
\lesssim \left(\frac{z}{z+1} \right)^{2d-2} (d^2 + dz + z^2).$ Here $\lesssim$ means that the inequality holds up to an absolute constant. This implies that 
\begin{align*}
  \left(\inf_{q_d} \sup_{y \in (-1,1]}\left|f_d(y)-q_d(y)\right|
  \right)^2 & \lesssim \int_{0}^\infty \left(\frac{z}{z+1}
  \right)^{2d-2} (d^2 + dz + z^2) e^{d-dz} \dif{z} .
\end{align*}
It is a simple exercise to show that for all $z \ge 0,$ the
expression $e^{d-dz+z} \left( \frac{z}{z+1}\right)^{2d-2}$ is
maximized for $z=1$ and, hence, this expression is bounded by $4e\cdot
4^{-d}.$ Thus,
\begin{align*}
  \left(\inf_{q_d} \sup_{y \in (-1,1]}\left|f_d(y)-q_d(y)\right|
  \right)^2 & \lesssim 4^{-d} \int_{0}^\infty (d^2 + dz + z^2) e^{-z}
  \dif{z} \lesssim d^2\cdot 4^{-d},
\end{align*}
which concludes the proof.
\end{proof}

\noindent
In Section \ref{sec:applications:matrix-exp} we also bound the magnitudes of the coefficients in this polynomial and show how to compute arbitrarily good approximations to them efficiently.

\section{Approximating $x^{-1}$ Using Exponentials}
\label{sec:exp-reduction}

In this section we give an
approximation to $x^{-1}$ using a small number of
exponentials. As we see in
Section~\ref{sec:applications:matrix-inverse-reduction}, this
immediately implies a reduction from approximate matrix inversion
(equivalent to approximately solving a linear system) to approximating
the matrix exponential, thus proving that the problems are essentially
equivalent.
\begin{theorem}
\label{thm:inverse-approx}
Given $\eps,\delta \in (0,1],$ there exist $\poly( \log
\nfrac{1}{\eps\delta})$ numbers $0< w_j$ and $t_j =
O(\poly(\nfrac{1}{\eps\delta})),$ such that for all $x \in
[\eps,1],$ we have $(1-\delta)x^{-1} \le \sum_j w_j e^{-t_j x} \le (1+\delta)
x^{-1}.$
\end{theorem}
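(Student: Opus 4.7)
The plan is to start from the elementary identity
\[
x^{-1} \;=\; \int_0^\infty e^{-tx}\,\mathrm{d}t \qquad (x>0),
\]
which already expresses $x^{-1}$ as a ``continuous'' sum of exponentials $e^{-tx}$ with weights $\mathrm{d}t$. The theorem will be obtained by (i) a variable substitution that stretches the integrand into a well-behaved function on $\mathbb{R}$, (ii) truncation to a finite interval, and (iii) a quadrature rule whose abscissae and weights become the required $t_j$ and $w_j$. The natural substitution is $t = e^u$, $\mathrm{d}t = e^u\,\mathrm{d}u$, giving
\[
x^{-1} \;=\; \int_{-\infty}^{\infty} g(u;x)\,\mathrm{d}u, \qquad g(u;x) \defeq e^{u - x e^u}.
\]
For $x\in[\eps,1]$, the function $g(\cdot;x)$ decays like $e^u$ as $u\to-\infty$ and doubly exponentially as $u\to+\infty$, so it is extremely concentrated on a bounded window.

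Next I would truncate to $u\in[u_{\min},u_{\max}]$. The tail at $-\infty$ contributes at most $\int_{-\infty}^{u_{\min}} e^u\,\mathrm{d}u = e^{u_{\min}}$, so choosing $u_{\min} = -\Theta(\log\nicefrac{1}{\delta})$ makes this tail smaller than $\delta\cdot x^{-1}$ uniformly on $[\eps,1]$. The tail at $+\infty$ equals $\int_{e^{u_{\max}}}^\infty e^{-tx}\,\mathrm{d}t = e^{-x e^{u_{\max}}}/x$, so picking $e^{u_{\max}} = \Theta(\eps^{-1}\log\nicefrac{1}{\delta})$, i.e.\ $u_{\max} = O(\log\nicefrac{1}{\eps\delta})$, renders this piece at most $\delta\cdot x^{-1}$ as well. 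The window length is thus $O(\log\nicefrac{1}{\eps\delta})$, and the largest node $t_j = e^{u_j} \le e^{u_{\max}} = O(\poly(\nicefrac{1}{\eps\delta}))$, matching the bound on $t_j$ in the statement.

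For the quadrature step, I would apply the trapezoidal rule with uniform step $h>0$ on $[u_{\min},u_{\max}]$, producing the approximation $\sum_j w_j e^{-t_j x}$ with $t_j = e^{u_{\min}+jh}$ and $w_j = h\,t_j$ (with endpoint half-weights). The trapezoidal rule is exceptionally accurate for smooth, rapidly decaying integrands on $\mathbb{R}$: via the Euler-Maclaurin expansion, the quadrature error is controlled by
\[
\sum_{k\ge 1} \frac{h^{2k}}{(2k)!}\,B_{2k}\,\bigl[g^{(2k-1)}\bigr]_{u_{\min}}^{u_{\max}},
\]
and bounds on the Bernoulli numbers of the form $|B_{2k}|\le 2(2k)!\,\zeta(2k)/(2\pi)^{2k}$ (which is where the Riemann zeta bound enters) turn this into a geometric series in $h/(2\pi)$ multiplied by bounds on the high derivatives of $g$. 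Since $g$ is analytic in a strip of fixed width around $\mathbb{R}$ and its derivatives in that strip admit uniform bounds that scale polynomially in $\nicefrac{1}{\eps\delta}$, one obtains a quadrature error of the form $C(\eps,\delta)\cdot e^{-c/h}$. Choosing $h = \Theta(1/\log\nicefrac{1}{\eps\delta})$ then pushes this error below $\delta\cdot x^{-1}$ for all $x\in[\eps,1]$, and the total number of nodes is $(u_{\max}-u_{\min})/h = O(\log^2\nicefrac{1}{\eps\delta}) = \poly(\log\nicefrac{1}{\eps\delta})$, as required. Positivity of the $w_j = h\,e^{u_j}$ is automatic.

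The main obstacle will be the careful Euler-Maclaurin error estimate: one must bound the derivatives $g^{(2k-1)}(u;x)$ uniformly in $x\in[\eps,1]$ together with the truncation boundary terms, and show that the alternating-sign series genuinely converges geometrically rather than blowing up through the $(2k)!$ growth of Bernoulli numbers. The correct packaging is via contour integration / the Poisson summation form of the Euler-Maclaurin error, using analyticity of $g(\cdot;x)$ in a horizontal strip $|\Im u|<\pi/2-\eta$ (where $xe^u$ still has nonnegative real part and hence $e^{-xe^u}$ stays bounded), which yields the clean $e^{-c/h}$ bound with constant $c$ independent of $\eps,\delta$. Finally, one collects the three error contributions (left tail, right tail, quadrature) and verifies that each is at most $\tfrac{\delta}{3}\cdot x^{-1}$, yielding the claimed multiplicative $(1\pm\delta)$ approximation.
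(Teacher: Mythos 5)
Your high-level route is identical to the paper's: start from $x^{-1} = \int_0^\infty e^{-xt}\,\dif{t}$, substitute $t = e^y$ to get the integrand $f_x(y) = e^{y - xe^y}$ over $\mathbb{R}$, discretize with the trapezoidal rule, and bound the error via Euler--Maclaurin together with the Riemann-zeta estimate on Bernoulli numbers. Where you diverge is the actual quadrature error estimate. The infinite Euler--Maclaurin ``boundary'' series $\sum_{k\ge 1}\tfrac{h^{2k}}{(2k)!}B_{2k}\bigl[g^{(2k-1)}\bigr]$ that you write down is not a valid expression for the error (the Euler--Maclaurin expansion is asymptotic and such a series typically diverges); you correctly flag this and propose instead the Poisson-summation/contour form of the error, exploiting analyticity of $f_x$ in a horizontal strip of half-width just under $\pi/2$. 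That is a genuinely different---and arguably cleaner---mechanism than the paper's. The paper keeps the explicit remainder term of an order-$N$ Euler--Maclaurin formula and bounds $\int_{\mathbb{R}}\abs{f_x^{(2N)}}$ by $x^{-1}\,\Theta(N)^{4N}$ via an elementary derivative recurrence, then takes $h=\Theta(1/N^2)$ and $N=\Theta(\log\nfrac{1}{\delta})$ so that $(h/2\pi)^{2N}\Theta(N)^{4N}$ decays geometrically. Both routes work: yours yields $\approx \delta\cdot x^{-1}$ at step $h = \Theta(1/\log\nfrac{1}{\eps\delta})$ and hence $O(\log^2\nfrac{1}{\eps\delta})$ nodes (versus the paper's $O(\log^3\nfrac{1}{\eps\delta})$), but requires the $L^1$-on-a-line bound $\int\abs{f_x(s+ia)}\,\dif{s}=\nfrac{1}{x\cos a}$ and the strip-analyticity packaging, whereas the paper stays within finite-order Euler--Maclaurin with only derivative bookkeeping. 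One caution in your ordering: you truncate to $[u_{\min},u_{\max}]$ \emph{before} applying the trapezoidal rule, so the Euler--Maclaurin (or contour) error bound for the finite interval inherits boundary terms involving $g^{(2k-1)}(u_{\min}),g^{(2k-1)}(u_{\max})$. These are small but must be estimated; the paper avoids them entirely by discretizing the whole line (letting the interval go to $\pm\infty$, under which the boundary terms vanish) and only then truncating the resulting infinite sum. If you swap those two steps, your strip-analyticity bound applies directly and the argument closes cleanly.
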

\noindent
Similar results have appeared in the literature~\cite{BM05,BM10}. The
 proof we present is from~\cite{SachdevaV13}.
The starting point of the proof of Theorem \ref{thm:inverse-approx} is
the identity $x^{-1} = \int_0^{\infty} e^{-xt} \dif{t}.$
The crux of the proof is to discretize this integral to a
\emph{sparse} sum of exponentials. One approach to discretize an
integral to a sum is via the \emph{trapezoidal rule} - by
approximating the area under the integral using \emph{trapezoids} of
small width, say $h$,
\[\int_a^b g(t)\dif{t} \ \approx\  T_{g}^{[a,b],h} \ \defeq \ \frac{h}{2}
\cdot \sum_{j=0}^{K-1} \left(g(a+jh)+g(a+(j+1)h) \right),\] 
where $K \defeq \nfrac{(b-a)}{h}$ is an integer. Applying this rule to
the above integral after truncating it to a large enough interval
$[0,b],$ we obtain the approximation $x^{-1} \approx
\frac{h}{2}\sum_{j=0}^{\nfrac{b}{h}-1} \left(e^{-xjh} + e^{-x(j+1)h}
\right).$ The choice of $h$ determines the discretization of the
interval and, hence, the sparsity of the approximating sum $K$.
Recall that the error must be of the form
\[\textstyle \forall x \in [\eps,1] \qquad \abs{x^{-1} -
\frac{h}{2}\sum_{j} \left(e^{-xjh} + e^{-x(j+1)h} \right)} \le \delta
x^{-1}.\]
For $x = 1,$ we obtain $h \le O_\delta(1)$ and, hence, $K \geq
\Omega_\delta({b}).$ Moreover, the error in truncating the integral is $\int_b^{\infty} e^{-xt} \dif{t} = x^{-1} e^{-bx},$
forcing $b \ge \nfrac{1}{\eps}\cdot \log \nfrac{1}{\delta}$ to be
at most $\nfrac{\delta}{x}$ for all $x \in [\eps,1].$ Thus, this
approach to discretization can only give us a sum which uses
poly$\left(\nfrac{1}{\eps}\right)$ exponentials.

\vspace{2mm}
This suggests that we should select a discretization where $t$
increases much more rapidly with $h$, \emph{e.g.}, exponentially instead of
linearly. This can be achieved by substituting $t=e^y$ in the above
integral to obtain the identity $x^{-1} = \int_{-\infty}^\infty
e^{-xe^y + y} \dif{y}.$ Let $f_x(y) \defeq e^{-xe^y +y}.$ Observe
that $f_x(y) = x^{-1}\cdot f_1(y+\ln x).$ Since we allow the error to
scale with $x^{-1}$ as $x$ varies over $[\eps,1],$ $y$ needs to change
only by an additive $\log \nfrac{1}{\eps}$ to compensate for $x.$ This
suggests that only roughly $\nfrac{1}{h}\cdot \log \nfrac{1}{\eps}$
additional terms are needed above those required for $x=1$ in order for the
approximation to hold for all $x \in [\eps,1],$ giving a logarithmic
dependence on $\nfrac{1}{\eps}.$ We show that discretizing this
integral using the trapezoidal rule, and bounding the error using the
Euler-Maclaurin formula, does indeed give us the above result.

\subsection{Bernoulli Numbers and the Euler-Maclaurin Formula} The Bernoulli
numbers, denoted by $b_i$ for any integer $i \ge 0,$ are a sequence of
rational numbers which, while discovered in an attempt to compute sums
of the form $\sum_{i \geq 0}^k i^j,$ have deep connections to several
areas of mathematics.\footnote{The story goes that when Charles
  Babbage designed the Analytical Engine in the 19th century, one of
  the most important tasks he hoped the Engine would perform was the
  calculation of Bernoulli numbers.} They can be defined recursively: $b_0 = 1,$ and for all $k \ge 1,$ $
\sum_{j=0}^{k-1} {k \choose j} b_j= 0.  $ Given the Bernoulli numbers,
the Bernoulli polynomials are defined to be $B_k(y) \defeq \sum_{j=0}^k
\binom{k}{j} b_j y^{k-j}.$ Using properties of the Bernoulli
polynomials, and a well-known connection to the
Riemann zeta function, we obtain the following bounds
(see~\cite{GrahamKP94}).
\begin{lemma}
\label{lem:bernoulli}
For any non-negative integer $k,$ and for all $y \in [0,1],$
$\frac{|B_{2k}(y)|}{(2k)!} \le \frac{\abs{b_{2k}}}{(2k)!} \le
\frac{4}{(2\pi)^{2k}}.$
\end{lemma}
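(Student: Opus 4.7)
The plan is to route both inequalities through the classical Euler identity
$$|b_{2k}| \;=\; \frac{2\,(2k)!}{(2\pi)^{2k}}\,\zeta(2k), \qquad k\ge 1,$$
which is the connection to the Riemann zeta function alluded to in the lemma statement. Given this identity, the right-hand inequality of the lemma is immediate: $\zeta$ is strictly decreasing on $(1,\infty)$, so $\zeta(2k)\le \zeta(2) = \pi^2/6 < 2$ for every $k\ge 1$, giving $|b_{2k}|/(2k)! \le 4/(2\pi)^{2k}$. The boundary case $k=0$ is handled by hand: $|b_0|/0! = 1 \le 4 = 4/(2\pi)^0$.

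For the left-hand inequality $|B_{2k}(y)|\le |b_{2k}|$ on $[0,1]$, I would appeal to the Fourier expansion of the even Bernoulli polynomials on $[0,1]$,
$$B_{2k}(y) \;=\; (-1)^{k+1}\,\frac{2\,(2k)!}{(2\pi)^{2k}}\,\sum_{n=1}^{\infty}\frac{\cos(2\pi n y)}{n^{2k}}, \qquad y\in[0,1],\; k\ge 1,$$
which is dominated termwise in absolute value by its $y=0$ specialization, where every cosine equals $1$ and the inner sum is exactly $\zeta(2k)$. Combined with the Euler identity this says $|B_{2k}(y)|\le |B_{2k}(0)| = |b_{2k}|$ on $[0,1]$, as required. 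The $k=0$ case is trivial since $B_0(y)\equiv 1 = b_0$.

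So the real content of the proof lies in establishing these two identities. Both fall out of the exponential generating function $G(t,y) \defeq te^{yt}/(e^t-1) = \sum_{k\ge 0} B_k(y)\, t^k/k!$, which is the standard analytic packaging of the Bernoulli polynomials. For the Euler identity I would take $y=0$, expand $t/(e^t-1)$ in partial fractions around its simple poles at $t = 2\pi i n$ for $n\neq 0$, expand each summand as a geometric series in $t$, and collect the even-degree coefficients; the odd-degree Bernoulli numbers vanish (except $b_1$) and the even ones pick up precisely $\zeta(2k)$. For the Fourier series, one can either verify it directly at $k=1$ using $B_2(y) = y^2 - y + \tfrac{1}{6}$ and then propagate upward via the recursion $B_{2k}''(y) = 2k(2k-1)\,B_{2k-2}(y)$, which on the Fourier side divides the $n$-th cosine coefficient by $-(2\pi n)^2$ and thus lines up with the claimed $n^{-2k}$ decay, or one can obtain both identities in one stroke by summing $G(t,y)$ over its residues. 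The main obstacle is not depth but bookkeeping: justifying the interchange of sum and partial-fraction expansion (legitimate since the series converges absolutely for $|t|<2\pi$), pinning down the correct sign $(-1)^{k+1}$ in both identities, and verifying the Fourier base case. Once these two analytic facts are secured, the lemma collapses to the one-line estimate $\zeta(2k)\le 2$.
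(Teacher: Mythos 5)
Your proof is correct. The paper offers no proof of this lemma and simply cites Graham, Knuth, and Patashnik~\cite{GrahamKP94}; your argument via the Euler identity $|b_{2k}| = 2\,(2k)!\,\zeta(2k)/(2\pi)^{2k}$, the Fourier cosine expansion of $B_{2k}$ on $[0,1]$, and the one-line estimate $\zeta(2k)\le\zeta(2)<2$ (with the $k=0$ case checked by hand) is precisely the standard route that reference takes.
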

\noindent
One of the most significant connections in analysis involving the
Bernoulli numbers is the \emph{Euler-Maclaurin formula} which
\emph{exactly} describes the error in approximating an integral by the
trapezoidal rule. For a function $g(y),$ let its $k^\textrm{th}$
derivative be denoted by $g^{(k)}(y) \defeq \od[k]{\ }{y}g(y).$ 
\begin{lemma}
  Given a function $g : \rea \to \rea,$ for any $a < b,$ any $h > 0,$
  and $N \in \nat,$ we have,
\begin{align}
\label{eq:EM}
  \int_a^b g(y)\dif{y} - T_{g}^{[a,b],h} = h^{2N+1} \int_0^K
  \frac{B_{2N}(y-[y])}{(2N)!} g^{(2N)}(a+yh) \dif{y} - \sum_{j=1}^N
  \frac{b_{2j}}{(2j)!}h^{2j} \left(g^{(2j-1)}(b)-g^{(2j-1)}(a)\right),
\end{align}
where $K \defeq \frac{b-a}{h}$ is an integer, and $[\cdot]$ denotes
the integer part. 
\end{lemma}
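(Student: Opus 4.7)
The plan is to prove the identity in two stages: first establish a unit-interval version of the formula via $2N$ iterated integration by parts against the Bernoulli polynomials, then convert it to a length-$h$ subinterval by a linear change of variables, and finally sum the resulting local identity over the $K$ subintervals $[a+mh,a+(m+1)h]$, $m=0,\ldots,K-1$. The ingredients I will use throughout are $B_0(t)\equiv 1$, the differential recursion $B_k'(t) = k B_{k-1}(t)$, the boundary values $B_k(0)=B_k(1)=b_k$ for all $k\neq 1$, and the vanishing $b_{2k+1}=0$ for every $k\ge 1$.

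For the single-subinterval step, fix $m$ and set $G(t) := g(a+(m+t)h)$, so that $G^{(k)}(t) = h^k g^{(k)}(a+(m+t)h)$ and $\int_{a+mh}^{a+(m+1)h} g(y)\dif{y} = h\int_0^1 G(t)\dif{t}$. I will start from the trivial identity $\int_0^1 G(t)\dif{t} = \int_0^1 B_1'(t) G(t)\dif{t}$ and integrate by parts: the contribution $B_1(1)G(1) - B_1(0)G(0) = \tfrac{1}{2}(G(0)+G(1))$ is exactly the trapezoidal average, leaving $-\int_0^1 B_1(t) G'(t)\dif{t}$. Iterating integration by parts using $B_k(t) = B_{k+1}'(t)/(k+1)$, each step introduces a boundary term proportional to $\tfrac{b_{k+1}}{(k+1)!}\bigl(G^{(k)}(1)-G^{(k)}(0)\bigr)$ plus a new integral remainder of the form $(-1)^{k+1}\int_0^1 \tfrac{B_{k+1}(t)}{(k+1)!} G^{(k+1)}(t)\dif{t}$. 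Because $b_{k+1}=0$ whenever $k+1\ge 3$ is odd, only the even-index boundary corrections survive. After performing $2N$ integrations by parts, I arrive at the unit-interval identity
\[ \int_0^1 G(t)\dif{t} = \frac{G(0)+G(1)}{2} - \sum_{j=1}^N \frac{b_{2j}}{(2j)!}\bigl(G^{(2j-1)}(1) - G^{(2j-1)}(0)\bigr) + \int_0^1 \frac{B_{2N}(t)}{(2N)!} G^{(2N)}(t)\dif{t}. \]

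Next I multiply by $h$ and translate back via $G^{(k)}(t) = h^k g^{(k)}(a+(m+t)h)$, which rescales the $j$-th boundary correction by $h^{2j}$ and the remainder by $h^{2N+1}$. Summing the resulting local identity over $m=0,\ldots,K-1$ does three things simultaneously: the left-hand integrals concatenate into $\int_a^b g(y)\dif{y}$; the $\tfrac{h}{2}$-weighted endpoint values assemble into $T_g^{[a,b],h}$, since each interior node $a+mh$ appears in two adjacent subintervals and thus acquires total weight $h$; and the derivative differences $g^{(2j-1)}(a+(m+1)h)-g^{(2j-1)}(a+mh)$ telescope to $g^{(2j-1)}(b)-g^{(2j-1)}(a)$. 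Finally, the substitution $y = m+t$ in each of the $K$ remainder integrals concatenates them into the single integral $h^{2N+1}\int_0^K \frac{B_{2N}(y-[y])}{(2N)!} g^{(2N)}(a+yh)\dif{y}$, because on each unit subinterval $t = y-[y]$. Rearranging the result yields exactly the stated equation~\eqref{eq:EM}.

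The main obstacle will be the sign and index bookkeeping inside the repeated integration by parts. I will need to confirm by induction that after $k$ steps the remainder is exactly $(-1)^k \int_0^1 \tfrac{B_k(t)}{k!} G^{(k)}(t)\dif{t}$, and that the accumulated boundary contributions combine so that every surviving term enters with coefficient $-b_{2j}/(2j)!$ while the final remainder at step $2N$ emerges with a positive sign. Once this local identity is pinned down, the telescoping across subintervals and the packaging of the $K$ remainders via the fractional-part function $y-[y]$ are both essentially mechanical.
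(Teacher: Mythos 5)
The paper does not prove this lemma; it is stated as a known classical fact (with a reference to~\cite{GrahamKP94} nearby), so there is no ``paper's proof'' to compare against. Your derivation is the standard textbook argument and, on inspection, it is correct. The sign and index bookkeeping works out: expanding $B_1(t)=t-\tfrac12$, one gets $B_1(1)G(1)-B_1(0)G(0)=\tfrac12(G(0)+G(1))$ as claimed; the induction with $B_k(t)=B_{k+1}'(t)/(k+1)$ does leave the remainder $(-1)^k\int_0^1\frac{B_k(t)}{k!}G^{(k)}(t)\dif{t}$ after $k$ steps, and the boundary term that appears when passing from step $k=2j-1$ to step $2j$ is $(-1)^{2j-1}\frac{b_{2j}}{(2j)!}(G^{(2j-1)}(1)-G^{(2j-1)}(0))$, whose sign matches the $-\sum_j\frac{b_{2j}}{(2j)!}(\cdots)$ in the statement, while the odd-index boundary terms (other than $j=0$) vanish because $b_{2k+1}=0$ for $k\ge 1$. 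The affine substitution $y=a+(m+t)h$, the telescoping of the derivative differences, the reassembly of the trapezoidal sum, and the packaging of the $K$ unit-interval remainders via $t=y-[y]$ all check out and produce exactly~\eqref{eq:EM} after rearrangement. One minor point worth making explicit: the lemma as stated has no smoothness hypothesis on $g$, but both the formula and your proof implicitly require $g\in C^{2N}$ on $[a,b]$, which should be assumed.
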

\noindent
Note that it is really a family of formulae, one for each choice of $N,$ called the \emph{order} of the
formula. The choice of $N$ is influenced by how well behaved the
higher order derivatives of the function are. For example, if $g(y)$
is a polynomial, when $2N > \text{degree}(g),$ we obtain an exact
expression for $\int_a^b g(y)\dif{y}$ in terms of the values of the
derivatives of $g$ at $a$ and $b.$ Since the sparsity of the
approximation is $\Omega(\nfrac{1}{h}),$ for the sparsity to depend
logarithmically on the error parameter $\delta,$ we need to choose $N$ to
be roughly $\Omega(\log \nfrac{1}{\delta})$ so that the first error
term in \eqref{eq:EM} is comparable to $\delta.$

\subsection*{The Proof}
\begin{proof}[Proof of Theorem~\ref{thm:inverse-approx}]
We fix the step size $h,$ approximate the
integral $\int_{-bh}^{bh} f_x(y)\dif{y}$ using the trapezoidal rule
($b$ is a positive integer), and bound the approximation error using
the Euler-Maclaurin formula. We let $b$ go to $\infty$, which allows us
to approximate the integral over $[-\infty,\infty]$ by an infinite sum
of exponentials. Finally, we  truncate this sum to obtain our
approximation. Applying the order $N$ Euler-Maclaurin formula to the
integral $\int_{-bh}^{bh} f_x(y) \dif{y},$ and using
Lemma~\ref{lem:bernoulli}, we obtain,
\begin{align}
\label{eq:EM-error}
\abs{\int_{-bh}^{bh} f_x(y)\dif{y} - T_{f_x}^{[-bh,bh],h}} \le & 4\left(
  \frac{h}{2\pi}
\right)^{2N} \int_{-bh}^{bh} \abs{f^{(2N)}_x(y)} \dif{y} \\
& \qquad + \sum_{j=1}^N 4 \left( \frac{h}{2\pi}\right)^{2j}
\left(\abs{f^{(2j-1)}_x(-bh)}
  +\abs{f^{(2j-1)}_x(bh)}\right). \nonumber
\end{align}
\noindent
Now, the derivatives of the function $f_x(y)$ are well-behaved and
easy to compute. By direct computation, for any $k,$ its
$k^{\textrm{th}}$ derivative $f_x^{(k)}(y)$ is of the form $f_x(y)
p_k(-xe^y),$ where $p_k$ is a degree-$k$ polynomial. Since the
exponential function grows faster than any polynomial, this implies
that for any fixed $k$ and $x,$ $f_x^{(k)}(y)$ vanishes as $s$ goes to
$\pm \infty.$ We let $b$ go to $\infty$ and observe that the
discretized sum converges to $h \sum_{j \in \bz} f_x(jh),$ hence,
\eqref{eq:EM-error} implies that
\begin{align}
\label{eq:inf-sum-error}
\abs{\int_{-\infty}^\infty f_x(y)\dif{y} - h\sum_{j \in \bz} f_x(jh)} \le
4\left( \frac{h}{2\pi} \right)^{2N} \int_{-\infty}^{\infty}
\abs{f^{(2N)}_x(y)} \dif{y}.
\end{align}
Thus, all we need to show is that the function $f_x$ is \emph{smooth
  enough}. There is an easy recurrence between the coefficients of
$p_k$ for various $k$, and it allows us to crudely bound the sum of
their absolute values by $(k+1)^{k+1}$ (Fact 1.3
in~\cite{SachdevaV13}). This, in turn, implies the bound $
\int_{-\infty}^{\infty} \abs{f^{(2N)}_x(y)} \dif{y} \le x^{-1} \cdot
\Theta\left(N\right)^{4N}$ (Lemma 1.4 in~\cite{SachdevaV13}). Thus, we
can choose $h = \Theta(\nfrac{1}{N^2})$ and $N = \Theta( \log
\nfrac{1}{\delta})$ to obtain the following approximation for all $x
> 0$:
\begin{align}
\label{eq:inf-sum-error-final}
\abs{x^{-1} - h\sum_{j \in \bz} e^{jh}\cdot e^{-xe^{jh}}} =
\abs{\int_{-\infty}^\infty f_x(y)\dif{y} - h\sum_{j \in \bz} f_x(jh)}
= O\left(\delta \cdot x^{-1} \right).
\end{align}

\noindent
The final step is to truncate the above discretization. Since the
function $f_x(y)$ is non-decreasing for $y < \log \nfrac{1}{x},$ we
can majorize the lower tail with an integral to obtain $h\sum_{j < A}
f_x(jh) \le \int_{-\infty}^{Ah} f_x(t) \dif{t} = x^{-1} \left(1 -
  e^{-xe^{Ah}}\right).$ Thus, for $A = \floor{-\nfrac{1}{h}\cdot \log
  \nfrac{1}{\delta}},$ we obtain that the lower tail is $O(\delta
  \cdot x^{-1}).$ Similarly, for the upper tail, using that
$f_x(y)$ is non-increasing for $y \ge \log \frac{1}{x},$ for $B =
\ceil{\nfrac{1}{h} \cdot \log \left( \nfrac{1}{\eps} \log
    \nfrac{1}{\delta} \right)},$ we obtain that the upper tail $h\sum_{j
  > B} f_x(jh)$ is $O(\delta \cdot x^{-1}).$ Combining these
tail bounds with~\eqref{eq:inf-sum-error-final}, we obtain
\begin{align*}
  \abs{x^{-1}-h\sum_{j \ge A}^B e^{jh}\cdot e^{-xe^{jh}} }
 = O\left(\delta \cdot x^{-1}\right),
\end{align*}
which completes the proof.
\end{proof}

\section{Applications}\label{sec:applications}
In this section, we present several algorithmic applications of the approximation theory results obtained in the previous sections. All these results are obtained by lifting the approximation results for scalar functions such as $x^s,$ $e^{-x}$ or $x^{-1}$  to the matrix world. Since matrices capture graphs, we often obtain fast algorithms for important graph problems. We start with some basics on matrices and graphs. 
     
\subsection{Matrices and Graphs}
We are primarily concerned with $n \times n$ symmetric matrices over the reals. 
A fundamental theorem in linear algebra (see \cite[Chapter 1]{Vishnoi12}) 
asserts  that every symmetric matrix $A \in \mathbb{R}^{n \times n}$ has $n$ real eigenvalues along with eigenvectors that can be chosen to be orthogonal. Thus, $A$ can be written as $U\Lambda^\top U$ where the columns of $U$ are the eigenvectors of $A$ and $\Lambda$ is the diagonal matrix corresponding to its eigenvectors. $A$ is said to be positive semidefinite (PSD) if all its eigenvalues are non-negative and positive definite (PD) if all its eigenvalues are strictly positive. The spectral norm of a matrix $A$ is its $2 \rightarrow 2$ norm, 
which is defined to be $\sup_{x \neq \vec{0}} \frac{\|Ax\|_2}{\|x\|_2}.$ Thus, all eigenvalues of $A$ are bounded in absolute value by the spectral norm of $A$.  For a  PSD matrix, its norm is  its largest eigenvalue. Henceforth, $\| \cdot \|$ is used to denote the $\ell_2$ norm for vectors and the spectral norm for matrices. 

\vspace{2mm}
For a function $f:\mathbb{R} \mapsto \mathbb{R}$ and a real symmetric matrix $A,$ one can define $f(A)$ as follows. First, for a diagonal matrix $\Lambda$ let $f(\Lambda)$ denote the diagonal matrix where the $(i,i)$th entry is $f(\Lambda_{i,i}).$ Then, $f(A)$ is defined to be $Uf(\Lambda)U^\top$ where $U\Lambda U^\top$ is the spectral decomposition of $A.$ 
Thus, a polynomial $p(x) = \sum_{i=0}^d c_i x^i,$  when applied to $A,$ is a matrix $p(A)$ which can be seen to be $\sum_{i=0}^d c_iA^i$ since $U^\top U = I.$   
Moreover, $\exp(A)$ or $e^A$ is  $\sum_{k=0}^\infty \frac{A^k}{k!}.$

\vspace{2mm}
For an $n \times n$ matrix $A$ and a vector $v,$ often we are  interested in the solution to the system of equations $Ax=v.$ We  only consider the case when either $A$ is invertible or $v$ lies in the span of the columns of $A.$ In either case, with a slight abuse of notation, we denote the solution by $x=A^{-1} v.$ 

\vspace{2mm}
Finally, we are in interested in undirected graphs $G=(V,E)$ with $n \defeq |V|$ vertices and $m \defeq |E|$ edges. The edges of the graph may have positive weights and this is captured by the {\em adjacency} matrix $A$ of the graph; an $n \times n$ matrix where $A_{i,j}$ is the  weight of the edge between $i$ and $j.$ We  assume that the graph has no self-loops and, hence, $A_{i,i}=0$ for all $i.$ Since the graph is undirected, $A$ is symmetric and has $m$ non-zero entries. Let $e_i$ denote the vector with $1$ in the $i$th coordinate and $0$ elsewhere. The matrix 
$L \defeq \sum_{e=i,j} A_{i,j} (e_i-e_j)(e_i-e_j)^\top$ is called the {\em combinatorial Laplacian} of $G.$ If $D$ is the diagonal matrix with $D_{i,i} \defeq \sum_{j \neq i} A_{i,j},$ then $L=D-A.$ The Laplacian of a graph $L$ is always PSD; $ L \succeq 0.$ 

\subsection{Simulating Random Walks and Finding Sparse Cuts}
\label{sec:applications:sim-rw}
Consider a graph $G=(V,E)$ with $|V|=n, |E|=m,$ and let $A$ denote its
adjacency matrix. The simple random walk on such a graph corresponds
to the process where, starting at a vertex $i,$ one selects a vertex
$j$ with probability proportional to $A_{i,j}$, and then repeats with
$j$ as the starting vertex. Suppose we select an initial vertex from
the probability distribution $v \in \rea^n$ and perform an $s$-step
random walk, the probability distribution of the vertex after $s$
steps of this random walk is given by $\tilde{W}^s v$ \footnote{The
  convention in the Markov Chains literature is to express the
  probability distribution as a row vector $v^\top$ instead, giving
  the probability after $s$ steps as $v^\top W^s.$ We will use the
  column vector convention. The only resulting change is that the walk
  matrix is replaced by its transpose everywhere.}, where $\tilde{W}
\defeq D^{-1} A.$ Computing such distributions, sometimes starting
from arbitrary real vectors rather than probability vectors, is a
fundamental problem that finds many applications, for instance in
finding sparse cuts in graphs as explained below; often, a good enough
approximation to such a distribution suffices.

\subsubsection{Quadratically Faster Random Walks}\label{sec:rw}
A simple way to compute $\tilde{W}^sv$   is to multiply the matrix
$\tilde{W}$ with $v$ a total of $s$ times, which requires $O(ms)$ operations. We now show that, as
an immediate application of the polynomial approximations to $x^s$ that
we developed in Section~\ref{sec:xk-approx}, we can approximate this
distribution using roughly $\sqrt{s}$ multiplications with $\tilde{W}.$ First,
we extend Theorem~\ref{thm:xk-approx} to matrices.
\begin{theorem}[Corollary to Theorem~\ref{thm:xk-approx}]
\label{cor:matrix-approx-poly}
For a symmetric $M$ with $\norm{M} \leq 1$, a positive integer $s$ and
any $\delta > 0,$ define $d \defeq \ceil{\sqrt{2s \log
    \nfrac{2}{\delta}}}$. Then, the degree-$d$ polynomial $p_{s,d}(M),$
defined by \eqref{eq:approx-poly} satisfies
$\norm{M^s - p_{s,d}(M)} \leq \delta.$
\end{theorem}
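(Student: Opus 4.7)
The plan is to reduce the matrix statement to the scalar approximation result of Theorem~\ref{thm:xk-approx} via the spectral theorem. Since $M$ is symmetric and real, it admits a spectral decomposition $M = U\Lambda U^\top$ where $U$ is orthogonal and $\Lambda = \text{diag}(\lambda_1,\dots,\lambda_n)$. The assumption $\norm{M} \leq 1$ means every eigenvalue satisfies $\lambda_i \in [-1,1]$, which is exactly the interval on which Theorem~\ref{thm:xk-approx} gives a uniform bound.

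Next, I would invoke the functional calculus recalled in Section~\ref{sec:applications}: for any polynomial $q(x) = \sum_j c_j x^j$, we have $q(M) = U\,q(\Lambda)\,U^\top$ where $q(\Lambda)$ is the diagonal matrix with entries $q(\lambda_i)$. Applied to both $x^s$ and $p_{s,d}(x)$, this yields
\[
M^s - p_{s,d}(M) \;=\; U\bigl(\Lambda^s - p_{s,d}(\Lambda)\bigr)U^\top,
\]
and the right-hand side is a diagonal matrix conjugated by an orthogonal transformation. Since orthogonal conjugation preserves spectral norm and the spectral norm of a diagonal matrix equals the largest absolute entry, we obtain
\[
\norm{M^s - p_{s,d}(M)} \;=\; \max_{1 \le i \le n}\bigl|\lambda_i^s - p_{s,d}(\lambda_i)\bigr|.
\]

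Finally, since each $\lambda_i \in [-1,1]$, each term on the right is bounded by $\sup_{x\in[-1,1]} |x^s - p_{s,d}(x)|$, which Theorem~\ref{thm:xk-approx} controls: for the chosen $d = \lceil \sqrt{2s \log(2/\delta)}\rceil$, this supremum is at most $\delta$. Combining the displays gives $\norm{M^s - p_{s,d}(M)} \le \delta$, finishing the proof. There is no real obstacle here; the only point that deserves care is making explicit that $p_{s,d}$, defined in~\eqref{eq:approx-poly} as an expectation of Chebyshev polynomials, is indeed a genuine polynomial of degree at most $d$, so that the functional calculus applies and $p_{s,d}(M)$ is unambiguously defined. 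That was already noted immediately after \eqref{eq:approx-poly}, so the corollary follows in essentially one step.
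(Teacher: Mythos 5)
Your proof is correct and follows essentially the same route as the paper: diagonalize $M$, observe that eigenvalues lie in $[-1,1]$, apply the scalar bound of Theorem~\ref{thm:xk-approx} eigenvalue by eigenvalue, and use invariance of the spectral norm under orthogonal conjugation. The paper writes the decomposition as $\sum_i \lambda_i u_i u_i^\top$ rather than $U\Lambda U^\top$, but the argument is identical.
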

\begin{proof}
Let $\{\lambda_i\}_i$ be the eigenvalues of $M$ with $\{u_i\}_i$ as the
set of corresponding orthogonal eigenvectors. Since $M$ is symmetric and $\norm{M}\le 1,$
we have $\lambda_i \in [-1,1]$ for all $i.$ Thus,
Theorem~\ref{thm:xk-approx} implies that for all $i,$ $|\lambda_i^s-
p_{s,d}(\lambda_i)| \le \delta.$ Note that if $\lambda_i$ is an
eigenvalue of $M,$ then $\lambda_i^{s}$ is the corresponding
eigenvalue of $M^{s}$ and $p_{s,d}(\lambda_i)$ is that of $p_{s,d}(M)$
with the same eigenvector. Hence, we have $\norm{M^s - p_{s,d}(M)} =
\norm{\sum_i (\lambda_i^s - p_{s,d}(\lambda_i))u_iu_i^\top} = \max_i
|\lambda_i^s - p_{s,d}(\lambda_i)| \le \delta.$
\end{proof}

\noindent
When we try to apply this theorem to $\tilde{W}$ we face the obvious problem that $\tilde{W}$ is not necessarily symmetric.  This can be handled  by considering the matrix $W \defeq D^{-\nfrac{1}{2}} \tilde{W} D^{\nfrac{1}{2}}$, which is symmetric. Thus, $\tilde{W}^s v = D^{\nfrac{1}{2}} W^s D^{-\nfrac{1}{2}}v.$ For now, we consider the case that $G$ is $d$-regular, \emph{i.e.}, $D=d \cdot I$ for some $d.$  In this case $\tilde{W}=W.$  Further, it can be seen that $\|W\| \leq 1$ since $W$ is a doubly stochastic matrix. 

\vspace{2mm}
Note that if we can compute the coefficients of $p_{s,d}$ efficiently, then we can quickly compute $p_{s,d}(W)v$ for $d = \ceil{\sqrt{2s \log  \nfrac{2}{\delta}}}$. Thus, appealing to the theorem above, we obtain an efficient $\delta$ approximation to $W^sv,$ \emph{i.e.}, $\norm{W^sv - p_{s,d}(W)v} \le \delta \norm{v} \leq \delta.$ 
In order to compute the coefficients, first  observe that we do not need to
explicitly compute the coefficients of the polynomial $p_{s,d}$ since we
can use the expansion of $p_{s,d}$ in terms of Chebyshev polynomials as in \eqref{eq:approx-poly} and the recursive definition of Chebyshev polynomials from
\eqref{eq:prelims:chebyshev-recurrence} to compute the
vectors $T_0(W)v, \ldots, T_d(W)v$ using only $d$ multiplications with
the matrix $W.$ 

\vspace{2mm}
The expansion of $p_{s,d}$ in terms of Chebyshev polynomials given by \eqref{eq:approx-poly}  implies that the non-zero
coefficients are binomial coefficients up to a scaling
factor. For instance,  assuming that $s$ is even, the coefficient of
$T_1(\cdot)$ is $2^{-s+1}\binom{\nfrac{s}{2}}{\nfrac{s}{2}+j}.$
\emph{Prima facie}, computing these binomial coefficients requires $O(s)$
multiplications and divisions which is worse than the trivial $O(ms)$ time algorithm to compute $W^sv.$  However, since the non-zero
coefficients are scaled binomial coefficients, if $c_i$ is the
coefficient of $T_i,$ the ratios $\nfrac{c_i}{c_0}$ are rational
numbers that we can compute explicitly. We also note that the sum of
the coefficients of $T_0(\cdot),\ldots, T_d(\cdot)$ in $p_{s,d}$ lies
between 1 and $1-\delta.$ Thus, we can explicitly compute $\sigma \defeq
\sum_{i=0}^d c_i/c_0,$ and $\nfrac{1}{\sigma}\cdot \nfrac{c_i}{c_0} =
\nfrac{c_i}{\alpha}, $ where $\alpha \defeq
\Pr_{Y_1,\ldots,Y_s}\left[\left|D_s\right| \le d\right] = \sum_{i=0}^d
c_i$ is the sum of coefficients of $p_{s,d}.$ Hence, we know the
coefficients in the Chebyshev expansion of the polynomial $\alpha^{-1}
\cdot p_{s,d}(\cdot),$ and it satisfies $\sup_{x \in [-1,1]}
\left|\alpha^{-1} \cdot p_{s,d}(x) - x^s\right| \le
\nfrac{\delta}{(1-\delta)} = O(\delta).$\footnote{  An important issue we need to note is the bit length of the numbers
  involved. Even though it is not possible to store these numbers
  precisely, here we  show that few bits to store each of these
  numbers are sufficient. Assume that we store each of the numbers in
  $b$-bit registers. All the numbers involved in computing the ratios
  of successive coefficients are $O(s),$ thus we need $b = \Omega(\log
  s).$ Each of these ratios can be computed to an accuracy of
  $O(2^{-b}),$ and since there are $O(d)$ multiplications/divisions
  involved, we can compute all of $\nfrac{c_i}{c_0}$ up to an accuracy
  of $O(d2^{-b}).$ Hence, the absolute error in $\sigma$ is at most
  $O(d^2 2^{-b}).$
 This implies that if $d^22^{-b} =
  O(\delta),$ the error in the estimate $u$ is at most $O(\delta)
  \norm{v}.$ Thus, $b = \Theta(\log \nfrac{s}{\delta})$ suffices.}
 We summarize this in the
following theorem.
\begin{theorem}\label{thm:simrw}
Let ${W}$ be the random walk matrix for a  regular graph $G$ with $n$
vertices and $m$ edges . Then, for any positive integer $s,$ starting
distribution $v,$ and $\delta \in (0,\nfrac{1}{2}]$, there is an
algorithm that computes a vector ${w}$ such that $\norm{W^sv-{w}} \le
\delta  $ in $O\left(\left(m+n\right) \sqrt{s \log
\nfrac{1}{\delta}}\right)$ arithmetic operations.
\end{theorem}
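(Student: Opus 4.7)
The plan is to reduce the problem to computing $p_{s,d}(W) v$ for the polynomial $p_{s,d}$ constructed in Section~\ref{sec:xk-approx} with $d = \ceil{\sqrt{2s \log(2/\delta)}}$, and then evaluate this matrix-vector product efficiently using the Chebyshev-basis representation of $p_{s,d}$. Because $G$ is regular, $W = \tilde{W} = A/d$ is symmetric, and as a doubly stochastic matrix its eigenvalues lie in $[-1,1]$; Theorem~\ref{cor:matrix-approx-poly} therefore gives $\norm{W^s - p_{s,d}(W)} \le \delta$, so $\norm{W^s v - p_{s,d}(W) v} \le \delta \norm{v} \le \delta$ for a probability distribution $v$. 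The entire task thus reduces to computing $p_{s,d}(W) v$ in $O((m+n) d) = O((m+n)\sqrt{s\log(1/\delta)})$ arithmetic operations.

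First I would expand $p_{s,d}$ in the Chebyshev basis as $p_{s,d}(x) = \sum_{k=0}^{d} c_k T_k(x)$, as in \eqref{eq:approx-poly}. The three-term recurrence from \eqref{eq:prelims:chebyshev-recurrence} then lets me build the vectors $u_k \defeq T_k(W) v$ iteratively from $u_0 = v$ and $u_1 = W v$ via $u_{k+1} = 2 W u_k - u_{k-1}$, incurring exactly one multiplication by $W$ (cost $O(m)$) and $O(n)$ additional work per step. After producing $u_0,\ldots,u_d$ in $O((m+n)d)$ time, the output $w = \sum_{k=0}^d c_k u_k$ takes another $O(nd)$ operations, matching the stated budget.

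The main obstacle is getting the coefficients $c_k$ without paying $\Omega(s)$ time or losing too much precision, since each $c_k$ is an explicit scaled binomial coefficient $2^{-s+1}\binom{s/2}{s/2+j}$ (when $s$ is even) that one cannot afford to compute directly. Here I would exploit that the ratio $c_{k+1}/c_k$ is an elementary rational function of $s$ and $k$, so all $d+1$ ratios $c_k/c_0$ can be generated in $O(d)$ multiplications and divisions. To avoid ever computing $c_0$ itself, I would normalize: setting $\sigma \defeq \sum_k c_k/c_0$ and using the identity $\alpha \defeq \sum_k c_k = \Pr_{Y_1,\ldots,Y_s}[|D_s| \le d] \in [1-\delta,1]$, I obtain $c_k/\alpha = (c_k/c_0)/\sigma$. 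These normalized numbers are exactly the Chebyshev coefficients of $\alpha^{-1} p_{s,d}$, which satisfies $\sup_{x \in [-1,1]} |\alpha^{-1} p_{s,d}(x) - x^s| \le \delta/(1-\delta) = O(\delta)$ since $\delta \le 1/2$.

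Finally I would carry out a bit-precision analysis to ensure correctness in finite-precision arithmetic. Storing each quantity in $b$-bit registers introduces $O(2^{-b})$ relative error per operation; across the $O(d)$ multiplications needed for the ratios and the $O(d)$ additions for $\sigma$, the accumulated error in each $c_k/\alpha$ is $O(d \cdot 2^{-b})$ and the error in the final linear combination $\sum_k (c_k/\alpha) u_k$ is $O(d^2 2^{-b}) \norm{v}$, using that $\norm{u_k} \le \norm{v}$ by Proposition~\ref{prop:poly:prelims:chebyshev-bdd}. Choosing $b = \Theta(\log(s/\delta))$ bounds the total error by $O(\delta)$ without affecting the asymptotic runtime. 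Rescaling $\delta$ by an absolute constant at the start absorbs all the $O(\cdot)$ factors, giving the desired bound of $\norm{W^s v - w} \le \delta$ in $O((m+n)\sqrt{s \log(1/\delta)})$ operations.
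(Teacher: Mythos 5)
Your proposal is correct and follows essentially the same route as the paper: both reduce to evaluating $p_{s,d}(W)v$ via the Chebyshev three-term recurrence, both compute the coefficients through successive ratios $c_k/c_0$ and normalize by $\sigma$ to avoid ever computing $c_0$, and both close with the same $b=\Theta(\log(s/\delta))$ bit-precision analysis. No gap to report.
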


\noindent
Theorem \ref{thm:simrw} can be easily generalized to a reversible
irreducible Markov chain with transition matrix $P$ and stationary
distribution $\pi.$ Let $\Pi$ be the diagonal matrix defined by
$\Pi(i,i) = \pi(i),$ the matrix $\Pi^{\nfrac{1}{2}} P^\top
\Pi^{-\nfrac{1}{2}}$ is symmetric and has norm at most 1 and, hence, we
can apply the above algorithm with $W = \Pi^{\nfrac{1}{2}} P^\top
\Pi^{-\nfrac{1}{2}}$ and the vector $\Pi^{\nfrac{1}{2}} v,$ and obtain
a vector $u$ with the approximation guarantee $\norm{(P^\top)^s v-{u}}
\le \delta \sqrt{\frac{\max_i \pi(i)}{\min_i \pi(i)}} \norm{v}$ in
$O\left(\left(t_P+n\right) \sqrt{s \log \nfrac{1}{\delta}}\right)$
arithmetic operations, where $t_P$ is the cost of multiplying the
matrix $P^\top$ with a given vector.

\subsubsection{Finding Sparse Cuts}\label{sec:sparsestcut}
We now outline how we can use the algorithm in the proof of Theorem \ref{thm:simrw} to speed up an algorithm to find {\em sparse}  cuts in a graph. For a graph $G = (V,E)$ with adjacency
matrix $A,$ a cut $S \subseteq V$ is said to have {\em sparsity} or {\em conductance}
\[\phi(S) \defeq \frac{\sum_{i \in S} \sum_{j \not\in S}
  A(i,j)}{\min\left(\sum_{i \in S}\sum_{j \in V} A(i,j), \sum_{i
      \not\in S}\sum_{j \in V} A(i,j)\right)}.\] The conductance of a
graph, $\phi \defeq \min_{S \subseteq V} \phi(S)$, gives a measure of
how interconnected a graph and is an important problem theory and practice,  see \cite[Chapter 5]{Vishnoi12} 
for a detailed discussion on this problem. It is also NP-hard to find the cut of least conductance and, hence, one has to be satisfied with algorithms that compute cuts whose sparsity is close to that of the sparsest cut. A celebrated result of  Cheeger \cite{Cheeger70} 
and Alon and Milman \cite{AlonM85}
relates the second smallest eigenvalue of the Laplacian $L$ of $G,$ denoted $\lambda_2(L)$  to the conductance of the graph. Often referred to as Cheeger's inequality, the result, stated here for $d$-regular graphs,  asserts that 
$  \phi \leq  O \left( \sqrt{\nfrac{ \lambda_2}{d}}\right).$ Let $\lambda \defeq  \nfrac{\lambda_2}{d}$ be the normalized {\em spectral gap} and $\mathcal{L} \defeq \frac{1}{d} L$ be the normalized Laplacian.
In fact, a cut of conductance $O \left( \sqrt{\lambda}\right)$ can be recovered from the second eigenvector of $L$ and, thus, algorithmically, it is sufficient to compute the second eigenvector of $L.$  Mihail \cite{Mihail89} 
proved a stronger version of this theorem which showed how to produce a cut of sparsity at most $O(\sqrt{\lambda'})$ from any  vector $v$  (orthogonal to the all ones vector) such that $\frac{v^\top \mathcal{L} v}{v^\top v} = \lambda'.$ 
Note that for $d$-regular graphs, the all ones vector is an eigenvector of $L$ with eigenvalue $0.$ Hence, the second eigenvector is orthogonal to this vector.

\vspace{2mm}
We show how, as a direct consequence to  Theorem \ref{thm:simrw}, we can produce a vector $u$ such that $\frac{u^\top  \mathcal{L} u}{u^\top u} \le O(\lambda)$ giving us an algorithm to find a cut of sparsity at most $O(\sqrt{\lambda})$ in time roughly ${O}(\nfrac{m}{\sqrt{\lambda}}).$ This gives a   quadratically better dependence in $\lambda$ than the standard algorithm using the Power method.  Formally, we prove the following theorem.

\begin{theorem}
  Given an undirected regular graph $G$ with normalized spectral gap
  $\lambda,$ we can find a cut of conductance $O(\sqrt{\lambda})$ with
  probability at least $\nfrac{1}{3}$ using
  $O\left(\nfrac{m}{\sqrt{\lambda}} \cdot \log
    \nfrac{n}{\lambda}\right)$ operations.
\end{theorem}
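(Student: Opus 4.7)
The plan is to use Theorem~\ref{thm:simrw} to simulate a power method on the lazy random walk matrix, producing a vector with small Rayleigh quotient against $\mathcal{L}$, and then invoke Mihail's sweep-cut rounding. Since $G$ is $d$-regular, $W = \nfrac{1}{d}A$ is symmetric, $\mathcal{L} = I - W$, and the lazy walk matrix $M \defeq \nfrac{1}{2}(I+W) = I - \nfrac{1}{2}\mathcal{L}$ has eigenvalues in $[0,1]$ with the same eigenvectors as $\mathcal{L}$. Let $0 = \mu_1 < \mu_2 = \lambda \le \cdots \le \mu_n \le 2$ be the eigenvalues of $\mathcal{L}$ with orthonormal eigenvectors $u_1 = \mathbf{1}/\sqrt{n}, u_2,\ldots,u_n$. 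Working with $M$ rather than $W$ avoids the eigenvalues of $W$ near $-1$ which the power method would otherwise amplify.

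First I would sample a random $\pm 1/\sqrt{n}$ vector $v_0$ and set $v = v_0 - (\mathbf{1}^\top v_0/n)\mathbf{1}$, which lies in $\mathbf{1}^\perp$ and has $\|v\|\le 1$. A standard anti-concentration argument shows that with probability at least $\nfrac{1}{2}$ the coefficient $\alpha_2 \defeq \langle v, u_2\rangle$ satisfies $|\alpha_2| \ge c/\sqrt{n}$ for an absolute constant $c$. Next I would set $s \defeq \ceil{C\log n / \lambda}$ for a large enough constant $C$ and use the algorithm of Theorem~\ref{thm:simrw} applied to $M$ to compute a vector $\tilde w$ with $\|\tilde w - M^s v\| \le \delta$ for $\delta = \poly(\lambda/n)$. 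This costs
\[O\!\left((m+n)\sqrt{s \log \nfrac{1}{\delta}}\right) \;=\; O\!\left(\tfrac{m}{\sqrt\lambda}\log\tfrac{n}{\lambda}\right)\]
operations (using $m \ge n-1$ for connected graphs).

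The key analytic step is to show that $\tilde w$ has Rayleigh quotient $O(\lambda)$ against $\mathcal{L}$. Let $w \defeq M^s v = \sum_{i\ge 2} \alpha_i (1-\mu_i/2)^s u_i$ and split the indices into $G = \{i : \mu_i < 4\lambda\}$ and $B = \{i : \mu_i \ge 4\lambda\}$. On $G$ the eigenvalue bound gives $w^\top \mathcal{L} w$ restricted to $G$ at most $4\lambda \cdot (\text{mass on }G)$. On $B$, each term is damped by $(1-2\lambda)^{2s} \le n^{-4C}$, whereas the denominator is at least $\alpha_2^2(1-\lambda/2)^{2s} \ge (c^2/n)\, n^{-C}$; choosing $C$ a sufficiently large constant makes the contribution of $B$ to the Rayleigh quotient $o(\lambda)$. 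Thus $w^\top \mathcal{L} w / w^\top w \le O(\lambda)$ and $\|w\| \ge n^{-O(1)}$. Since $\|\mathcal L\| \le 2$ and $\|e\| = \|\tilde w - w\| \le \delta$, expanding
\[\tilde w^\top \mathcal{L}\tilde w \;\le\; w^\top\mathcal{L} w + 4\|w\|\,\|e\| + 2\|e\|^2, \qquad \|\tilde w\|^2 \;\ge\; (\|w\|-\|e\|)^2 \;\ge\; \tfrac{1}{4}\|w\|^2,\]
and picking $\delta$ polynomially smaller than $\lambda \|w\|$ (still costing only a $\log(n/\lambda)$ factor), the Rayleigh quotient of $\tilde w$ is $O(\lambda)$.

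Finally, feed $\tilde w$ (re-projected onto $\mathbf{1}^\perp$) into Mihail's sweep-cut procedure to extract a cut of conductance $O(\sqrt{\lambda})$ in $O(m + n\log n)$ time, which is subsumed by the running time above. The probability of success is at least $\nfrac{1}{2}$ from the initial sampling (which can be reduced to $\nfrac{1}{3}$ to allow for any low-probability slack). The main obstacle is the perturbation analysis in the third paragraph: one must ensure that the additive error from Theorem~\ref{thm:simrw} does not dominate either the numerator or the denominator of the Rayleigh quotient, and the key leverage is that $\|M^s v\|$ is only polynomially small in $n$, so a polynomially small $\delta$ (hence only a $\log(n/\lambda)$ overhead) suffices.
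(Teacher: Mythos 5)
Your proof follows essentially the same strategy as the paper: simulate the power method via Theorem~\ref{thm:simrw}, control the Rayleigh quotient of the approximating vector, and hand it to Mihail's rounding. One place where you are more careful than the paper: the paper applies \alg directly to $W = \nfrac{1}{d}A$, and its appeal to the power method bound $\frac{u^{\star\top}Wu^\star}{u^{\star\top}u^\star}\ge 1-O(\lambda)$ tacitly requires that $W$ have no eigenvalue near $-1$ (otherwise $W^s v$ can drift toward the opposite end of the spectrum); your switch to the lazy walk $M=\nfrac{1}{2}(I+W)$, whose spectrum lives in $[0,1]$, sidesteps this without changing the asymptotics, since $M$ is still the walk matrix of a regular graph (with self-loops) and $\mathcal{L}=2(I-M)$. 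The remaining differences — $\pm 1/\sqrt{n}$ sampling with anti-concentration instead of a random unit vector, and the good/bad eigenvalue split in place of the paper's direct chain of inequalities — are cosmetic and yield the same bounds.
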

\begin{proof}
  We use the  algorithm in the proof of Theorem \ref{thm:simrw} to approximate $W^sv$ where $W=\frac{1}{d} A$ is the
  random walk matrix, $v$  a random unit vector orthogonal to the all
  ones vector, with parameters $s$ and $\delta$ (the required $s$ and
  $\delta$ is determined later). 
Note that $\lambda$ is the second smallest eigenvalue of $I-W$ and, hence, $1-\lambda$ is the second largest eigenvalue of $W.$ Let $v_2$ be the corresponding unit eigenvector.

\vspace{2mm}
Let $u$ be the approximating
  vector as obtained from Theorem \ref{thm:simrw}. Let  $u^\star \defeq W^sv,$ and let
  $\Delta \defeq u - u^\star.$ Thus, we know that $\norm{\Delta} \le
  \delta.$ Since $v$ was
  chosen as a random unit vector orthogonal to the uniform
  distribution, with probability at least $\nfrac{2}{3}$ we have,
  $v_2^\top v \ge \frac{1}{3\sqrt{n}}.$ This implies,
\[u^{\star \top} u^\star \ge \frac{1}{2}u^{\star \top} \mathcal{L} u^\star \ge
\lambda (1-\lambda)^{2s} \frac{1}{18n}.\]
We choose $\delta \defeq \sqrt{\lambda (1-\lambda)^{2s} \frac{2}{9n}},$
implying $u^{\star \top}  \mathcal{L} u^\star \ge \frac{1}{4}\delta^2.$
Thus, we have,
\[\frac{u^\top  \mathcal{L} u}{u^\top u} = \frac{(u^\star + \Delta)^\top  \mathcal{L} (u^\star + \Delta)}{(u^\star + \Delta)^\top (u^\star + \Delta)} \le \frac{2
  (u{^{\star\top}} \mathcal{L} u^\star + \Delta^\top \mathcal{L}
  \Delta)}{\left(\norm{u^\star} - \norm{\Delta}\right)^2} \le
\frac{2(u^{\star \top} \mathcal{L} u^\star +
  2\delta^2)}{\left(\norm{u^\star} - \delta\right)^2} \le \frac{12
  u^{\star\top} \mathcal{L} u^\star}{u^{\star \top} u^\star},\] where
the first inequality uses $(u^\star + \Delta)^\top \mathcal{L}
(u^\star + \Delta) \le 2 (u{^{\star\top}} \mathcal{L} u^\star +
\Delta^\top \mathcal{L} \Delta),$ which is the same as $0 \le (u^\star
- \Delta)^\top \mathcal{L} (u^\star - \Delta)$ after rearranging. As
dictated by the Power method to approximate the spectral gap, we
choose $s \defeq \ceil{\frac{\log ({9n}/{\lambda})}{2\log
    (1/(1-\lambda))}}$ to ensure $\frac{ u^{\star\top} A
  u^\star}{u^{\star \top} u^\star} \ge 1-O(\lambda)$
(see~\cite[Chapter 8]{Vishnoi12} for a proof). This
implies, \[\frac{u^{\star\top} \mathcal{L} u^\star}{u^{\star \top}
  u^\star} = 1-\frac{u^{\star\top} A u^\star}{u^{\star \top} u^\star}
= O(\lambda).\] Thus, $\frac{u^\top \mathcal{L} u}{u^\top u} \le
O(\lambda).$ We note that since $v$ is orthogonal to the all ones
vector, the vector $u$ returned is also orthogonal to the all ones
vector. Hence, by Mihail's theorem, we can round $u$ to find a cut of
conductance $O(\sqrt{\lambda}).$

The running time for this procedure is dominated by the time required
to compute $u,$ which requires $O\left((t_A + n)\sqrt{s \log
    \nfrac{1}{\delta}}\right)$ operations. We note that $\delta =
\Omega\left(\nfrac{\lambda}{n}\right),$ which implies that the total number
of operations required is $O\left(\nfrac{m}{\sqrt{\lambda}} \cdot \log
  \nfrac{n}{\lambda}\right).$
\end{proof}

\subsection{Solving Linear Equations}
\label{sec:applications:cg}
Given a matrix $A \in \rea^{n
  \times n}$ and a vector $v \in \rea^n,$  our goal is to find a
vector $x\in \rea^n$ such that $Ax = v.$ 
The exact solution $x^\star \defeq A^{-1}v$ can be computed by Gaussian
elimination, but the fastest known implementation requires
$O(n^{2.737})$ time. For many applications, the number of non-zero entries in $A$ (denoted by  $m$), or its {\em sparsity}, is much smaller than $n^2$ and, ideally, we would like linear solvers which run in time  $\tilde{O}(m)$~\footnote{The $\tilde{O}$ notation
  hides polynomial factors in $\log n.$}, roughly the time it takes to
multiply a vector with $A.$  While we are far from this goal for general matrices,  \emph{iterative methods}, based on techniques such as  gradient descent or the Conjugate Gradient method reduce the problem of solving a system of linear equations to the computation of a small number of matrix-vector products with the matrix $A$ when $A$ is symmetric and positive definite (PD).   The solutions these methods produce are, in general, approximate which suffice for most applications.
While the running time of the gradient descent-based method varies linearly with the condition number of $A,$ that of  the Conjugate Gradient method   depends on the square-root of the condition number; the quadratic saving occurring  precisely because of the $\sqrt{s}$ degree polynomials approximating $x^s.$

\subsubsection{A Gradient Descent Based Linear Solver} 

The gradient descent method is a general method to solve convex programs; here we only focus on its application to linear systems.  The PD assumption on $A$ allows us to formulate  the problem of solving $Ax=v$  as a convex programming problem:
Let the squared $A$-norm of the
error $x-x^\star$ be $f_A(x) \defeq \norm{x-x^\star}_A^2 \defeq
(x-x^\star)^\top A (x - x^\star) = x^\top A x - 2x^\top v + x^{\star
  \top}Ax^{\star}$, and find the vector $x$ that minimizes $f_A(x)$.
Since $A$ is symmetric and PD,
 this is a convex function, and has a
unique minimizer $x = x^\star.$

\vspace{2mm}
When minimizing $f_A$, each iteration of the gradient descent method is as follows: Start from the current estimate of $x^\star$, say $x_{t},$ and move
along the direction of maximum rate of decrease of the function $f_A,$
\emph{i.e.}, against its gradient, to the point that minimizes the
function along this line. Thus, $x_{t+1} = x_{t} - \alpha_t
\nabla f_A(x_{t}) = x_{t} - \alpha_{t} (Ax_{t}-v).$ If we
define the \emph{residual} $r_{t} \defeq v-Ax_t,$ we can easily
compute the $\alpha_{t}$ that minimizes $f_A$ as $\frac{r_{t}^\top
  r_{t}}{r_{t}^\top A r_{t}}.$ Substituting this value of
$\alpha_{t},$ and using $x^\star-x_{t} = A^{-1}r_{t},$ we obtain
\[\|x_{t+1}-x^\star\|_A^2 = \|x_{t}-x^\star\|_A^2 -
\frac{(r_{t}^\top r_{t})^2}{r_{t}^\top A r_{t}}
=\|x_{t}-x^\star\|_A^2 \left( 1- \frac{r_{t}^\top r_{t}}{r_{t}^\top A
    r_{t}} \cdot \frac{r_{t}^\top r_{t}}{r_{t}^\top A^{-1} r_{t}}
\right). \] For any $z,$ we have $z^\top A z \le \lambda_1 z^\top z$
and $z^\top A^{-1} z \le \lambda_n^{-1} z^\top z,$ where $\lambda_1$
and $\lambda_n$ are the smallest and the largest eigenvalues of $A$
respectively. Thus, $\| x_{t+1}-x^\star\|_A^2 \le
(1-\kappa^{-1})\|x_{t}-x^\star\|_A^2,$ where $\kappa \defeq
\nfrac{\lambda_1}{\lambda_n}$ is the \emph{condition number} of $A.$
Hence, assuming we start with $x_{0} = \mathbf{0},$ we can find an $x_{t}$
such that $\|x_{t} - x^\star\|_A \le \delta \|x^\star\|_A$ in
approximately $\kappa \log \nfrac{1}{\delta}$ iterations, with the cost of
each iteration dominated by $O(1)$ multiplications of the matrix $A$
with a given vector (and $O(1)$ dot product computations). Thus, this gradient descent-based method allows us to compute a $\delta$ approximate solution to $x^\star$ in time $O((t_A+n) \kappa \log \nfrac{1}{\delta}).$

\subsubsection{The Conjugate Gradient Method} 
Suppose we run the gradient descent-based 
method described in the previous section for $k$ iterations. Observe that
at any step $t,$ we have $x_{t+1} \in \Span\{x_{t},Ax_{t},v\}.$
Hence, for $x_{0}={\mathbf{0}}$, it follows by induction that $x_{k} \in
\Span\{v,Av,\ldots,A^kv\}.$ 
The running time of the gradient descent-based 
method is dominated by the time required to compute a basis for
this subspace. However, this vector $x_{k}$ may not be a vector from this subspace that minimizes $f_A.$ 
On the other hand, the essence of the Conjugate Gradient method is that it finds the vector in this subspace that minimizes $f_A$, 
in essentially the same amount of time required by $k$ iterations of the gradient
descent-based method.
We must address two important questions about the Conjugate Gradient method: (1) Can
the best vector be computed efficiently?, and (2) What is the
approximation guarantee achieved after $k$ iterations? We  show that the best vector can be found efficiently, and prove, using the polynomial
approximations to $x^k$ from Section~\ref{sec:xk-approx}, that
the Conjugate Gradient method achieves a quadratic improvement  over the gradient descent-based method in terms of its dependence on the condition number of $A.$

\vspace{2mm} Let us consider the first question. Let
$\{v_0,\ldots,v_k\}$ be a basis for $\calK = \Span\{v, Av, \ldots,
A^kv\}$ (called the \emph{Krylov subspace of order $k$}). Hence, any
vector in the subspace can be written as $\sum_{i=0}^k \alpha_i v_i.$
Our objective then becomes $\|x^\star - \sum_i \alpha_i v_i\|_A^2 =
(\sum_i \alpha_i v_i)^\top A (\sum_i \alpha_i v_i) - 2 (\sum_i
\alpha_i v_i)^\top v + \norm{x^{\star}}_A^2.$ Solving this
optimization problem for $\alpha_i$ requires matrix inversion, the
very problem we set out to mitigate. The crucial observation is that
if the $v_i$s are $A$-orthogonal, \emph{i.e.}, $v_i^\top A v_j = 0$
for $i \neq j,$ then all the cross-terms disappear. Then, $\|x^\star -
\sum_i \alpha_i v_i\|_A^2 = \sum_i (\alpha_i^2 v_i^\top A v_i -2
\alpha_i v_i^\top v) + \norm{x^{\star}}_A^2,$ and we can explicitly
obtain the best solution since $\alpha_i = \frac{v_i^\top v}{ v_i^\top
  A v_i}$ as in the gradient descent-based method.

\vspace{2mm}
Hence, if we can construct an $A$-orthogonal basis
$\{v_0,\ldots,v_k\}$ for $\calK$ efficiently, we do at least as well as the 
gradient descent-based method. If we start with an arbitrary
set of vectors and try to $A$-orthogonalize them via the Gram-Schmidt
process (with inner products with respect to  $A$), we  need to compute $k^2$
inner products and, hence, for large $k,$ it is not more efficient
than the gradient descent-based method. An efficient construction of such a
basis is one of the key ideas here. We  proceed iteratively,
starting with $v_{0} = v.$ At the $i^\textrm{th}$ iteration, we
compute $Av_{i-1}$ and $A$-orthogonalize it with respect to  $v_0,\ldots,v_{i-1},$
to obtain $v_i.$ It is trivial to see  that the vectors $v_0,\ldots,v_k$ are 
 $A$-orthogonal. Moreover, it is not difficult to see that for every
$i,$ we have $\Span\{v_0,\ldots,v_i\} = \Span\{v,Av,\ldots,A^i v\}.$
Now,  since $Av_j \in
\Span\{v_0,\ldots,v_{j+1}\}$ for every $j,$ and $A$ is symmetric,
$A$-orthonormality of the vectors implies $v_i^\top A (Av_j) =
v_j^\top A(Av_i) = 0$ for all $j$ such that $j+1 < i.$ This implies
that we need to $A$-orthogonalize $Av_i$ only to vectors $v_i$ and
$v_{i-1}.$ Hence, the time required for constructing this basis is
dominated by $O(k)$ multiplications of the matrix $A$ with a given
vector, and $O(k)$ dot-product computations.

\vspace{2mm}
Hence we can find the best vector in the Krylov subspace efficiently enough. We now
analyze the approximation guarantee achieved by this vector. Note that
the Krylov subspace $\calK = \Span\{v,Av,\ldots,A^kv\}$ consists of
exactly those vectors which can be expressed as $\sum_{i=0}^k \beta_i
A^i v = p(A)v,$ where $p$ is a degree-$k$ polynomial defined by the
coefficients $\beta_i.$ Let $\Sigma_k$ denote the set of all
degree-$k$ polynomials. Since the output vector $x_{k}$ is the
vector in the subspace that achieves the best possible error
guarantee, we have
\begin{align*}
  \|x_{k}-x^\star\|_A^2 & = \inf_{x \in \calK} \|x^\star-x\|_A^2 =
  \inf_{p \in \Sigma_k} \|x^\star - p(A)v\|_A^2 \le \|x^\star\|_A^2 \cdot
  \inf_{p \in \Sigma_k} \|I - p(A)A\|^2.
\end{align*}
 Observe that the last
expression can be written as $\|x^\star\|_A^2 \cdot \inf_{q \in
  \Sigma_{k+1}, q(0)=1} \|q(A)\|^2,$ where the minimization is now
over degree-$(k+1)$ polynomials $q$ that evaluate to 1 at 0. Since $A$
is symmetric and, hence, diagonalizable, we know that $\|q(A)\|^2 =
\max_{i} |q(\lambda_i)|^2 \le \sup_{\lambda \in [\lambda_n,\lambda_1]}
|q(\lambda)|^2,$ where $0 < \lambda_n \le \cdots \le \lambda_1$
denote the eigenvalues of the matrix $A.$ Hence, in order to prove
that an error guarantee of $\|x_{k}-x^\star\|_A \le \delta
\|x^\star\|_A$ is achieved after $k$ rounds, it suffices to show that
there exists a polynomial of degree $k+1$ that takes value 0 at 1, and
whose magnitude is less than $\delta$ on the interval
$[\lambda_n,\lambda_1].$

\vspace{2mm}
As a first attempt, we consider the degree-$s$ polynomial $q_0(x)
\defeq \left( 1 - \nfrac{2x} {(\lambda_1 + \lambda_n)}\right)^{s}.$
The maximum value attained by $q_0$ over the interval
$[\lambda_n,\lambda_1]$ is $\left( \nfrac{(\kappa - 1)}{(\kappa +1)}
\right)^s.$ Hence, $d_0 \defeq \ceil{\kappa \log \nfrac{1}{\delta}}$
suffices for this value to be less than $\delta.$ Or equivalently,
approximately $\kappa \log \nfrac{1}{\delta}$ rounds suffice for error
guarantee $\|x-x^\star\|_A \le \delta \|x^\star\|_A,$ recovering the
guarantee provided by the gradient descent-based method.

\vspace{2mm}
However, for a better guarantee, we can apply the polynomial approximation to $x^{d_0}$
   developed in Section~\ref{sec:xk-approx}. Let $z \defeq 1 -
\nfrac{2x} {(\lambda_1 + \lambda_n)}.$ Hence, $q_0(x) = z^s.$ As $x$
ranges over $[0,\lambda_n+\lambda_1],$ the variable $z$ varies over
$[-1,1].$ Theorem~\ref{thm:xk-approx} implies that for $d \defeq
\ceil{\sqrt{2d_0 \log \nfrac{2}{\delta}}},$ the polynomial
$p_{d_0,d}(z)$ approximates the polynomial $z^{d_0}$ up to an error of
$\delta$ over $[-1,1].$ Hence, the polynomial $q_1(x) \defeq
p_{d_0,d}\left(z\right)$ approximates $q_0(x)$ up to $\delta$ for all
$x \in [0,\lambda_1+\lambda_n].$ Combining this with the observations from
the previous paragraph, $q_1(x)$ takes value at most $2\delta$ on the
interval $[\lambda_n,\lambda_1],$ and at least $1-\delta$ at 0. Thus,
the polynomial $\nfrac{q_1(x)}{q_1(0)}$ is a polynomial of degree $d =
O(\sqrt{\kappa} \log \nfrac{1}{\delta}) $ that takes value 1 at 0, and
at most $\nfrac{2\delta}{(1-\delta)} = O(\delta)$ on the interval
$[\lambda_n,\lambda_1].$ Or equivalently, $O(\sqrt{\kappa} \log
\nfrac{1}{\delta})$ rounds suffice for an error guarantee
$\|x-x^\star\|_A \le O(\delta) \|x^\star\|_A,$ which gives a quadratic
improvement over the guarantee provided by the gradient descent-based method.
We summarize the guarantees of the Conjugate Gradient method in the
following theorem:
\begin{theorem}
Given an $n \times n$ symmetric matrix $A \succ 0,$ and a vector $v
\in \rea^n,$ the Conjugate Gradient method can find a vector $x$ such
that $\|x-A^{-1}b\|_A \le \delta \|A^{-1}b\|_A$ in time
$O((t_A+n)\cdot \sqrt{\kappa(A)} \log \nfrac{1}{\delta}),$ where $t_A$
is the time required to multiply $A$ with a given vector, and
$\kappa(A)$ is the condition number of $A.$
\end{theorem}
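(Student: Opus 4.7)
The plan is to formalize the outline already sketched in the preceding discussion by treating the theorem as a quantitative combination of two separate claims: a per-iteration cost bound for the Conjugate Gradient iteration, and a polynomial approximation bound that governs how many iterations $k$ suffice.

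First, I would set up the problem as minimizing the convex quadratic $f_A(x) = \|x - x^\star\|_A^2$ with $x^\star = A^{-1}v$, and define the Krylov subspace of order $k$ as $\calK_k \defeq \Span\{v, Av, \ldots, A^k v\}$. The iterate $x_k$ produced by CG is the minimizer of $f_A$ over $\calK_k$. To justify the per-iteration cost, I would construct an $A$-orthogonal basis $\{v_0, \ldots, v_k\}$ of $\calK_k$ inductively: starting from $v_0 = v$, at step $i$ compute $Av_{i-1}$ and $A$-orthogonalize against previously computed basis vectors. Because $A$ is symmetric and $Av_j \in \Span\{v_0, \ldots, v_{j+1}\}$, we have $v_i^\top A(Av_j) = v_j^\top A(Av_i) = 0$ whenever $j+1 < i$, so only the two most recent vectors must be used in the orthogonalization. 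This three-term recurrence yields a total cost of $O((t_A + n) k)$ arithmetic operations for $k$ iterations, and the minimizer $x_k$ can be read off by picking the optimal coefficient along each $A$-orthogonal direction, namely $\alpha_i = (v_i^\top v)/(v_i^\top A v_i)$.

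Next, I would translate the error guarantee to a polynomial approximation problem. Writing any $x \in \calK_k$ as $x = p(A)v$ with $p \in \Sigma_k$, and using $v = Ax^\star$, I obtain
\begin{equation*}
\|x_k - x^\star\|_A^2 \;=\; \inf_{p \in \Sigma_k} \|(I - p(A)A) x^\star\|_A^2 \;\leq\; \|x^\star\|_A^2 \cdot \inf_{\substack{q \in \Sigma_{k+1} \\ q(0)=1}} \sup_{\lambda \in [\lambda_n, \lambda_1]} q(\lambda)^2,
\end{equation*}
where I used that $A$ is symmetric so that $\|q(A)\| = \max_i |q(\lambda_i)|$, and that the eigenvalues lie in $[\lambda_n, \lambda_1]$. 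It therefore suffices to construct a polynomial $q$ of low degree with $q(0) = 1$ and $|q(\lambda)| \leq \delta$ throughout $[\lambda_n, \lambda_1]$.

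The construction proceeds in two stages, which is the heart of the argument. Define $z(x) \defeq 1 - \nfrac{2x}{(\lambda_1 + \lambda_n)}$, so that $x \in [0, \lambda_1 + \lambda_n]$ maps bijectively to $z \in [-1,1]$, and set $q_0(x) \defeq z(x)^s$ with $s \defeq \ceil{\kappa \log \nfrac{2}{\delta}}$; over $[\lambda_n, \lambda_1]$ we have $|z| \leq \nfrac{(\kappa-1)}{(\kappa+1)}$, giving $|q_0| \leq \delta/2$ there. Now applying Theorem~\ref{thm:xk-approx} to the monomial $z^s$ with accuracy parameter $\delta/2$ yields a polynomial $p_{s,d}$ of degree $d = O(\sqrt{s \log \nfrac{1}{\delta}}) = O(\sqrt{\kappa} \log \nfrac{1}{\delta})$ with $\sup_{z \in [-1,1]} |p_{s,d}(z) - z^s| \leq \delta/2$. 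Setting $q_1(x) \defeq p_{s,d}(z(x))$, the triangle inequality gives $|q_1(x)| \leq \delta$ on $[\lambda_n, \lambda_1]$ and $|q_1(0) - 1| \leq \delta/2$, so $q(x) \defeq q_1(x)/q_1(0)$ satisfies $q(0) = 1$ and $|q(\lambda)| \leq 2\delta/(1-\delta/2) = O(\delta)$ on $[\lambda_n, \lambda_1]$, with degree $d = O(\sqrt{\kappa} \log \nfrac{1}{\delta})$. Combining this with the Krylov subspace inequality and the per-iteration cost completes the proof after rescaling $\delta$ by a constant.

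The one genuine subtlety, and the step I would allocate most care to, is the two-stage polynomial construction and its degree accounting: one must be careful that composing the change of variables $z = z(x)$ with $p_{s,d}$ does not blow up the degree (it remains $d$), and that the normalization by $q_1(0)$ only affects the bound by a constant factor provided $\delta \leq \nfrac{1}{2}$. The basis-construction cost argument is routine once symmetry is invoked; the approximation-theoretic step is where the $\sqrt{\kappa}$ savings materializes, exactly as predicted by Theorem~\ref{thm:xk-approx}.
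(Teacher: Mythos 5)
Your proposal is correct and follows essentially the same route as the paper: the same $A$-orthogonal Krylov basis with the three-term recurrence yielding $O((t_A+n)k)$ cost, the same reduction of the error to $\inf_{q(0)=1}\sup_{\lambda\in[\lambda_n,\lambda_1]}|q(\lambda)|$, and the same two-stage construction $q_0(x) = z(x)^s$ followed by Theorem~\ref{thm:xk-approx} and normalization by $q_1(0)$. The only divergences are cosmetic (e.g., your $\ceil{\kappa\log\nfrac{2}{\delta}}$ versus the paper's $\ceil{\kappa\log\nfrac{1}{\delta}}$), and you correctly flag the normalization step as the place requiring $\delta \le \nfrac{1}{2}$.
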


\noindent
We note that this proof of the guarantee of the Conjugate Gradient
method is different from the traditional proof, which directly proves
that the polynomial $T_d(1 - \nfrac{2x} {(\lambda_1 + \lambda_n)})$
for $d = O(\sqrt{\kappa} \log \nfrac{1}{\delta})$ is such that it takes value 0 at 1, and is smaller than $\delta$ in magnitude on
the interval $[\lambda_n,\lambda_1]$ (see,
\emph{e.g.},~\cite{Vishnoi12}).

\subsection{Computing Eigenvalues via the Lanczos Method}\label{sec:lanczos}
The Conjugate Gradient method is one of several methods that work
with the Krylov subspace, collectively called \emph{Krylov subspace
  methods}. Another Krylov subspace method of particular interest is the Lanczos
method, which is typically employed for approximating the eigenvalues and
eigenvectors of a symmetric matrix, see \cite{parlett1980symmetric} for an extensive discussion.
In this section, we present the Lanczos method for approximating the largest eigenvalue of a symmetric matrix and show how existence of good polynomial approximations to $x^s$ allow us to easily improve upon the power method. We conclude this section with a brief discussion on the  generalizations of the Lanczos method  to computing quantities such as $f(A)v.$ Here, rather straightforwardly,  the {\em existence} of low degree polynomial approximations to $f(\cdot)$ in the interval containing the eigenvalues of $A$ imply fast algorithms for computing good approximations to $f(A)v$ quickly. 
For simplicity, in this section we  assume that the matrix is PSD. 

\vspace{2mm}
We start by recalling the variational characterization of eigenvalues: The
largest eigenvalue of $A$ is equal to the maximum value of the
Rayleigh quotient $\frac{w^\top A w}{w^\top w}$ over all non-zero
vectors $w.$ The power method (see~\cite[Chapter 8]{Vishnoi12}) tells us that for
a unit vector $v$ picked uniformly at random, with constant
probability, the vector $A^sv$ achieves a Rayleigh quotient of at
least $(1-\delta)\lambda_1$ for $s $ roughly $\nfrac{1}{\delta},$ where $\lambda_1$ is the largest eigenvalue of $A.$ 
The Lanczos method {\em essentially} finds the vector in the Krylov subspace
$\calK \defeq \{v,Av,\ldots,A^kv\}$ that maximizes the Rayleigh
quotient. We prove below, again using the polynomial approximations to
$x^s$ from Section~\ref{sec:xk-approx}, that in order to find a vector with Rayleigh
quotient at least $(1-\delta)\lambda_1$, it suffices to choose $k$
to be approximately  $\nfrac{1}{\sqrt{\delta}}$. Such a result was proven
in~\cite{KuczynskiW92}. We present a simpler proof here with a
slightly worse bound.

\vspace{2mm}
Let $\lambda_1 \ge \cdots \ge \lambda_n$ be the eigenvalues of $A,$
and let $u_1,\ldots,u_n$ be the corresponding eigenvectors. Let $\delta >
0$ be a specified error parameter. Pick $v$ to be a unit vector chosen
uniformly at random. Assume that $v$ can be expressed in the
eigenbasis for $A$; \emph{i.e.},  $v=\sum_{i=0}^n \alpha_i u_i.$
Let $\{v_0, \ldots,v_k\}$ be any orthonormal basis for the Krylov
subspace $\calK.$\footnote{Later on we show how to construct such a
  basis quickly, similar to the case of the Conjugate Gradient
  method.} Let $V$ denote the $n \times (k+1)$ matrix whose $i$th
column is $v_i.$ Thus, $V^\top V = I_{k+1}$ and $VV^\top$ is the
orthogonal projection on to $\calK.$ Let $T \defeq V^\top A V.$ The
$(k+1)\times (k+1)$ matrix $T$ denotes the operator $A$ restricted to
$\calK,$ expressed in the basis $\{v_i\}_{i=0}^k.$ Now, since $v, \;
Av \in \calK,$ we have $Av = (VV^\top)A(VV^\top)v = VTV^\top v.$
Iterating this argument, we obtain that for all $i \le k$, we have
$A^iv = VT^i V^\top v$ and, hence, by linearity, $p(A)v = Vp(T)V^\top
v$ for any $p \in \Sigma_k.$ Also, note that for every $w \in \calK,$
we have $w = VV^\top w,$ and hence, \[w^\top Aw = (w^\top VV^\top) A
(V V^\top w) = w^\top V (V^\top A V) V^\top w = (w^\top V) T (V^\top
w).\] In words, the above equality says that for any vector $w \in
\calK,$ the Rayleigh quotient of the vector $w$ with respect to $A$ is
the same as the Rayleigh quotient of the vector $V^\top w$ with
respect to $T.$

\vspace{2mm}
The Lanczos method computes the largest eigenvalue of $T,$ $\lambda_1(T)$
and outputs it as an approximation to $\lambda_1(A).$ By the variational characterization of the largest eigenvalues, it follows that $\lambda_1(T) \le \lambda_1(A).$ We have 
\[\lambda_1(T) =
\max_{w \in \rea^{k+1}} \frac{w^\top T w}{w^\top w}= \max_{z \in
  \calK} \frac{z^\top VTV^\top z}{z^\top z}
= \max_{z \in
  \calK} \frac{z^\top A z}{z^\top z}
= \max_{p \in \Sigma_{k}} \frac{v^\top p(A)
  A p(A) v}{v^\top p(A)^2 v} = \max_{p \in \Sigma_{k}} \frac{\sum_i
  \lambda_i p(\lambda_i)^2 \alpha_i^2 }{\sum_i p(\lambda_i)^2
  \alpha_i^2},\] 
where the second equality holds since 
$\calK$ is a $k+1$ dimensional subspace with the columns of $V$ as an
orthonomal basis, and the fourth equality holds because every $z \in
\calK$ can be expressed as $p(A)V$ for some $p \in \Sigma_{k}.$

\vspace{2mm}
Since $v$ is picked uniformly at random, with probability at least
$\nfrac{1}{2},$ we have $\alpha_1^2 \ge \nfrac{1}{4n}.$ Thus, assuming
that $\alpha_1^2 \ge \nfrac{1}{4n},$ for any $p \in \Sigma_{k},$ we
can bound the relative error:
\[\frac{\lambda_{1}(A)-\lambda_{1}(T)}{\lambda_{1}(A)} \le  \frac{\sum_{i=0}^n (1 -
  \nfrac{\lambda_i}{\lambda_1}) p(\lambda_i)^2 \alpha_i^2 }{\sum_{i=0}^n
  p(\lambda_i)^2 \alpha_i^2} \le \delta +\frac{\sum_{\lambda_i <
    (1-\delta)\lambda_1} p(\lambda_i)^2 \alpha_i^2 }{p(\lambda_1)^2
  \alpha_1^2} \le \delta + 4n \sup_{\lambda \in
  [0,(1-\delta)\lambda_1]} \frac{p(\lambda)^2}{p(\lambda_1)^2},\]
where the second inequality follows by splitting the sum in the
numerator depending on whether $\lambda \ge (1-\delta)\lambda_1,$ or
otherwise.

\vspace{2mm}
Observe that if we pick the polynomial $p(\lambda) =
\left(\nfrac{\lambda}{\lambda_1}\right)^s$ for $s \defeq \ceil{
  \nfrac{1}{2\delta} \cdot \log \nfrac{4n}{\delta}}$ in the above
bounds,  the relative error is bounded by $O(\delta).$
Hence, the Lanczos method after $k = O(\nfrac{1}{\delta} \cdot \log
\nfrac{n}{\delta})$ iterations finds a vector with Rayleigh
quotient at least $(1-O(\delta))\lambda_1$ with constant probability, essentially matching the guarantee of the 
power method.

\vspace{2mm}
However, we use the polynomial approximations $p_{s,d}$ to $x^s$ from
Section~\ref{sec:xk-approx} to show that the Lanczos method can do
better. We use the polynomial $p(\lambda) =
p_{s,d}\left(\nfrac{\lambda}{\lambda_1}\right)$ for $s = \ceil{
  \nfrac{1}{2\delta} \cdot \log \nfrac{4n}{\delta}}$ as above, and $d
= \ceil{ \sqrt{2s \cdot \log \nfrac{2n}{\delta}}}.$ In this case, we
know that for all $\lambda$ such that $|\lambda| \le \lambda_1,$ we
have $|p(\lambda) - \left(\nfrac{\lambda}{\lambda_1}\right)^s| \le
\nfrac{\delta}{n}.$ Hence, $p(\lambda_1) \ge 1 - \nfrac{\delta}{n},$
and 
$$\sup_{\lambda \in [0,(1-\delta)\lambda_1]} p(\lambda)^2 \le
\sup_{\lambda \in [0,(1-\delta)\lambda_1]}
\left(\nfrac{\lambda}{\lambda_1}\right)^{2s} + \nfrac{\delta}{n} =
O\left(\nfrac{\delta}{n}\right).$$ Since the degree of this polynomial
is $d,$ we obtain that $d = O\left( \nfrac{1}{\sqrt{\delta}} \cdot
  \log \nfrac{n}{\delta} \right)$ iterations of Lanczos method suffice
to find a vector with Rayleigh quotient at least
$(1-O(\delta))\lambda_1.$

\vspace{2mm}
It remains to analyze the time taken by this algorithm to compute $\lambda_1(T).$ Let $t_A$ denote
the number of operations required to compute $Au,$ given a vector $u.$
We first describe how to quickly compute an orthonormal basis for $\calK$. The procedure is essentially the same as  the one used in the Conjugate
Gradient method. We iteratively compute $Av_{i},$ orthogonalize it with respect to 
$v_{i},\ldots,v_{0},$ and scale it to norm 1 in order to obtain $v_{i+1}.$ As
in the Conjugate Gradient method, we have $Av_j \in
\Span\{v_0,\ldots,v_{j+1}\}$ for all $j < k$ and, hence, using the
symmetry of $A,$ we obtain, $v_j^\top (Av_i) = v_i^\top (Av_j) = 0$ for $j+1 < i.$
Thus, we need to orthogonalize $Av_i$ only with respect to  $v_{i}$ and $v_{i-1}.$
This also implies that $T$ is tridiagonal. Hence, we can construct $V$
and $T$ using $O((t_A+n)k)$ operations. (Note the subtle difference; here, we ensure the basis vectors are orthonormal, instead of
$A$-orthogonal as in the case of Conjugate Gradient.) The only
remaining step is to compute the largest eigenvalue of $T,$ which can
be found via an eigendecomposition of $T.$ Since $T$ is
tridiagonal, this step can be upper bounded by $O(k^2)$
(see~\cite{PanC99}). Thus, we have the following theorem:

\begin{theorem}
  Given a symmetric PSD matrix $A,$ and a parameter $\delta > 0,$ the
  Lanczos method after $k$ iterations, for $k = O\left(
    \nfrac{1}{\sqrt{\delta}} \cdot \log \nfrac{n}{\delta} \right),$
  outputs a value $\mu \in [(1-\delta)\lambda_1(A), \lambda_1(A)]$
  with constant probability over the choice of random $v$. The total
  number of operations required is $O((t_A +n)k + k^2),$ where $t_A$
  is the number of operations required to multiply $A$ with a given
  vector.
\end{theorem}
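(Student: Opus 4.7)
My plan is to turn the discussion of the preceding pages into a formal argument in three stages: (i) reduce the analysis of $\lambda_1(T)$ to a polynomial approximation problem over the spectrum of $A$, (ii) invoke the polynomial $p_{s,d}$ from Theorem~\ref{thm:xk-approx} to obtain the quadratic speedup, and (iii) account for the cost of the three-term Lanczos recurrence together with the tridiagonal eigenvalue computation.

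For the reduction, I would fix an orthonormal basis $\{v_0, \dots, v_k\}$ of the Krylov subspace $\calK$ produced by the Lanczos iteration, assembled into $V \in \rea^{n \times (k+1)}$, so that $VV^\top$ projects onto $\calK$ and $T \defeq V^\top A V$ represents $A|_\calK$. Writing $v = \sum_i \alpha_i u_i$ in the eigenbasis of $A$, the identity $A^i v = V T^i V^\top v$ for $i \le k$ shows that
\[
\lambda_1(T) = \max_{p \in \Sigma_k} \frac{\sum_i \lambda_i \, p(\lambda_i)^2 \alpha_i^2}{\sum_i p(\lambda_i)^2 \alpha_i^2}.
\]
Combined with the variational bound $\lambda_1(T) \le \lambda_1(A)$, for any $p \in \Sigma_k$,
\[
\frac{\lambda_1(A) - \lambda_1(T)}{\lambda_1(A)} \;\le\; \delta \;+\; \frac{1}{\alpha_1^2}\, \sup_{\lambda \in [0,(1-\delta)\lambda_1]} \frac{p(\lambda)^2}{p(\lambda_1)^2}.
\]
Choosing $v$ uniformly on the sphere gives $\alpha_1^2 \ge 1/(4n)$ with probability at least $1/2$, where $\alpha_1^2$ is interpreted as the total squared projection of $v$ onto the top eigenspace if $\lambda_1$ has multiplicity.

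For the polynomial, I would take $p(\lambda) \defeq p_{s,d}(\lambda/\lambda_1)$ with $s = \ceil{(1/(2\delta))\log(4n/\delta)}$ and $d = \ceil{\sqrt{2s\log(2n/\delta)}}$. The choice of $s$ forces $(\lambda/\lambda_1)^{2s} \le \delta/(4n)$ for $\lambda \le (1-\delta)\lambda_1$, while Theorem~\ref{thm:xk-approx} gives $|p_{s,d}(x) - x^s| \le \delta/n$ on $[-1,1]$. Together these yield $p(\lambda_1) \ge 1 - \delta/n$ and $p(\lambda)^2 = O(\delta/n)$ on $[0,(1-\delta)\lambda_1]$, so the relative error is $O(\delta)$ after plugging in $\alpha_1^2 \ge 1/(4n)$. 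The required Krylov dimension is $k = d = O((1/\sqrt{\delta})\log(n/\delta))$. For the running time, the standard three-term Lanczos recurrence constructs $v_{i+1}$ from $Av_i$ by orthonormalizing only against $v_i$ and $v_{i-1}$: since $Av_j \in \Span\{v_0, \dots, v_{j+1}\}$, the symmetry of $A$ forces $v_i^\top A v_j = 0$ whenever $|i-j|\ge 2$, which simultaneously justifies the two-term recurrence and makes $T$ tridiagonal. This takes $O((t_A + n)k)$ operations and yields $T$ explicitly, after which the largest eigenvalue of a $(k+1)\times(k+1)$ tridiagonal matrix can be computed in $O(k^2)$ operations via the QR algorithm~\cite{PanC99}, summing to the advertised bound.

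The main subtle point is really the anti-concentration claim $\alpha_1^2 \ge 1/(4n)$ under the uniform distribution on the sphere; all other steps are essentially algebraic manipulation or a direct appeal to Theorem~\ref{thm:xk-approx}. Without the $\sqrt{s}$-degree polynomial approximation from Section~\ref{sec:xk-approx}, the same argument using $p(\lambda)=(\lambda/\lambda_1)^s$ would only recover the $1/\delta$ rate of the power method, so the quadratic speedup over power iteration is exactly the payoff of the polynomial approximation result for $x^s$.
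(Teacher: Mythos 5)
Your argument matches the paper's own proof essentially step for step: the same reduction of $\lambda_1(T)$ to a Rayleigh-quotient optimization over polynomials via $p(A)v = Vp(T)V^\top v$, the same choice $p(\lambda) = p_{s,d}(\lambda/\lambda_1)$ with $s = \ceil{\tfrac{1}{2\delta}\log\tfrac{4n}{\delta}}$ and $d = \ceil{\sqrt{2s\log\tfrac{2n}{\delta}}}$ to get the quadratic speedup from Theorem~\ref{thm:xk-approx}, the same $\alpha_1^2 \ge \nfrac{1}{4n}$ anti-concentration step, and the same three-term-recurrence/tridiagonal-$T$ accounting for the $O((t_A+n)k + k^2)$ runtime. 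No gaps and no novel route; the only additions are your (reasonable) clarifying remark about interpreting $\alpha_1^2$ as the projection onto the full top eigenspace under multiplicity, and naming the QR algorithm for the tridiagonal eigensolve.
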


\noindent
The eigenvector $w$ of  $T$ which achieves $\lambda_1(T)$ can be used to give a candidate for the the largest eigenvector of $A,$ {\em i.e.}, the vector $Vw.$

\paragraph{Beyond the largest eigenvalue.}
The Lanczos method can also be used to approximate several large
eigenvalues of $A.$ The algorithm is essentially the same, except that
we choose a Krylov subspace of higher order $k,$ and output the top
$r$ eigenvalues of the matrix $T.$ Using techniques similar to above,
we can achieve a similar speed-up in the case where the top
eigenvalues of $A$ are well-separated. Such results were obtained
in~\cite{Kaniel66, Saad80} (see~\cite[Chapter 6]{Saad11} and the notes
therein).
\begin{theorem}
  Given a symmetric PSD matrix $A$ with eigenvalues $\lambda_1 \ge
  \cdots \ge \lambda_n$ such that $|\lambda_i - \lambda_{i+1}| \ge
  \delta \lambda_1$ for $i=1,\ldots,r,$ and a parameter $\delta,$
  after $k = O\left( \nfrac{r}{\sqrt{\delta}} \cdot \log
    \nfrac{nr}{\delta} \right)$ iterations of the Lanczos method, the
  matrix $T$ will have $r$ largest eigenvalues $\mu_1 \ge \cdots \ge
  \mu_r$ such that $\mu_i \in
  [(1-\nfrac{\delta}{3})\lambda_i,\lambda_i],$ with constant
  probability over the choice of random $v.$ The total number of
  operations required is $O((t_A +n)k + k^2),$ where $t_A$ is the
  number of operations required to multiply $A$ with a given
  vector.
\end{theorem}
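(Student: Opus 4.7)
The plan is to extend the single-eigenvalue argument via the Courant--Fischer characterization of $\lambda_i(T).$ Since every $w \in \calK$ satisfies $w^\top A w = (V^\top w)^\top T (V^\top w)$ (exactly as in the $r=1$ proof), one has
\[
\lambda_i(T) = \max_{\substack{S \subseteq \calK \\ \dim S = i}} \min_{0 \neq w \in S} \frac{w^\top A w}{w^\top w}.
\]
The upper bound $\mu_i \le \lambda_i(A)$ is immediate from Cauchy interlacing, since $T$ is a compression of $A$ by an isometry. The task is the lower bound $\mu_i \ge (1-\delta/3)\lambda_i,$ for which I would exhibit an $i$-dimensional subspace $S_i \subseteq \calK$ whose non-zero elements all have Rayleigh quotient at least $(1-\delta/3)\lambda_i.$

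I would take $S_i = \mathrm{span}\{q_1(A)v, \ldots, q_i(A)v\}$, where for each $l \in \{1,\ldots,i\}$ the polynomial
\[
q_l(x) \defeq \prod_{\substack{l'=1 \\ l' \neq l}}^{i} \frac{x - \lambda_{l'}}{\lambda_l - \lambda_{l'}} \cdot T_{k-i+1}\!\left(1 + 2\cdot\frac{x - \lambda_{i+1}}{\lambda_{i+1} - \lambda_n}\right)
\]
lies in $\Sigma_k.$ The Lagrange-type prefactor vanishes at $\lambda_{l'}$ for $l' \le i,$ $l' \neq l,$ and equals $1$ at $\lambda_l,$ while the rescaled Chebyshev factor has magnitude at most $1$ on $[\lambda_n, \lambda_{i+1}]$ and grows rapidly outside. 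Using Proposition~\ref{prop:chebyshev:closed-form} together with the bound $\mathrm{arccosh}(1+2\delta) = \Omega(\sqrt\delta),$ the spectral gap hypothesis $\lambda_i - \lambda_{i+1} \ge \delta\lambda_1$ gives $q_l(\lambda_l) \ge \tfrac{1}{2}\exp(\Omega((k-i)\sqrt\delta)),$ while the stronger consequence $|\lambda_l - \lambda_{l'}| \ge \delta\lambda_1 |l-l'|$ for $l,l' \le r+1$ yields, for $j > i,$
\[
|q_l(\lambda_j)| \le \frac{1}{\delta^{i-1}(l-1)!(i-l)!}.
\]
Consequently, when $k \ge c\,(r/\sqrt\delta)\log(nr/\delta)$ for a suitable absolute constant $c,$ the ratio $|q_l(\lambda_j)|/q_l(\lambda_l)$ is at most $\sqrt{\delta/(3nr)}$ uniformly in $l \le i \le r$ and $j > i.$

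To conclude, I would expand an arbitrary $w = \sum_l c_l q_l(A)v$ using $v = \sum_j \alpha_j u_j$: the vanishing pattern gives $\sum_l c_l q_l(\lambda_j) = c_j q_j(\lambda_j)$ for $j \le i,$ so both $w^\top A w$ and $w^\top w$ split into a \emph{good} sum over $j \le i$ (weighted by $\lambda_j \ge \lambda_i$) and a \emph{bad} tail over $j > i.$ Cauchy--Schwarz on the bad tail combined with the ratio above and the union-bound guarantee $\alpha_l^2 = \Omega(1/(nr^2))$ for all $l \le r$ (which holds with constant probability when $v$ is uniformly random on the unit sphere) forces the bad contribution in the denominator to be at most $\delta/3$ times the good sum, delivering $w^\top A w / w^\top w \ge (1-\delta/3)\lambda_i.$ Linear independence of the $q_l(A)v$'s is immediate from the vanishing pattern once $\alpha_l \neq 0.$ The main obstacle is the book-keeping around the factorials $(l-1)!(i-l)!$: the Lagrange normalization actually \emph{amplifies} components along $u_j$ for $j > i,$ and only these factorial savings prevent an extra factor of $r$ in the degree---a naive treatment would yield the weaker bound $O((r^2/\sqrt\delta)\log(n/\delta)).$ The running time matches the single-eigenvalue case: $O((t_A+n)k)$ operations for the three-term recurrence building $T,$ plus $O(k^2)$ for extracting the top $r$ eigenvalues of the symmetric tridiagonal $T.$
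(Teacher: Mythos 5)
The paper \emph{states} this theorem without proof, deferring to the references \cite{Kaniel66, Saad80} and \cite[Chapter 6]{Saad11}, so there is no internal proof to compare against. Your argument is the standard Kaniel--Saad route: Courant--Fischer inside the Krylov subspace, with a test subspace spanned by Lagrange-prefactor-times-shifted-Chebyshev polynomials of $A$ applied to $v$. The structure is sound: Cauchy interlacing gives $\mu_i \le \lambda_i$, the vanishing pattern of the Lagrange factors makes the ``good'' projections $c_j q_j(\lambda_j)\alpha_j$ and yields linear independence whenever the $\alpha_l$ are nonzero, and the Chebyshev factor forces the ratios $|q_l(\lambda_j)|/q_l(\lambda_l)$ ($j>i$) down exponentially in $(k-i)\sqrt\delta$; the strengthened gap $|\lambda_l - \lambda_{l'}| \ge \delta\lambda_1|l-l'|$ for $l,l' \le r+1$ is a correct consequence of the hypothesis and does give the factorial savings you invoke.

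Two small points of caution. First, your stated target ratio $\sqrt{\delta/(3nr)}$ is not quite small enough once you track the constants: after the Cauchy--Schwarz step the bad term contributes roughly $i \cdot (\max_{l,j>i}\text{ratio})^2 / \min_{l\le i}\alpha_l^2$ to $B/G$, and with $\min_l\alpha_l^2 = \Omega(1/(nr^2))$ you need the squared ratio to be $O\bigl(\delta/(i\,nr^2)\bigr)$, i.e.\ a target closer to $\sqrt{\delta/(nr^3)}$. This only perturbs the logarithm and leaves $k = O\bigl((r/\sqrt\delta)\log(nr/\delta)\bigr)$ intact, but as written the numbers do not close. Second, the rescaled Chebyshev argument $1 + 2(x-\lambda_{i+1})/(\lambda_{i+1}-\lambda_n)$ is undefined if $\lambda_{i+1} = \lambda_n$; that degenerate case should be handled separately (it is trivial, since then the Lagrange prefactor alone controls the tail, but it deserves a sentence). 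With those fixes the proof is correct and matches what the cited literature does.
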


\noindent
Letting $u_1,\ldots,u_r$ denote the eigenvectors of $T$ corresponding
to eigenvalues $\mu_1,\ldots,\mu_r,$ we obtain $Vu_1, \ldots, Vu_r$ as
the candidate eigenvectors, as before.

\paragraph{Computing $f(A)v.$}
The Lanczos method can in fact be used more generally to obtain a fast
approximation to $f(A)v$ for any function $f,$ and any vector $v.$ We
saw that if we work with the Krylov subspace $\{v,Av, \ldots,A^kv\},$
for any polynomial $p \in \Sigma_k,$ we have $Vp(T)V^\top v = p(A)v.$
Hence, a natural approximation for $f(A)v$ is $Vf(T)V^\top v.$
Moreover, using the method above, the number of operations required is
$O((t_A + n)k)$ plus those required to compute $f(\cdot)$ on the
$(k+1)\times (k+1)$ tridiagonal matrix $T$, which can usually be upper
bounded by $O(k^2)$ via diagonalization (see~\cite{PanC99}). Letting
$\calI \defeq [\lambda_{n}(A),\lambda_{1}(A)],$ the error in the
approximation can be upper bounded by $2\eps_{f,\calI}(k)$, the
uniform approximation error achieved by the best degree $k$ polynomial
approximating $f$ on $\calI$ (see~\cite[Chapter 19]{Vishnoi12}). This
method derives its power from the fact that, in order to compute a good
approximation, just the {\em existence} of a good polynomial that approximates
$f$ on $\calI$ is sufficient, and we do not need to \emph{know} the polynomial.

\subsection{Computing the Matrix Exponential}
\label{sec:applications:matrix-exp}

In this section we consider the problem of computing $\exp(-A)v$ for
an $n \times n$ PSD matrix $A$ and a vector $v.$ Recall that $\exp(-A)
= \sum_{k=0}^\infty \frac{(-1)^k A^k}{k!}.$ Of particular interest is
the special case $\exp(-s(I-W)) = e^{-s} \sum_{k \ge 0} \frac{s^k}{k!}
W^k$ where $W$ is the random walk matrix associated to a graph
$G=(V,E)$ defined in Section \ref{sec:applications:sim-rw}. In terms
of the normalized Laplacian $\mathcal{L}=I-W,$ this is the same as
$\exp(-s \mathcal{L}).$ This matrix corresponds to the transition
matrix of a {\em continuous-time} random walk of length $s$ on $G,$
also called the {\em heat-kernel} walk on $G,$ see
\cite{Chung97,LevinPW06}.  Note that this walk can be interpreted as
the distribution of a discrete-time random walk after a
Poisson-distributed number of steps with mean $s$ since $\exp(-s
\mathcal{L})= e^{-s} \sum_{k \ge 0} \frac{s^k W^k}{k!}.$ These random
walks are of importance in probability and algorithms, and the ability
to simulate them in time near-linear in the number of edges in the
graph results in near-linear time algorithms for problems such as the
{\em balanced} version of the Sparsest Cut problem introduced in
Section \ref{sec:sparsestcut}. More generally, fast computation of
$\exp(-A)v$ plays a crucial role, via the Matrix Multiplicative Weights
Update method, in obtaining fast combinatorial algorithms to solve
semi-definite programs, see \cite{AroraK07, Orecchia11,AHK12}.

\vspace{2mm}
The most natural way to compute $\exp(-A)v$ is  to approximate the matrix exponential using the Taylor series approximation for the exponential, or to use the improved
polynomial approximations constructed in
Section~\ref{sec:exp-poly-ub}. Indeed, Theorem \ref{thm:exp-poly-ub} can be used to compute  a $\delta$ approximation to $\exp(-A)v$  in time $O\left(\left(t_A+n\right) \sqrt{\|A\| \log
\nfrac{1}{\delta}}\right);$ similar to  Theorem \ref{thm:simrw}. 
However, Theorem~\ref{lem:exp-poly-lb} implies that 
that no polynomial approximation can get rid of the dependence on $\sqrt{\|A\|}$ in the running time above.

\vspace{2mm} What about rational approximations to $e^{-x}$ proved in
Section \ref{sec:rational-apx}? Indeed, we can use the rational
approximation from Theorem \ref{thm:exp-rational} to obtain $\|
\exp(-A) - \left( S_d(A)\right)^{-1}\| \leq 2^{-\Omega(d)}$, where $ \left( S_d(A)\right)^{-1}v$ is the approximation to $\exp(-A)v.$ For most applications an error of $\delta = \nfrac{1}{{\rm poly}(n)}$ suffices, so it is sufficient to choose $d=O(\log n).$   How do we compute $ \left(
  S_d(A)\right)^{-1}v$? Clearly, inverting $S_d(A)$ is not a good idea
since that would be at least as inefficient as matrix inversion.  The
next natural idea is to factor $S_d(x) = \alpha_0 \prod_{i=1}^d
(x-\beta_i)$ and then calculate $(S_d(A))^{-1}v = \alpha_0
\prod_{i=1}^d (A-\beta_i I)^{-1} v.$ Since $d$ is  small, namely $O(\log n),$  the cost of computing $(S_d(A))^{-1}v$ reduces to the cost of computing $(A-\beta_i
I)^{-1}u_i$. Thus, it is suffices to speed a computation of this form. 
The first problem is that $\beta_i$s could be
complex, as is indeed the case for the polynomial
$S_d$ as discussed in Section \ref{sec:rational-apx}. However, since  $S_d$ has real coefficients,  its complex roots appear as
conjugates. Hence, we can combine the factors corresponding to the pairs and reduce the task to computing $(A^2 -
(\beta_i + \bar{\beta_i})A + |\beta_i|^2 I )^{-1}u.$ 
The matrix $(A^2 -
(\beta_i + \bar{\beta_i})A + |\beta_i|^2 I )$ is easily seen to be PSD and we can try to apply the Conjugate gradient method to compute $(A^2 -
(\beta_i + \bar{\beta_i})A + |\beta_i|^2 I )u.$ 
However, the condition number of this matrix can be comparable to that of $A$, which gives no significant advantage over $\sqrt{\|A\|}$. To see this, observe that
$|\beta_i| \le d$ (see~\cite{Zemyan05}), and consider a matrix $A$
with $\lambda_1(A) \gg d,$ and $\lambda_n(A)=1.$ For such a matrix,
the condition number of $(A^2 - (\beta_i + \bar{\beta_i})A +
|\beta_i|^2 I )$ is $\Omega(\nfrac{\lambda^2_1(A)}{d^2}),$ which is approximately the square of  the condition number of $A$ for small $d.$ 

\vspace{2mm}
Similarly, the rational approximations to $e^{-x}$ in
Section~\ref{sec:exp-rational2} suggest the vector
$p_d((I+\nfrac{A}{d})^{-1})v$ as an approximation to $\exp(-A)v,$
where $p_d$ is the polynomial given by Theorem~\ref{thm:SSV}. Once again,
for any PSD matrix $A,$ though the matrix $(I+\nfrac{A}{d})$ is PSD, the condition number of $(I+\nfrac{A}{d})$ could be comparable to that of $A$. 
 Hence for arbitrary PSD matrices, the
rational approximations to $e^{-x}$ seem insufficient for obtaining
improved algorithms for approximating the matrix exponential.
Indeed, $O\left(\left(t_A+n\right) \sqrt{\|A\| \log
\nfrac{1}{\delta}}\right)$   is the best  result  known for computing the matrix exponential-vector product for a general PSD matrix $A$, see \cite{OrecchiaSV12}.

\vspace{2mm}
The above approach of using rational approximations shows how to reduce the computation of $\exp(-A)v$ to a small number of  linear systems involving the matrix $A.$  For an important special class of matrices, we can exploit the fact that
there exist algorithms that are much faster than Conjugate Gradient and allow us to approximate
$(I+\nfrac{A}{d})^{-1}u,$ for a given $u.$ In particular, for
a {\em symmetric and diagonally dominant} (SDD) matrix\footnote{A matrix $A$ is said to be Symmetric and
  Diagonally Dominant (SDD) if it is symmetric, and for all $i,$
  $A_{ii} \ge \sum_{j \neq i} |A_{ij}|.$} $A$, there are powerful
near-linear-time SDD system solvers~\cite{SpielmanT04, KoutisMP11,
  KelnerOSZ13} whose guarantees are given in the following theorem. 

\begin{theorem}
\label{thm:sdd-solver}
  Given an $n \times n$ SDD matrix $A$ with $m$ non-zero entries, a
  vector $v$, and $\delta_1 > 0$, there is an algorithm that, in
  $\tilde{O}\left(m \log \nfrac{1}{\delta_1}\right)$ time, 
  computes a vector $u$ such that $\|u-A^{-1}v\|_A \le \delta_1
  \|A^{-1}v\|_A\ .$
Moreover, $u=Zv$ where $Z$ depends on $A$ and
  $\delta_1,$ and is such that $(1-\delta_1)A^{-1} \preceq Z \preceq
  (1+\delta_1)A^{-1}.$
\end{theorem}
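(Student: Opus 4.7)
The plan is to prove this deep result in several stages, following the Spielman--Teng framework and its successors. First, I would use the standard reduction from SDD systems to Laplacian systems: any $n \times n$ SDD matrix $A$ can be written, via a $2 \times 2$ block construction, in terms of a graph Laplacian $L$ of size at most $2n$ with $O(m)$ edges so that solving $Ax=v$ reduces (with only constant overhead) to solving $Ly = b$; an approximate solution to the latter yields an approximate solution to the former with the desired $A$-norm guarantee. This reduces the problem to constructing a near-linear-time Laplacian solver.

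Next, I would apply a preconditioned iterative method to the Laplacian system. The idea is to find a preconditioner $B$ such that (i) $B^{-1}$ can be applied quickly to a vector, and (ii) the condition number $\kappa(B^{-1/2} L B^{-1/2})$ is small. Given such a $B$, preconditioned Chebyshev iteration converges in $O(\sqrt{\kappa} \log \nfrac{1}{\delta_1})$ iterations to the desired accuracy; this is precisely the $\sqrt{s}$-type savings coming from the polynomial approximation results in Section~\ref{sec:xk-approx}, lifted to the preconditioned operator. Each iteration costs $O(m)$ for multiplication by $L$ plus the cost of applying $B^{-1}$. The key moreover clause follows because each Chebyshev step is a fixed polynomial in $L$ and $B^{-1}$ applied to $v$, so the overall map $v \mapsto u$ is linear; the two-sided spectral bound $(1-\delta_1)L^{-1} \preceq Z \preceq (1+\delta_1)L^{-1}$ follows from symmetrizing the Chebyshev polynomial and the $A$-norm error bound.

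The main obstacle, and the real heart of the theorem, is the construction of the preconditioner $B$. I would proceed recursively: build $B$ as an \emph{ultra-sparsifier} of $L$, meaning a graph with only $n-1 + O(n/k)$ edges whose Laplacian is spectrally within a factor of $k$ of $L$. Such an object is built by starting from a low-stretch spanning tree (Abraham--Neiman, Elkin et al.) and adding a carefully sampled set of off-tree edges using effective-resistance-based importance sampling (Spielman--Srivastava). To solve systems in the ultra-sparsifier, I would contract degree-one and degree-two vertices to obtain a Laplacian on only $O(n/k)$ vertices, and recursively precondition this smaller system by the same scheme. Balancing the recursion (typically setting $k = \Theta(\log^2 n)$) yields a multilevel hierarchy whose total work per solve is $\tilde O(m \log \nfrac{1}{\delta_1})$.

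The hardest step is showing that the spectral sparsifier and ultra-sparsifier constructions can themselves be carried out in near-linear time, since the natural construction requires effective resistances, which appear to presuppose a Laplacian solver. This chicken-and-egg issue is resolved either by the Spielman--Teng bootstrap (using crude preconditioners built from local clustering and partial ultra-sparsification to estimate resistances well enough to refine them) or, more cleanly, by the combinatorial approach of Kelner--Orecchia--Sidford--Zhu based on cycle updates in a low-stretch tree, which bypasses spectral sparsification altogether. Either way, once the preconditioning hierarchy is in place, the Chebyshev analysis plus the linear-algebraic sandwich inequality yield both the running time and the moreover clause; I would cite the relevant construction as a black box rather than reprove it here.
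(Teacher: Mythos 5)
The paper does not prove Theorem~\ref{thm:sdd-solver}: it is stated as a black box, imported directly from the literature (\cite{SpielmanT04, KoutisMP11, KelnerOSZ13}), and the survey builds on it without offering any argument of its own. So there is no ``paper proof'' to compare your sketch against. What you have produced is a high-level roadmap through the actual proofs in those references, and as such it is broadly faithful: the SDD-to-Laplacian reduction, preconditioned Chebyshev iteration, ultra-sparsifiers built from low-stretch spanning trees plus sampled off-tree edges, partial elimination of low-degree vertices, and the recursive multilevel hierarchy with $k = \Theta(\mathrm{polylog}\, n)$ are indeed the ingredients of the Spielman--Teng/Koutis--Miller--Peng line of work. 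You are also right that Chebyshev iteration (not conjugate gradient) is what makes $Z$ a fixed, $v$-independent linear operator, which is what the ``moreover'' clause needs.

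A few places where the sketch conflates or glosses over things worth keeping straight. First, the ultra-sparsifier sampling in Koutis--Miller--Peng is driven by \emph{stretch} with respect to the low-stretch tree, which is a purely combinatorial quantity computable without a solver; describing it as ``effective-resistance-based importance sampling (Spielman--Srivastava)'' imports the chicken-and-egg problem you then worry about, whereas in the KMP formulation there is no such circularity. Second, KOSZ does not merely ``bypass spectral sparsification'' within the same recursive framework: it replaces the whole multilevel hierarchy with a single-level randomized cycle-update scheme, so it is a genuinely different algorithmic structure, not a drop-in resolution of the sparsification bootstrap. Third, the ``moreover'' clause deserves more than one sentence: to get a symmetric operator $Z$ with $(1-\delta_1)A^{-1} \preceq Z \preceq (1+\delta_1)A^{-1}$, one needs every level of the recursion to use a non-adaptive, symmetry-preserving iteration (and a spectral, not merely energy-norm, error bound at each level); this is an explicit point of care in those papers and is the part of the theorem your application in Section~\ref{sec:applications:matrix-exp} actually leans on. None of this makes your sketch wrong, but given that the survey deliberately treats the solver as opaque, it would be cleaner to state up front that you are summarizing the external proof rather than supplying one.
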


\noindent
At the end of this section, we show how to compute the
coefficients of $p_d$ from Theorem \ref{thm:SSV}  efficiently, and show that each coefficient is
bounded by $d^{O(d)}.$ Assuming this we show that we can compute $p_d((I+\nfrac{A}{d})^{-1})v$ as an approximation to $\exp(-A)v$ in near-linear time using Theorem \ref{thm:sdd-solver}. 
Note that if $A$ is SDD, so is $(I+\nfrac{A}{d}).$  
We repeatedly use the SDD
solver of Theorem \ref{thm:sdd-solver} to approximate $(I+\nfrac{A}{d})^{-i}v,$ for all
$i=1,\ldots,d,$ and let $Z$ denote the linear operator such that the
SDD solver returns the vector $Z^iu$ as the approximation. Let $B
\defeq (I+\nfrac{A}{d})^{-1}.$ From the guarantee on the SDD solver from the theorem above,
we know that $-\delta_1 B \preceq Z - B \preceq \delta_1 B.$ Applying the 
triangle inequality to the identity $Z^i - B^i = \sum_{j=0}^{i-1}
Z^{i-1-j} (Z-B)B^{j},$ and using $\norm{B} \le 1,$ we obtain, $\norm{Z^i
  - B^i} \le \delta_1\cdot i(1+\delta_1)^i.$ Thus,
$\norm{p_d(Z)-p_d(B)} \le d^{O(d)} \cdot \delta_1(1+\delta_1)^d.$
Hence, we can choose $\delta_1 = \delta\cdot d^{-\Theta(d)}$ for the
SDD solver in order for the final approximation to have error at most
$\delta.$ Since $d = \Theta(\log \nfrac{1}{\delta})$ suffices, this
results in an overall running time of $\tilde{O}(m).$
We summarize the result in the following theorem.
\begin{theorem}
\label{thm:exp}
There is an algorithm that, given an SDD matrix $A$ with $m$ non-zero
entries, a vector $v,$ and $\delta \in (0,1] $, computes a vector $u$
such that $\norm{\exp(-A)v-u} \le \delta \|v\|$ in time
$\tilde{O}((m+n)\log(2+\norm{A}) {\rm.
polylog}\;\nfrac{1}{\delta})$.
\end{theorem}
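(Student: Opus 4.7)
The proof follows the algorithmic strategy sketched in the paragraphs immediately preceding the theorem, combining the rational approximation of Theorem~\ref{thm:SSV} with the SDD solver of Theorem~\ref{thm:sdd-solver}. First I choose $d = \Theta(\log \nfrac{1}{\delta})$ so that, by Theorem~\ref{thm:SSV}, the scalar approximation error $\sup_{x \ge 0} |e^{-x} - p_d(x)(1+\nfrac{x}{d})^{-d}|$ is $O(d\cdot 2^{-d}) \le \nfrac{\delta}{2}$. Since $A$ is symmetric PSD, lifting this bound through the spectral decomposition of $A$ gives $\norm{\exp(-A) - p_d(B)} \le \nfrac{\delta}{2}$, where $B \defeq (I+\nfrac{A}{d})^{-1}$. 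Hence it suffices to compute a vector approximating $p_d(B)v$ to within $\nfrac{\delta}{2}\norm{v}$.

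To do so, I evaluate $p_d$ at $B$ by iteratively applying the SDD solver of Theorem~\ref{thm:sdd-solver} to the matrix $I+\nfrac{A}{d}$, which is SDD since $A$ is, with eigenvalues in $[1, 1+\nfrac{\norm{A}}{d}]$ and hence condition number $\kappa = O(1+\norm{A})$. Let $Z$ denote the implicit linear operator that a single solver call at accuracy $\delta_1$ applies; by Theorem~\ref{thm:sdd-solver} one has $(1-\delta_1)B \preceq Z \preceq (1+\delta_1)B$ and in particular $\norm{Z-B}\le \delta_1$. Using the telescoping identity $Z^i - B^i = \sum_{j=0}^{i-1} Z^{i-1-j}(Z-B)B^j$ together with $\norm{B}\le 1$, I obtain $\norm{Z^i-B^i}\le i\delta_1(1+\delta_1)^{i-1}$. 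Invoking the (separately established) bound $d^{O(d)}$ on the magnitudes of the coefficients of $p_d$, summation against the coefficients yields $\norm{p_d(Z)-p_d(B)} \le d^{O(d)}\delta_1(1+\delta_1)^d$, which is at most $\nfrac{\delta}{2}$ for the choice $\delta_1 = \delta \cdot d^{-\Theta(d)}$. The triangle inequality then gives $\norm{\exp(-A)v-p_d(Z)v} \le \delta\norm{v}$, so $u \defeq p_d(Z)v$ meets the required guarantee.

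For the running time, I make $d = O(\log\nfrac{1}{\delta})$ solver calls; each runs in time $\tilde{O}(m\log(\kappa/\delta_1)) = \tilde{O}(m\log(2+\norm{A})\cdot{\rm polylog}\,\nfrac{1}{\delta})$ since $\log\nfrac{1}{\delta_1} = \Theta(d\log d) = {\rm polylog}\,\nfrac{1}{\delta}$, and each of the $d$ intermediate vectors incurs an additional $O(n)$ cost for the linear combination against the precomputed coefficients of $p_d$; summing over the $d$ stages gives the advertised bound. The main remaining obstacle is the \emph{coefficient obstacle}: showing that the coefficients of the polynomial $p_d$ produced by Theorem~\ref{thm:SSV} are bounded by $d^{O(d)}$ and can themselves be approximated to precision $d^{-\Theta(d)}$ in time depending only on $d$ (and thus absorbed into the polylog factor). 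This is the one nontrivial piece and is handled by revisiting the proof of Theorem~\ref{thm:SSV}, using its explicit expansion in the Legendre/Laguerre bases together with standard estimates on those polynomials; once the coefficient bound is in hand, the rest of the argument above is an elementary error-propagation computation.
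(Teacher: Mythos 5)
Your proof is correct and follows essentially the same route as the paper: approximate $\exp(-A)$ by $p_d\bigl((I+\nfrac{A}{d})^{-1}\bigr)$ via Theorem~\ref{thm:SSV}, replace each exact inverse by a call to the SDD solver of Theorem~\ref{thm:sdd-solver}, propagate the solver error through $p_d$ using the telescoping identity $Z^i-B^i=\sum_{j} Z^{i-1-j}(Z-B)B^j$ together with $\norm{B}\le 1$, and absorb the $d^{O(d)}$ coefficient magnitude by taking $\delta_1=\delta\cdot d^{-\Theta(d)}$ (which costs only a $\mathrm{polylog}\,\nfrac{1}{\delta}$ factor since $d=\Theta(\log\nfrac{1}{\delta})$). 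You also correctly identify the coefficient-bounding/computing step as the one nontrivial remaining ingredient, which the paper handles in the subsequent paragraph by explicit Legendre/Laguerre expansions and the Exponential Integral.
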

\noindent
The above theorem was first proved in~\cite{OrecchiaSV12}. However, instead
of  computing the coefficients of the polynomial $p_d$
explicitly, the authors in~\cite{OrecchiaSV12} appealed to  the Lanczos method
from numerical linear algebra that allows them to achieve the error
guarantee of the approximating polynomial {\em without} explicit knowledge
of the polynomial. 

\vspace{2mm}
Coming back to graphs,  an important corollary of this theorem is that $\exp(-s\mathcal{L})v,$ the distribution after an $s$-length continuous time random walk on the graph with normalized Laplacian $\mathcal{L}$ starting with a distribution $v,$ can be approximately computed in $\tilde{O}(m \log s)$ time. Recall that for simple random walks, from Section \ref{sec:rw}, we do not know how to do better than $O(m \sqrt{s})$ time.

\paragraph{Computing the coefficients of $p_d$.}
We now address the issue of explicitly computing the coefficients of
$p_d.$  It suffices to compute them to a precision of $2^{-{\rm poly}(d)}$ and we present the salient steps. Recall that  in the proof of
Theorem~\ref{thm:exp-rational}, the polynomial
$r_{d-1}(t)$ that minimizes $\int_{-1}^1 \left(
  f^{(1)}_d(t)-r_{d-1}(t)\right)^2 \dif{t}$ (see
Equation \eqref{eq:exp-rational:l2-min}) is given by $r_{d-1}(t) =
\sum_{k=0}^{d-1} \sqrt{2k+1}\cdot \gamma_k \cdot L_k(t).$ The
Legendre polynomials can be written as $L_k(t) = 2^{-k}
\sum_{i=0}^{\floor{\nfrac{k}{2}}} x^{k-2i}\cdot \binom{k}{i}
\binom{2k-2i}{k}$, see \cite[Chapter 22]{AbramowitzS64}. Thus, assuming we
know $\{\gamma_k\}_{k=0}^{d-1},$ we can compute the coefficients of
$r_{d-1}$ in $\poly(d)$ operations, and the sizes of the coefficients
of $r_{d-1}$ can only be $2^{O(d)}$ larger. Since $q_d(y) = \int_{y}^1
r_{d-1}(t) \dif{t},$ given the coefficients of $r_{d-1},$ we can simply integrate in order to find the coefficients of $q_d$. The
approximating polynomial $p_d(x)$ is given by $p_d(x) \defeq
q_d(1-2x).$ Hence, given the coefficients of $q_d,$ those of $p_d$ can
be calculated in poly$(d)$ operations, and again can only be at
most $2^{O(d)}$ larger. Hence, it suffices to show how to compute
$\{\gamma_k\}_{k=0}^{d-1}.$ 

With the substitution $z=d(1+v)$ in
Equation~\eqref{eq:exp-rational:gamma}, 
we have
$\gamma_k = - d
\int_{0}^\infty \left( \tfrac{z}{z+1}\right)^k e^{-dz}
G_k(d(1+z))\dif{z}$. 
The Laguerre polynomials (of order $1$)
$G_k$ are explicitly given as $G_k(t) = \sum_{i=0}^k (-1)^i
\binom{k+1}{k-i} \frac{t^i}{i!}$~\cite[Chapter 22]{AbramowitzS64}.
After using this expansion for $G_k,$ it suffices to compute
the integrals $\int_{0}^\infty \tfrac{z^i}{(z+1)^j} e^{-dz} \dif{z}$
for $0 \le j \le i \le d.$ If we know the values of these integrals,
we can compute $\gamma_k$s in $\poly(d)$ operations, though
the coefficients may now increase by a factor of $d^{O(d)}.$ For any
$0 \le j \le i \le d,$ substituting $w = z+1,$ we obtain
$\int_{0}^\infty \tfrac{z^i}{(z+1)^j} e^{-dz} \dif{z} = e^{-d}
\int_{1}^\infty \tfrac{(w-1)^i}{w^j} e^{-dw} \dif{w}.$ Since we can
expand $(w-1)^i$ using the Binomial theorem, it suffices to compute
integrals of the form
 $\int_{1}^\infty w^{-j} e^{-dw} \dif{w}$ for 
$-d
\le j \le d,$ where again we lost at most $2^d$ in the magnitude of
the numbers. 
For $j \leq 0,$ this is a simple integration. For $j \geq 1,$ the integral can be expressed using  the Exponential Integral \cite{ExponentialIntegral}. Hence, it has the following rapidly convergent power series for $d >1$, which can be used both to compute $E_j(d)$s
and bound them easily:
$$  E_j(d) =\int_{1}^\infty w^{-j} e^{-dw} \dif{w}=  \frac{e^{-d}}{d} \sum_{k=0}^\infty \frac{(-1)^k (j+k-1)!}{(j-1)!d^k},$$
see \cite{ExponentialIntegral}.
Combining everything, the coefficients of $p_d$ 
can be approximated up to $d^{-\Theta(d)}$ error in time $\poly(d)$ using
$\poly(d)$ sized registers.

\subsection{Matrix Inversion via Exponentiation}
\label{sec:applications:matrix-inverse-reduction}
Our final application of approximation theory is a rather surprising result which reduces a computation of the form $A^{-1}v$ for a PSD $A$, to the computation of a small number of terms of the form $\exp(-sA)v.$ One way to interpret this result is that the linear system solvers deployed in the previous section are {\em necessary}. The other way is to see this as a new approach to speed up computations beyond the Conjugate Gradient method to compute $A^{-1}v$ for PSD matrices, a major open problem in numerical linear algebra with implications far beyond. 

\vspace{2mm}
This result is an immediate corollary of 
Theorem~\ref{thm:inverse-approx}, proved in Section~\ref{sec:exp-reduction}, which shows than we can approximate $x^{-1}$ with a sum of a small number of exponentials, where the approximation is valid for all $x \in [\delta,1].$ 
\begin{theorem}[Corollary to Theorem~\ref{thm:inverse-approx}, \cite{SachdevaV13}]
\label{thm:matrix-inverse-approx}
Given $\eps,\delta \in (0,1],$  there exist $\poly( \log
\nfrac{1}{\eps\delta})$ numbers $0< w_j,t_j =
O(\poly(\nfrac{1}{\eps\delta})),$ such that for all symmetric matrices
$A$ satisfying $\eps I \preceq A \preceq I,$ we have $(1-\delta)A^{-1}
\preceq \sum_j w_j e^{-t_j A} \preceq (1+ \delta)A^{-1}.$
\end{theorem}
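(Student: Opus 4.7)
The plan is to lift the scalar approximation from Theorem~\ref{thm:inverse-approx} to the matrix setting via the spectral decomposition of $A$, exploiting the fact that $A^{-1}$ and each $e^{-t_j A}$ are all functions of the same matrix $A$ and thus simultaneously diagonalizable in the eigenbasis of $A$.

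Concretely, I would first apply Theorem~\ref{thm:inverse-approx} with the given $\eps$ and $\delta$ to obtain the weights $w_j > 0$ and exponents $t_j > 0$, along with the guarantee that for every scalar $x \in [\eps, 1]$,
\[
(1-\delta)\,x^{-1} \;\le\; \sum_j w_j\, e^{-t_j x} \;\le\; (1+\delta)\,x^{-1}.
\]
Next, since $A$ is symmetric and satisfies $\eps I \preceq A \preceq I$, I would write its spectral decomposition $A = U\Lambda U^\top$ with $\Lambda = \mathrm{diag}(\lambda_1,\ldots,\lambda_n)$ and each eigenvalue $\lambda_i \in [\eps,1]$. Recalling the definition of matrix functions from the preliminaries in Section~\ref{sec:applications}, both $A^{-1} = U\Lambda^{-1}U^\top$ and $e^{-t_j A} = U\, e^{-t_j \Lambda}\, U^\top$ share the eigenbasis $U$, so
\[
\sum_j w_j\, e^{-t_j A} \;=\; U\Bigl(\sum_j w_j\, e^{-t_j \Lambda}\Bigr) U^\top,
\]
and the inner matrix is diagonal with $i$-th entry equal to $\sum_j w_j\, e^{-t_j \lambda_i}$.

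Applying the scalar inequality entrywise to each $\lambda_i \in [\eps,1]$ yields the diagonal PSD inequalities
\[
(1-\delta)\,\Lambda^{-1} \;\preceq\; \sum_j w_j\, e^{-t_j \Lambda} \;\preceq\; (1+\delta)\,\Lambda^{-1}.
\]
Finally, I would conjugate the entire chain on the left by $U$ and on the right by $U^\top$; this is a congruence by an orthogonal matrix and therefore preserves the PSD order. This yields $(1-\delta)A^{-1} \preceq \sum_j w_j\, e^{-t_j A} \preceq (1+\delta)A^{-1}$, and the bounds on the number of terms and the magnitudes of $t_j$ are inherited directly from Theorem~\ref{thm:inverse-approx}.

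There is no real obstacle here; the only point that warrants care is the legitimacy of promoting a scalar inequality valid on $[\eps,1]$ to a PSD matrix inequality, which is precisely what the simultaneous diagonalizability of $A^{-1}$ and $\{e^{-t_j A}\}_j$ grants. The nontrivial content of the result lives entirely in Theorem~\ref{thm:inverse-approx}; this corollary is purely a spectral-calculus translation.
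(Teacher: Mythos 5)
Your proof is correct and matches the paper's (implicit) approach: the paper states this as an immediate corollary of the scalar result and does not spell out the argument, which is precisely the spectral-calculus lift you give—diagonalize $A$, apply the scalar bound eigenvalue-by-eigenvalue, and conjugate by the orthogonal eigenbasis to transfer the diagonal PSD inequalities back to $A$.
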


\noindent
Since the above reduction only requires that the matrix $A$ be
positive-definite, it immediately suggests an approach to
approximating $A^{-1}v$: Approximate $e^{-t_j A}v$ for each
$j$ and return the vector $\sum_j w_j e^{-t_j A}v$ as an approximation
for $A^{-1}v.$ Since the weights $w_j$ are
$O(\poly(\nfrac{1}{\delta\eps})),$ we lose only a polynomial factor in
the approximation error.

\paragraph{Acknowledgments.} We would like to thank Elisa Celis and
Oded Regev for useful discussions. We would also like to thank
L\'{a}szl\'{o} Babai for helpful comments.

{
\bibliographystyle{plain}
\bibliography{papers}}
\end{document}